\newtheorem{thm}{Theorem}
\newtheorem{coro}[thm]{Corollary}
\newtheorem{lemma}[thm]{Lemma}
\newtheorem{propo}[thm]{Proposition}
\theoremstyle{definition}
\newcommand{\R}{\mathbb{R}}
\newcommand{\Z}{\mathbb{Z}}
\def\G{\mathcal{G}}
\def\tr{\rm{tr}}
\def\str{\rm{str}}
\title{Isospectral deformations of the Dirac operator}
\author{Oliver Knill}
\date{June 22, 2013}
\address{
        Department of Mathematics \\
        Harvard University \\
        Cambridge, MA, 02138
        }
\subjclass{Primary:  37K15, 81R12, 57M15, 81Q60 }
\keywords{Graph theory, Riemannian geometry, Integrable systems, geometric evolution equations}
\begin{document}
\maketitle
\begin{abstract}
We give more details about an integrable system \cite{diracannouncement} in which the Dirac 
operator $D=d+d^*$ on a graph $G$ or manifold $M$ is deformed using a Hamiltonian system 
$D'=[B,h(D)]$ with $B=d-d^* + \beta i b$.
The deformed operator $D(t) = d(t) + b(t) + d(t)^*$ defines a new exterior derivative $d(t)$ and 
a new Dirac operator $C(t) = d(t) + d(t)^*$ and Laplacian $M(t) = C(t)^2$ and so a new distance 
on $G$ or a new metric on $M$. For $\beta=0$, the operator $D(t)$ stays real for all $t$. 
While $L=M(t)+V(t)$ does not change, the new Laplacian $M(t)=C(t)^2$ and the emerging potential 
$V(t)=b^2$ do evolve. The operators $M,V$ are always real and commute. 
The cohomology defined by the deformed exterior derivative $d(t)$ is the same as for $d=d(0)$ 
as we can explicitly deform cocycles and coboundaries. 
The new Dirac operator $C(t)$ defines a new metric, so that the isospectral flow is an 
evolution which deforms the geometry as defined by zero forms. If $U'=BU$ is the associated unitary, 
then the McKean-Singer formula $\str(U(t))=\chi(G)$ still holds. 
While super partners $f,D(t)f$ span the same plane at all times, observable super symmetry fades: if 
$f$ is an eigenvector of $L$ which is a fermion - an eigen-$2k+1$-form of $L$ for some integer $k$ - 
then $D(t)f$ is only bosonic at $t=0$ and the 
angle between the fermionic subspace and $D(t) f$ goes to zero exponentially fast. The coordinate system has
changed so that the original superpartner $D(0) f$ is now far away for the new geometry. 
The linear relativistic wave equation $u(t)''= -L u(t)$ and its solution 
$u(t)=\cos(Dt) u(0) + \sin(D(t) t) D^{-1} u'(0) = e^{i D} (u(0) - i D^{-1} u'(0)$
with fixed $D$ is not affected by the symmetry since only $L$ and not $D$ enters in the solution formula.
But the preperation of the initial velocity, the 
nonlinear solution $u(t) = \cos(D(t) t) + \sin(D(t) t) D(0)^{-1} u'(0)$ of
the wave equation with time dependent $D$ or the unitary evolution $U(t)$ defined by the deformation depends on $D$.
The evolution has a geometric effect: 
with $d(t)$ as a new exterior derivative, the property $d(t) \to 0$ for $|t| \to \infty$ implies that 
space expands, with a fast inflationary start. The inflation rate can be tuned by scaling $D$.
Instead of solutions to the wave equation, the nonlinear evolution has more
soliton -  and particle - like solutions which feature interaction with adjacent forms. 
In the limit $t \to \pm \infty$, the operator $D$ becomes block diagonal 
$b_+ = -b_-$ with $b_{\pm}^2=L$, leading to linear solutions $\cos(bt) u(0) + \sin(bt) b^{-1} u'(0)$ of the wave 
equation which leaves each space $\Omega_p$ of $p$-forms invariant. 
For $G=K_2$, explicit formulas illustrate the inflation. We also look at the circle case, where
already the 0-form and 1-form spaces can be joint by solutions of the wave equation.
The nonlinear Dirac wave equation uses the entire geometric space but
asymptotically, we get linear wave equation which preserve each p-form subspace and which is relativistic 
quantum mechanics and classical Riemannian geometry. Geometry alone can lead to interesting
nonlinear wave dynamics, the emergence of new dimensions, complex structures, inflation and to a geometric
toy model featuring Riemannian or graph geometry in which super symmetry is present but 
difficult to measure.
\end{abstract}

\section{Summary} 

A brief version of these notes was a condensation of this document and 
posted as \cite{diracannouncement}. This current document is actually our first
writeup about this integrable system and aims to give more details without trying to be brief. We start with 
an extended abstract. The next section gives the background definitions. Then we prove the
results and set up everything in the simplest possible case, where distances matter: the circle in the 
manifold case and the two point graph $K_2$ in the discrete case. In an appendix, we place the system into 
the context of other known integrable systems. \\

By deforming the Dirac operator $D = d + d^*$ on a finite simple graph $G=(V,E)$ or Riemannian 
manifold $(M,g)$, using integrable Hamiltonian systems $D'=J \nabla H(D) = [B,h'(D)]$ with 
$B=d-d^* + \beta i b, D(t)=U(t)^* D U(t) = d(t) + d(t)^* + b(t)$, we alter the exterior derivative $d$, the distance in $G$ or 
$M$ and obtain a natural symmetry for the quantum mechanical system defined by $D$ on the graph $G$ or manifold $M$. 
This nonlinear evolution $U(t)$ can be used as a nonlinear alternative of the wave equation, if $\beta$ is nonzero. 
Unlike the linear Dirac wave equation, which is not affected by the nonlinear symmetry except for preparing the initial velocity 
of the wave, the nonlinear flow influences the geometry. If we allow the flow to become complex, that is with $\beta \neq 0$, then
the nonlinear flow is asymptotically indistinguishable from the linear wave equation.  Mathematically, one can see this as follows: 
the wave equation is linear in a static space and even so we use the Dirac operator $D$ to describe its
solutions, only the Laplacian $L=D^2$ matters in the linear case; one can see this by making a 
Taylor expansion of the explicit d'Alembert solution formulas which involve $D$. 
But this situation changes with the nonlinear evolution: the Dirac operator $D$ enters now 
the stage, despite the fact that many things remain the same: the spectrum of $D(t)$, the operator $L=D(t)^2$ 
itself as well as the cohomology defined by the exterior derivative $d(t)=D(t)^+$ are not affected by the deformation.
Each of the traces $H(D) = \tr(D^k)$ defines an integral of motion and can be used as Hamiltonians of 
the system $\dot{D} = J \nabla H(D)$. They are the Noether invariants of the group actions. In the manifold case,
$\tr(D^k)$ is defined in a zeta-regularized form 
as $\zeta(-k)$ where $\zeta(s)$ is the Dirac zeta function of the manifold. Unlike the zeta function
of the Laplacian, the Dirac zeta function can be analytically continued to the entire plane, at least for odd-dimensional
manifolds (for even dimensional manifolds, some poles of the Laplace zeta function can remain). 
While classical quantum mechanics, expressed as the classical Schr\"odinger equation $u' = i L$ 
on space does not change, since $L=L(t)$ stays constant, relativistic
quantum mechanics given by the Dirac equation $u'=\pm i D(t) u$ or the new nonlinear equation $u'=B(t) u$ does evolve 
now with a time-dependent Dirac operator. The deformation of the geometry has interesting effects, despite the fact
that the operator $L$ does not move. The Laplacian $L$ becomes the sum of a shrinking Laplacian 
$M(t) = (d(t) + d(t)^*)^2 = C(t)^2$ with respect to the new exterior derivative $d(t)$ and an expanding potential 
$V(t) = b(t)^2$. These two ingredients of the Laplacian are real and commute for all $\beta$ and all $t$. 
Distances in the graph defined by the new Dirac operator $C(t)$ increases in all parameter directions 
when starting with the standard Dirac operator $D(0) = d+d^*$ given by the standard exterior derivative $d$. 
Independent of the direction in which we move, there is an ``arrow of time" given by the expansion: the 
symmetry can actually be used as ``time" and it really does not matter which direction is taken on the symmetry 
group starting from $t=0$, the features of the time evolution are always the same.
We verify that the McKean-Singer formula ${\rm Re}(\str(U(t))) = \chi(G)$ 
holds for the nonlinear deformation in the graph case. But if $f$ is an eigenfunction of $L$ which is a fermion, then
its new super partner $D(t) f$ of a fermion $f$ is no more a boson. 
While mathematical super symmetry \cite{Witten1982,Cycon} and McKean-Singer symmetry is still present, the 
super partner $D(t) f$ is an interaction state close to the fermionic subspace. Assume we look at a fixed vector
$f$ which is a fermion.  The vector $D(t) f$ is already after a short time indistinguishable from being a fermion. 
The evolution has pushed it from the boson space $\Omega_b$ close to the fermion space $\Omega_f$. 
While by the unitary nature of the evolution, $f(t)$ and $g(t)$ stay perpendicular at all times, the operators $f,D(t) f$ do not. 
Finally, for any graph with at least one edge, the expansion of the Connes pseudo metric defined by 
$C(t) = d(t) + d(t)^*$ or the Riemannian metric defined directly by the new Laplacian 
$M(t)=C(t)^2= d(t) d(t)^* + d(t)^* d(t)$ 
features a fast inflationary initial growth, which then slows down exponentially. 
Tuning of the coupling constants of the exterior derivative $d: \Omega_p \to \Omega_{p+1}$ by changing the 
Dirac operator to $D = \sum_p \gamma_p (d_p + d_p^*)$ corresponds to choosing units on each $\Omega_p$ ``brane" 
and can lead to an evolution, where scales of hierarchies can emerge: the physics on the different p-forms can be
drastically different after some time. The scaling factors do not change the symmetries of the spectrum since
$\{D,P\}$ is still zero and because super symmetry still holds for $L=D^2$. 
These features are independent of the Hamiltonian $H$ and of the time direction. 
The only input is geometry, the initial graph or manifold. The nature of the Hamiltonian system assures that 
the eigenvectors $f,g(t)=D(t) f$ of $L$ are only perpendicular when $t=0$, 
if $f$ is kept to be a fermion. Of course $f(t)=U(t) f,g(t) = U(t) g$ would stay perpendicular and $f,D(0) f$
are always perpendicular. But $f,D(t) f$ will be more correlated now and $D(0) f$ is now far way from $f$ when using
the wave equation with $D(t)$. It is the Dirac operator $D$ or equivalently the Dirac wave evolution $e^{i D t}$ 
which measures distances. Now $D(t)$ changes and instead of $e^{i D t}$ or $e^{i D(t) t}$ we propose to look at
$U(t)$. Since $D(t)$ converges to $D(\infty)$, this is for larger $t$ very close to the wave solution 
$e^{i D(\infty) t}$. It plays therefore also the role of the exponential map and parallel transport. \\

Since the Hamiltonian system is a geometric evolution equation which alters geometry, it
has the potential for the study the topology of graphs or manifolds. 
It is still unknown what the deformation means geometrically, when restricted to $k$-forms.
Both in the Riemannian or graph case, we would like to know the evolution of curvature $K(x)$. 
Curvature for graphs satisfying Gauss-Bonnet-Chern can be extended to the manifold traced out by 
the wave equation. Having $K(x)$ represented as the expectation of the index 
$K(x) = {\rm E}[i_f(x)]$ on a probability space $\Omega$ of functions $f$ on the space, the dynamics can 
be used to deform curvature as the push forward of the probability measure on $\Omega$ under the evolution. We can therefore
define curvature also for the deformed geometries and in the graph case extend it to the continuum defined by the 
dynamics. While the linear wave equation does not change curvature, 
the nonlinear flow will. The probabilistic representation of curvature as an average of index is also crucial to 
define curvature on each of the linear spaces defined by $k$-forms as well as their completions by the wave equation
on which $U(t)$ replaces the exponential map and parallel transport. \\

Here is an other attempt to summarize the core mathematics:  \\

The Dirac deformation $\dot{D}=[B,D]$ is completely integrable. On each invariant two-dimensional McKean Singer
plane spanned by $f,D(t)f$ with an eigenfunction $f$ of $L$, we can describe the motion of $D(t) f$. 
Each $D(t)=d(t)+d(t)^*+b(t)$ defines a new exterior derivative $d(t)$ for which the cohomology is the same than for $d_0$. 
We can write $L=D(t)^2= M(t) + V(t)$ as a sum of two commuting operators $M(t)=(d(t)+d(t)^*)^2 \to 0$ and $V(t)=b(t)^2$. 
If $D(t)=U(t)^* D(0) U(t)$, then $\str(U(t))=\str(1)$ is the Euler characteristic for all $t$. For nonzero $\beta$, 
the nonlinear evolution becomes asymptotic to a linear Dirac transport equation $\dot{u} = i D_{\infty} u$ which together
with $\dot{u}=-i D_{\infty} u$ build solutions to the wave equation $\ddot{u} = - Lu$. 
While the nonlinear Dirac deformation does not influence the Schr\"odinger flow nor the classical linear wave evolutions, 
it is invisible for classical linear wave evolution on the graph or Riemannian manifold, it becomes relevant
if the Dirac evolution $e^{i Dt}$ is replaced by the new nonlinear evolution $U(t)$.  \\

\begin{figure}
\scalebox{0.45}{\includegraphics{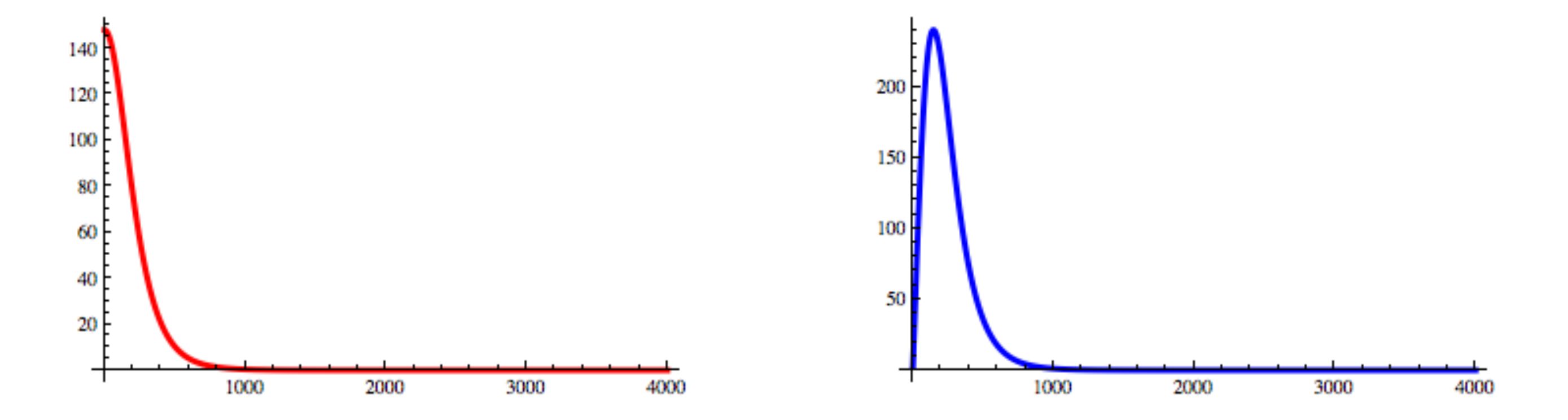}}
\caption{
The figure to the left shows $\tr(M(t))$, integrated numerically with Runge-Kutta from $t=0$ to $t=4$. We prove here that 
this quantity is a Lyapunov function: it monotonically decreases with $t$: the positive operator $M$ satisfies 
${\tr}(M(t)) \to 0$ for $|t| \to \infty$. 
The second graph shows the graph of $-\frac{d}{dt} \tr(M(t))$, which we prove to be always positive, zero
at $t=0$ and for $|t| \to \infty$. The nature of the differential equations make it look to be of logistic 
nature. The fact that energy conservation $L=M(t) + V(t)$ is constant will force it to have an
inflation bump in the graph of $\tr(M(t))$ and the solutions then slows down exponentially. The later means 
that the expansion of space is asymptotically proportional to the diameter of the expanding space. It is 
very early in the evolution that nonlinear effects are important. 
The strong expansion has a focusing influence on the otherwise diffusive nature of solutions to
the wave equation. Soliton-like and particle-like solutions are more likely to form. 
\label{inflation}
}
\end{figure}

\begin{figure}
\scalebox{0.12}{\includegraphics{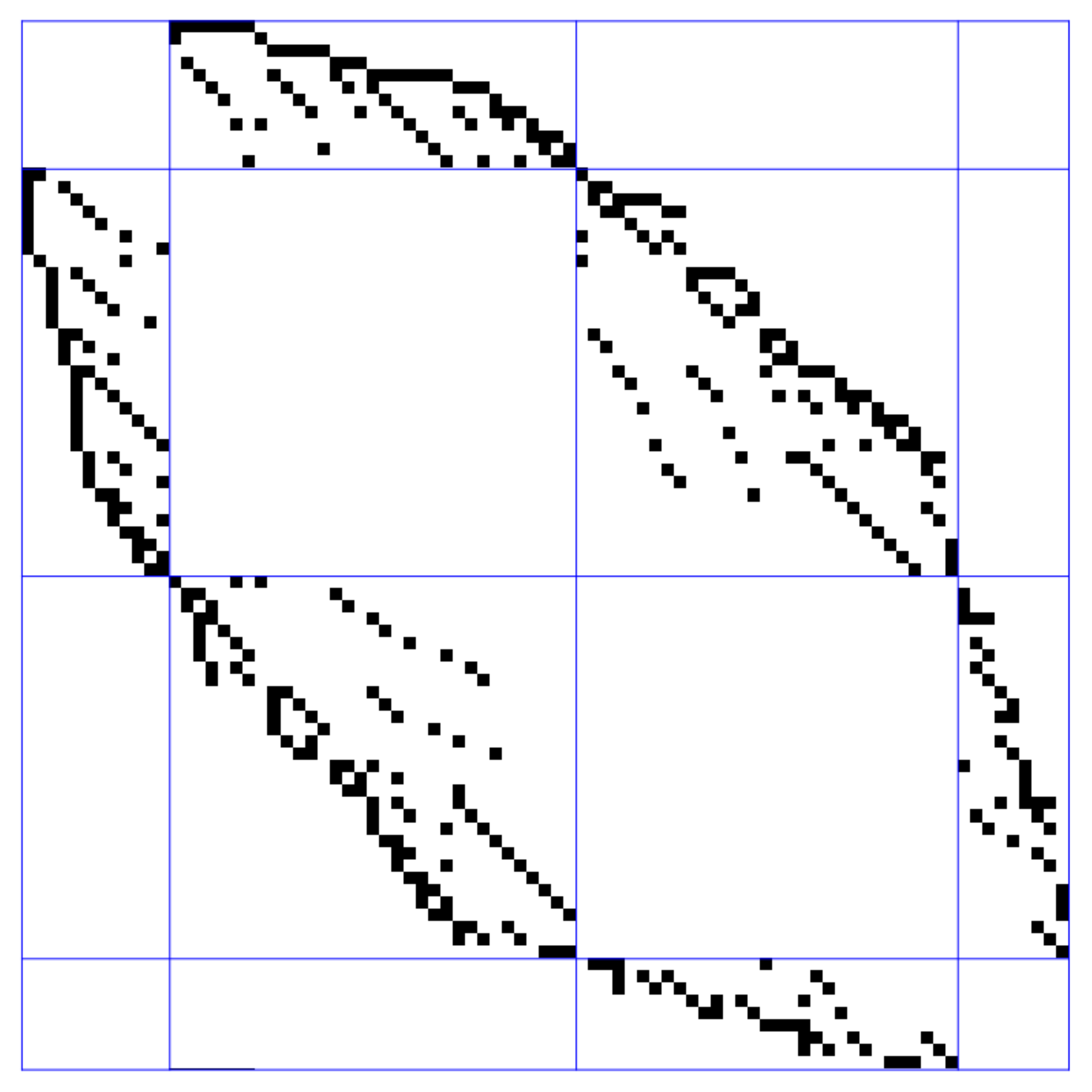}}
\scalebox{0.12}{\includegraphics{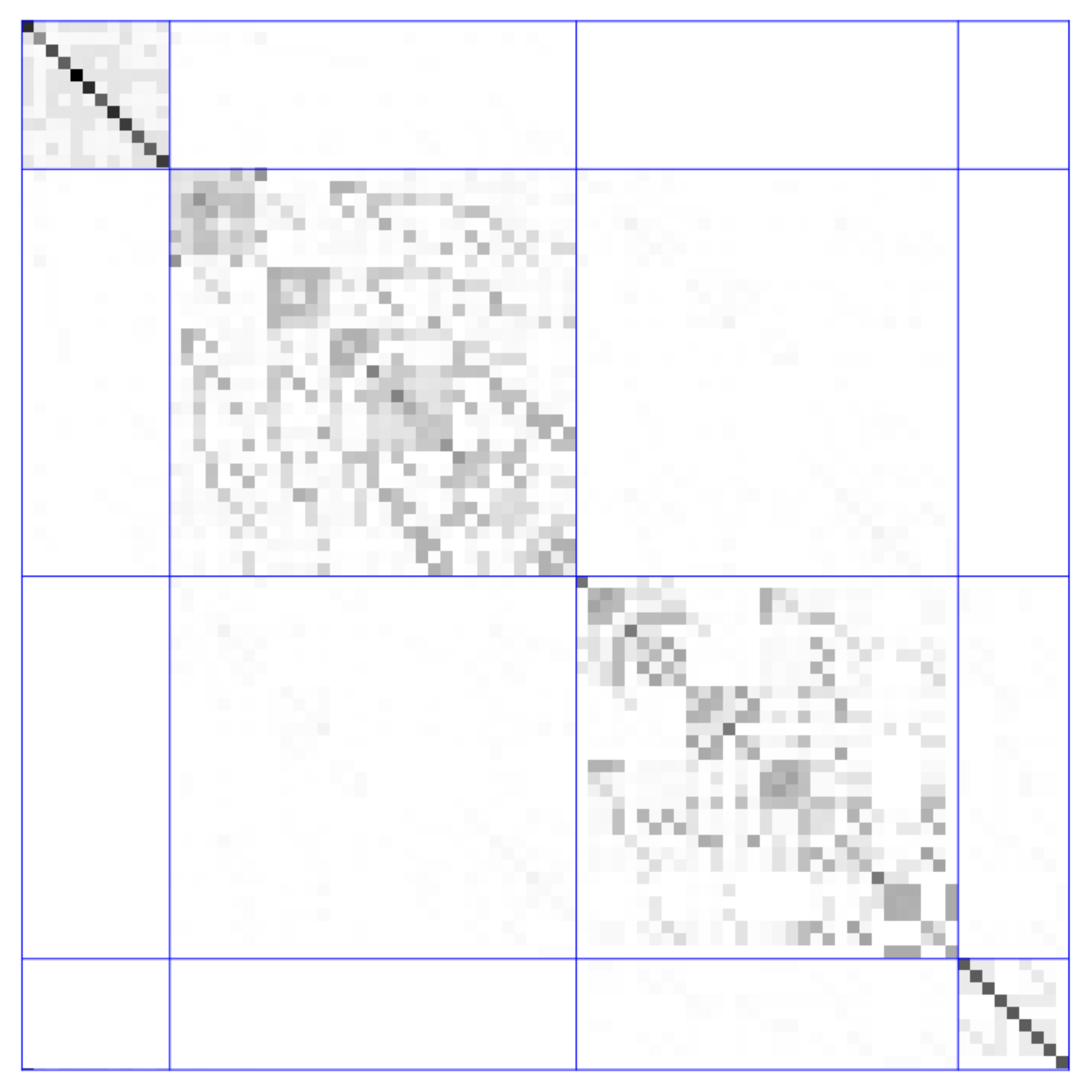}}
\scalebox{0.12}{\includegraphics{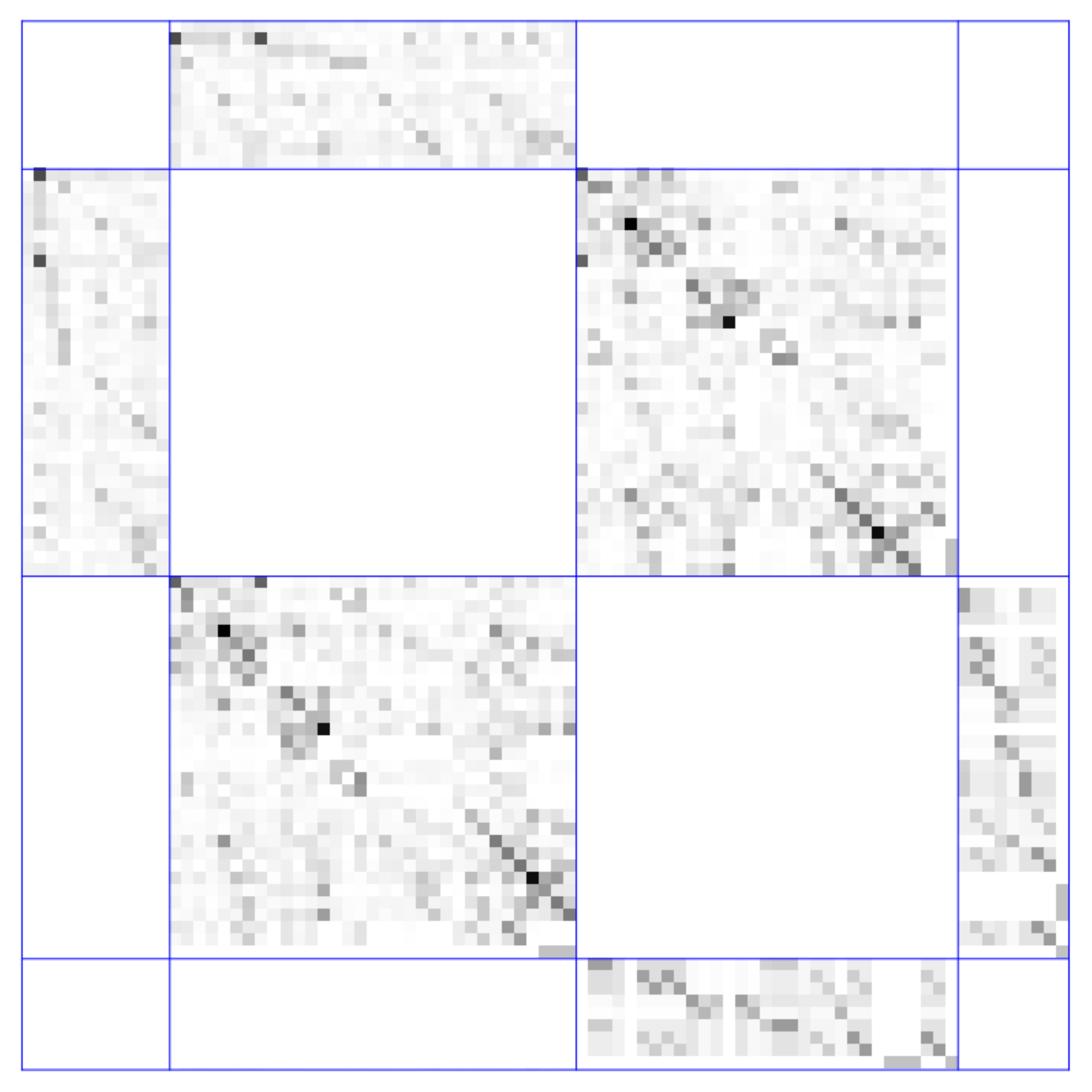}}
\caption{
The left figure shows the operator $D(0)$. In the middle, we see the 
deformed operator $D(1)=C(1)+b(1)=d(1)+d(1)^*+b(1)$. At time $t=1$ already, the evolution
has passed the inflationary expansion and has moved already pretty close to its final shape $b(\infty)$. 
The $C(1)$ part is so small already that it can not be seen in the middle picture. 
To the right, we see $C(1)= d(1) + d(1)^*$ which had been rescaled to become visible. 
While $d(1)$ is small, it is not zero. Still, $d(1)$ can be used as a deformed exterior 
derivative with the same cohomology than $d(0)$.}
\end{figure}

\begin{figure}
\scalebox{0.12}{\includegraphics{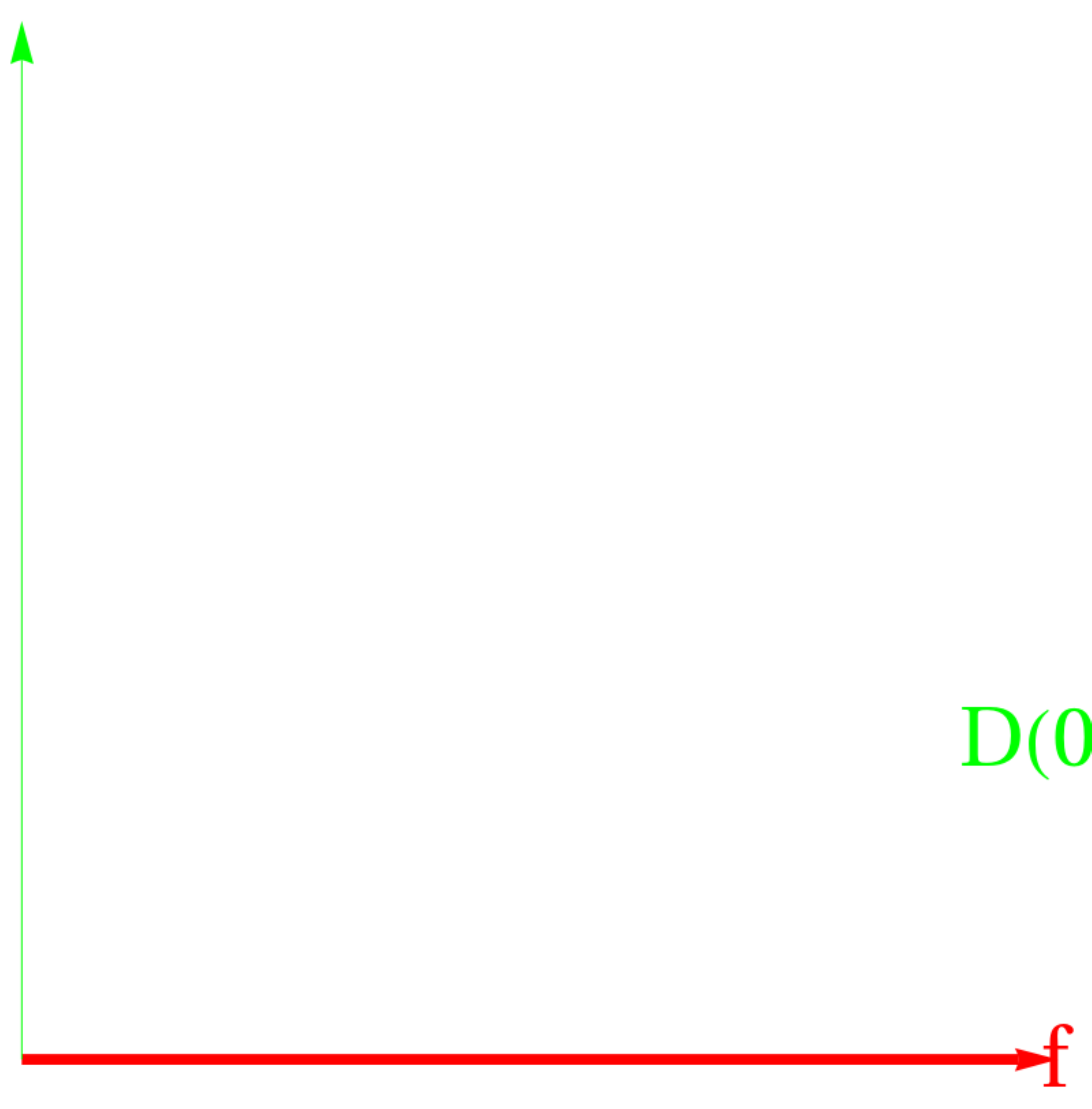}}
\scalebox{0.12}{\includegraphics{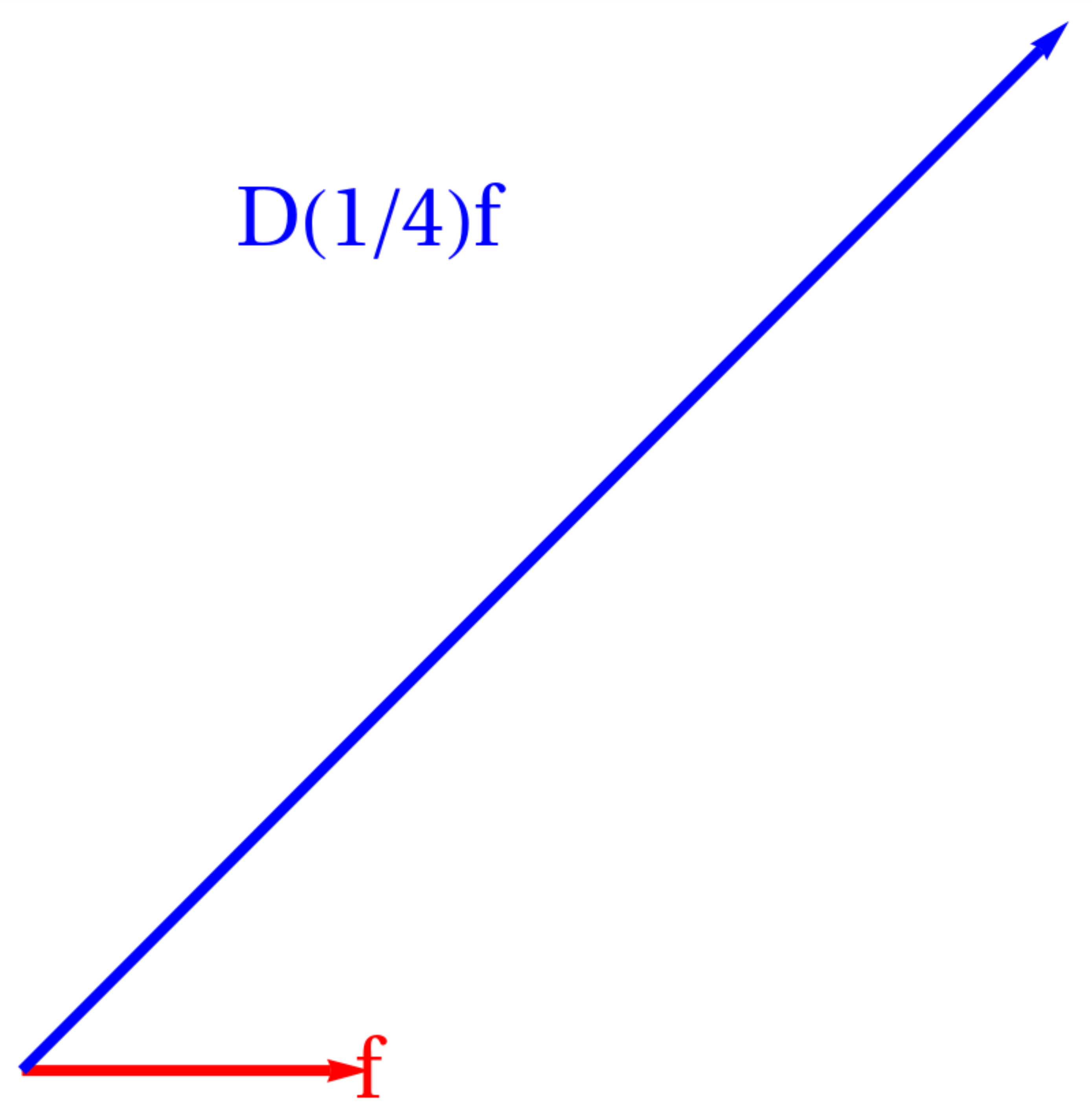}}
\scalebox{0.12}{\includegraphics{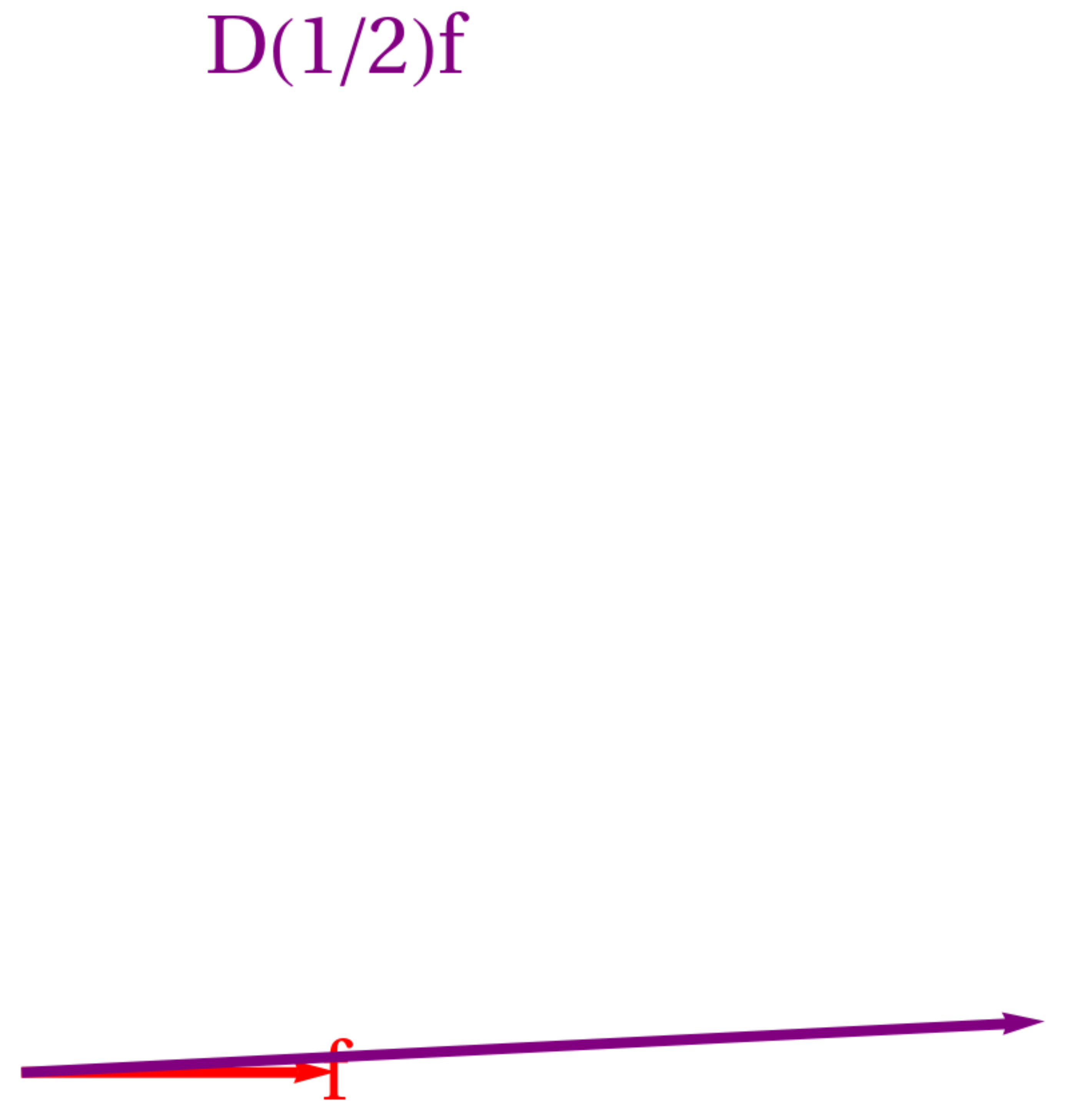}}
\caption{
In this figure, we see two vectors $f,D(t)f$ which span a McKean-Singer plane. 
The eigenvector $f$ of $L$ and does not change. Also $D(t) f$ is an eigenvector. 
At $t=0$, the two vectors $f,D(t) f$ are perpendicular as would be 
$f(t)=U(t) f, g(t) = U(t) g$. For $t=1$ already,
the vector $D(1)f$ is strongly correlated to $f$. If $f(0)$ is a pure fermion
then $g(t)$ is a pure boson only for $t=0$. We can not see the super partner
already after a relatively short time because the angle between the fermion subspace
and $g(t)$ has become exponentially close. Super-symmetry - that is a pairing between bosonic and fermionic
eigenvectors given by the McKean Singer theorem - is still present but seen only at 
the very beginning because the deformation will change the way how waves move, the 
super partner $D(0)f$ will appear far away from $f$. 
Eigenfunctions to the same energy are do not need to be perpendicular, even if the matrix is self-adjoint. }
\end{figure}

\begin{figure}
\scalebox{0.12}{\includegraphics{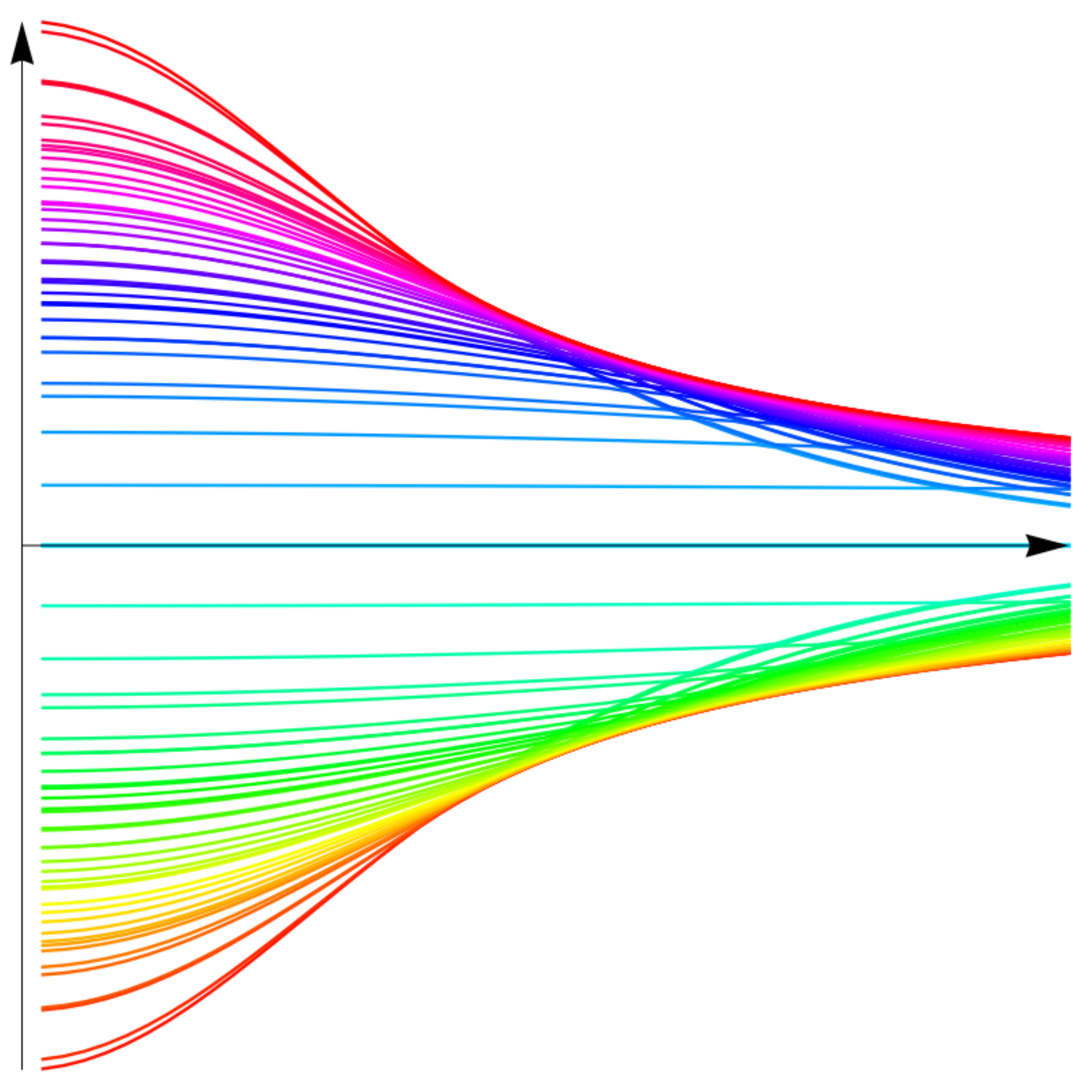}}
\scalebox{0.12}{\includegraphics{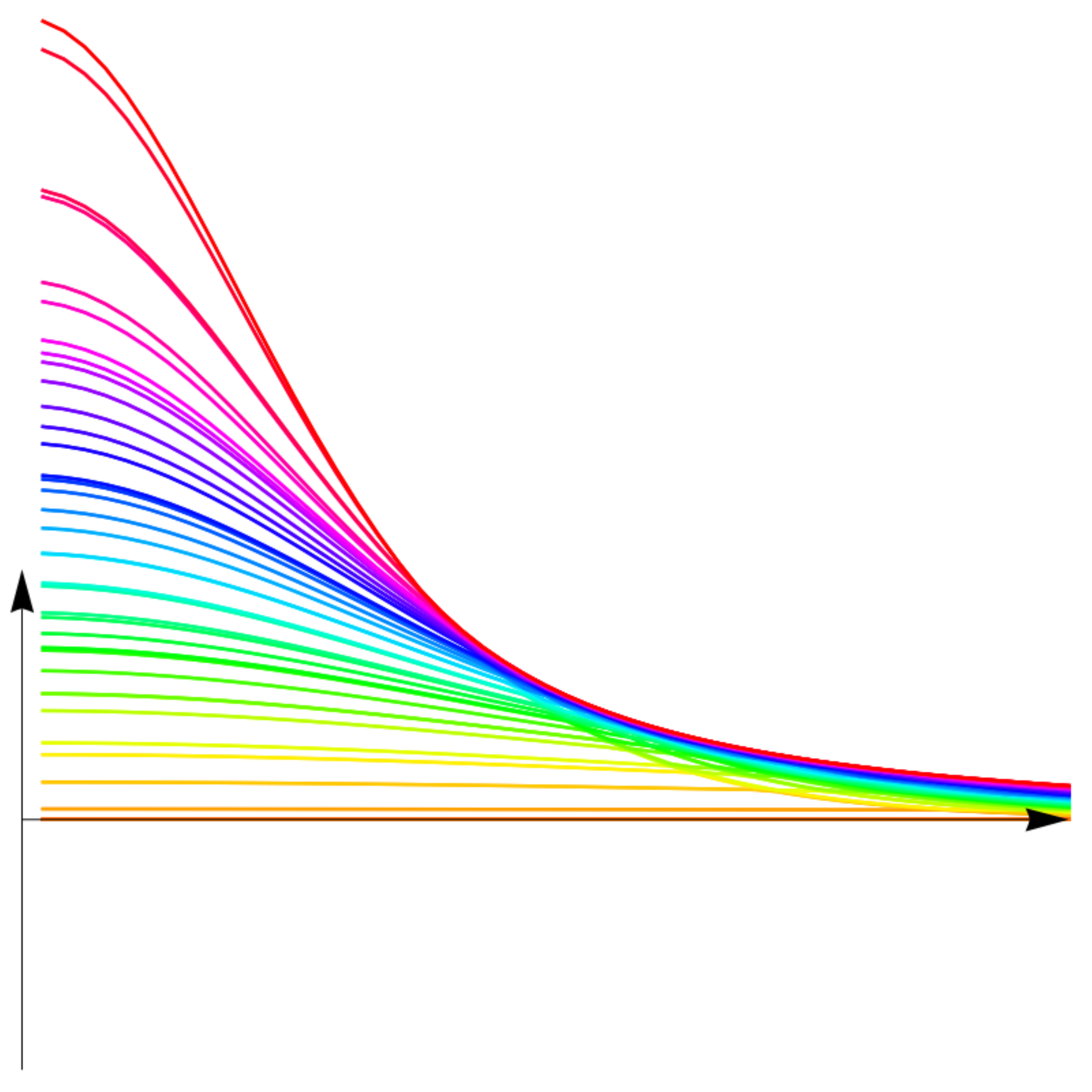}}
\scalebox{0.12}{\includegraphics{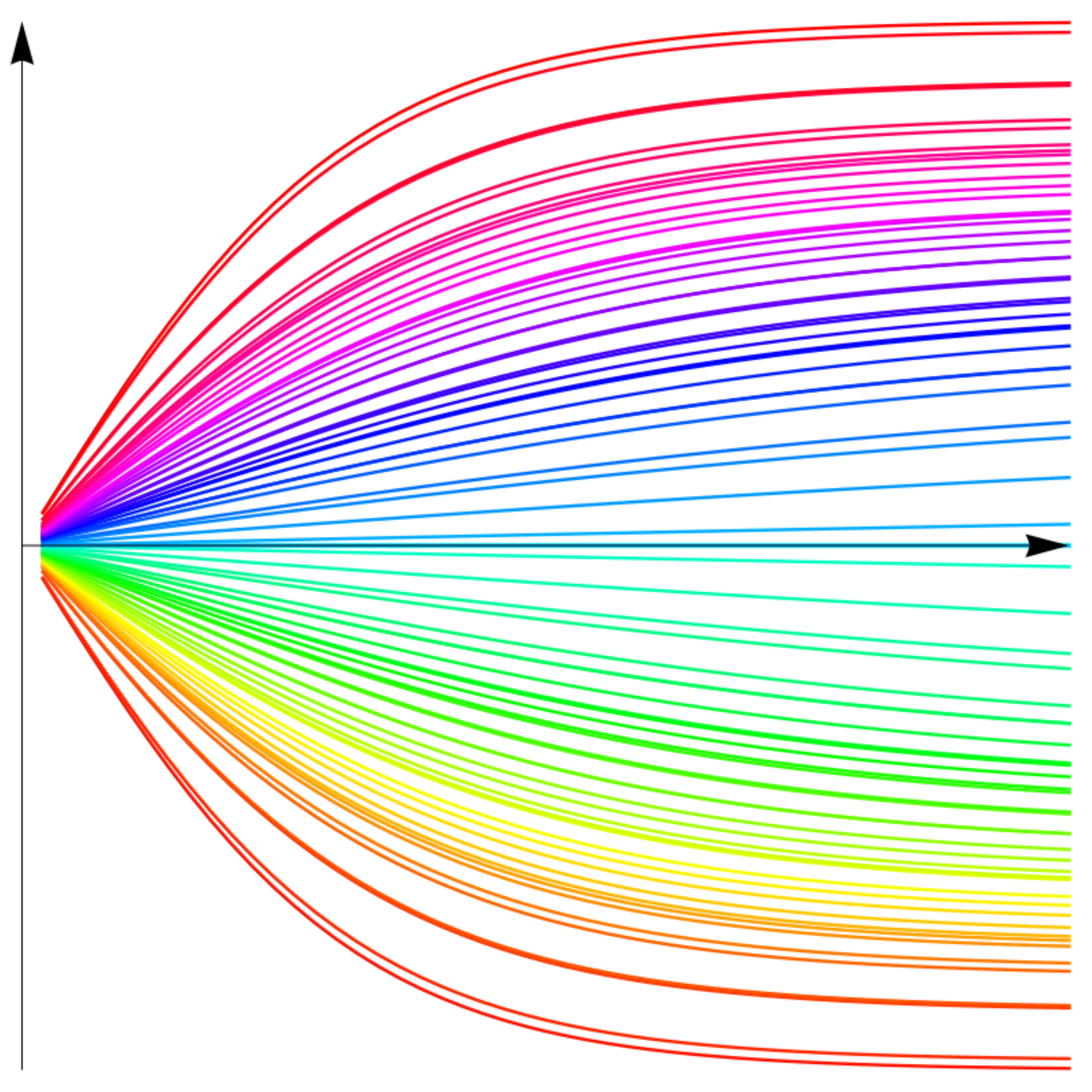}}
\caption{
We see the simultaneous motion of the spectrum of the matrices $C(t),M(t)$ and $b(t)$
for a random graph. As the example displays, it can happen that different eigenvalues of the 
geometric Dirac operator $C(t)$ cross over. A study of the spectral motion of $M(t)$ has
not yet been done. It would allow us to deduce information about the geometry. 
The last figure shows the eigenvalues of the block diagonal ``dark energy" 
operator $b(t)$.}
\end{figure}

Lets add some more informal remarks, which are irrelevant for the rest. 
What does it mean that the super partner of $f$ are difficult to be 'seen' after some time? 
Assume we have a fermion $f \in \Omega_f$, then $g=D(0) f$ is a boson. Now lets evolve time
to get an operator $D(t)$. This means we have lost $D(0)$ which allowed us to go from $f$ to $g=D(0) f$. 
At time $t$, it is $D(t)$ or its geometric part $C(t)$ which governs geometry at time $t$ determined 
by the wave equation. The deformed operators have changed the
geometry and the partner $g$ is now expected further away. We can measure distances
in the computer, it just involves linear algebra to measure the time a wave takes
to get from $f$ to $g$. If $f(t),g(t)$ were evolved together, then of course,
they stay perpendicular, but now both are neither fermion nor boson. The coordinate transformation
which brings $f(t)$ to the fermion space is not the same than the coordinate transformation
which make $g(t)$ a boson. Unlike at $t=0$, we have a coordinate system in which the
super partners are present but not {\bf close enough} with the wave equation
we have at hand at time $t$. In other words, moving in the symmetry group of the geometric space
has changed distances in such a way that initially close super partners are now more remote. \\

When working on a graph, we just need linear algebra and ordinary differential equations \cite{DiracKnill}.
All this has been implemented on the computer already and the results of this paper were
mostly discovered first experimentally. The operator $D$ constructed from a finite graph 
is a finite matrix which the symmetry deforms with time. 
We can now make measurements at time $t$ by ``looking around" in the space by 
``sending light" in different directions, using $D(t)$.  Sending a wave from one vertex to an other
just needs to solve a linear system of equations to find the right initial velocity. The length 
of the initial velocity is inversely proportional to the distance. 
Also $k$-forms evolve by the wave equation triggered by the operator $D(t)$. 
\vspace{1mm}
\parbox{16.8cm}{
\parbox{5.5cm}{
It is now  awfully tempting to associate parts of
$D$ of a 3-dimensional
graph or manifold with physics:
}
\parbox{10.5cm}{
\begin{tiny}
\begin{tabular}{|c|c|c|c|cc}
  boson   &  fermion &  boson   & fermion  &           &        \\ \hline
  graviton& electron &          &          &   gravity & U(0)   \\ \hline
  higgs   & lepton   &  photon  &          &   electro & U(1)   \\ \hline
          & neutrino &  B,B,Z   &  quark   &   weak    & SU(2)  \\ \hline
          &          &  gluon   & hadron   &   strong  & SU(3)  \\ \hline
\end{tabular}
\end{tiny}
}}
\vspace{1mm}
Of course, real physics  is several orders of magnitude more complicated than that and the above
table should be understood as what it is: goofing around with objects and jargon from the standard model. 
But one can often learn from children stories. In any case, toying with ``particle physics" on a graph
can be an educational laboratory for experimentation because we can for example look at how fast different
``particles" move in the time dependent space which is created by the collection of solution paths of the 
wave equation. Since higher dimensional forms travel slower, they 
have more mass. In the discrete, the only input is a graph. Like in ``Schild's ladder" which starts
with {\it In the beginning was a graph, more like diamond than graphite}.
While the present article is mathematical, we believe that the physics could be
interesting, at least as a play-field for experimentation and where the mathematics is not too complicated,
using undergraduate concepts of linear algebra and differential equations only. There is almost no imput. Only 
a graph. No forces, potentials or Lagrangians have to be fed in. All the interactions appear when following 
the integrable deformation $D(t)$. There are many games one can play here. One possibility is to 
avoid looking at the vector space on which $D(t)$ acts but take instead seriously
the columns of the Dirac operator itself. This also simplifies the concept even more. Lets take
for example a column $f$ in the second block of $D(t)$, which is a fermion column. Now, 
$A=Df$ is a $1$-form because $D^2=L$ is block diagonal. Call $F=dA$ the electromagnetic field
and $j=d^* F$ the current. Since time is not part of the graph, (it is implemented by the symmetry $U(t)$),
the evolution $D(t)$ determines a motion of the fields $F(t)$ and currents $j(t)$. Because $D(t)$
asymptotically moves along the wave equation, the fields behave, as physics demands it. Now all the 
columns in the fermion block produce different fields at time $t$, depending on the location on
the graph. The manifold of all solutions of the wave equation which is a closed, compact submanifold of 
$SU(v)$, can be equipped with a Lorentzian structure bringing back the symmetries in the continuum. 
Having reversed relativity and split time away from from space not only makes experimentation easier, 
it also keeps us in the ``playground". Otherwise, we would have to consider non-compact graphs or 
study global variational problems to select out suitable graphs as space-time and also deal with 
discrete time, which is much more difficult, in virtually all situations we have encountered in mathematics. 
The later would not be impossible, because the Euler characteristic of a 
graph is an interesting functional which at least for even-dimensional geometric graphs behaves very much like 
the Hilbert action. It is a sum over all vertices and at every vertex a sum over all possible two-dimensional 
sectional curvatures, where all terms are defined graph theoretical. The reason for this are Gauss-Bonnet-Chern
\cite{cherngaussbonnet}
and index results \cite{poincarehopf,indexexpectation,indexformula} for graphs but the upshot is that Euler 
curvature in the discrete has very much in common 
with scalar curvature at least conceptionally: Ricci and scalar curvature can be written as an average over
all sectional curvatures of planes passing through a line or point, Euler curvature - the integrand in Gauss-Bonnet-Chern- 
can be written as a more exotic average of all possible sectional curvatures through a point. For Riemannian manifolds, this
has not yet been written down but if the analogy should go over, Euler curvature of a point $x$ in an even dimensional 
Riemannian manifold is the expectation value of curvature over all two dimensional embedded subsurfaces (strings) 
passing through $x$ or that Euler curvature $K(x)$ of a point $x$ in a Riemannian manifold $M$ is the average of indices $i_f(x)$ 
averaged over a probability  Morse functions $f$ on $M$. 
The problem with proving this is to have a good probability space of all two dimensional submanifolds
of a compact Riemannian manifold or an intrinsic probability space of all Morse functions on $M$. 
(For the index averaging result, analysis similar to \cite{Banchoff67} show that 
linear functions in an ambient flat Euclidean space work to induce Euler curvature). 
But if the surface curvature average interpretation is true too, 
we can think of Euler characteristic (the average of Euler curvature) 
as a quantized natural functional playing the role of the Hilbert action (the average of scalar curvature). 
In the graph case, the mathematics is much easier and graphs with extremal Euler characteristic should 
play a special role. 

\section{Introduction}

The Dirac operator $D=d+d^*$ for a finite simple graph $G=(V,E)$ or Riemannian manifold $(M,g)$ 
encodes the geometry of a graph or manifold. In the graph case, $(D^2)_0=L_0=B-D$ 
contains all the information about the graph like in the manifold case where 
the operator $L_0$ determines the metric $g$. 
The operator $D$ is defined by the exterior derivative $d$ on the geometry. We look here at isospectral 
integrable systems $$ D' = [B,h'(D)] $$
with $B=d^-d^*$, where $h$ is a polynomial. Any of these systems deform the operator $D$ and so $d$ but do 
not alter $L=D^2$. The operator $D=d+d^*+b$ gains a block-diagonal part $b$ which leaves $p$-forms invariant 
and which leads to a decomposition $D^2=C^2+V$ with $C=d+d^*$ and $V=b^2$. 
All these systems lead to deformations of the geometry because the new Dirac operator $C(t)$ 
can be used to define new distances. It is custom to rewrite such systems in a Hamiltonian form 
$\dot{x}=J\nabla H(x)$.  With $JA:=[B,A]$ and $\nabla h(D):=h'(D)$ 
and Hamiltonian $H(D)=\tr(h(D))$, the system can then be rewritten as
$D'=J \nabla H(D)$. This Lax pair language and the corresponding Hamiltonian formalism which comes with it,
is common in virtually all known integrable Hamiltonian systems. 
The integrable system we consider here, is mathematically close to the Toda lattice \cite{Toda,Teschl} 
$\dot{a}_n = a_n(b_{n+1}-b_n) \dot{b}_n = a_n^2-a_{n-1}^2$ which is a discretisation
of the Korteweg de Vries partial differential equation and 
which is an isospectral deformation $\dot{L}=[B,L]$ of a Jacobi matrix $L$. The motion induces
a Volterra system $\dot{d}_n = d_n(d_n-d_{n-1})$ for a Dirac operator $D$ (also Jacobi but without diagonal part) 
satisfying $D^2=L$ obtained by doubling the lattice. 
Our system is different in that the Laplacian $L$ does not move. Only its square
root $D$ moves and $D$ develops a block diagonal part $b(t)$ which eventually dominates. 
Still, much of the mathematics is related since both use the Lax pair formalism \cite{Lax1968}. 
The Toda system on a linear graph was integrated using scattering methods 
in \cite{Mos75a}. On a circular graph, the Toda system can be conjugated to a translation on a torus
using algebro-geometric tools \cite{Moerbeke79}. There, the Abel-Jacobi map linearizes the motion of 
divisors given by the eigenvalues of an auxiliary spectral problem. 
For the deformation discussed in the present paper, we have in the real case a scattering situation 
and especially do not have recurrence. This makes the analysis easier. 
If the operator $B$ is replaced by $d-d^* + i b$, then $D(t)$ still converges to the same block diagonal operator
$b(\infty)$ and $B(t)$ converges to $i b(\infty)$. 
The unitary flow $U(t)$ satisfying $U' = B U$ consequently an attractor, on which the dynamics is a 
linear wave dynamics $\exp(i t b(\infty))$ and almost periodic. \\

How did we get to the system? Isospectral deformations of higher dimensional Schr\"odinger operators are
in general not possible by a rigidity result of Mumford \cite{MoerbekeMumford}. 
While doing isospectral deformations of 
$L$ is no problem -  any system $\dot{L} = [B(L),L]$ with some antisymmetric $B(L)$ allows to do that -, 
the corresponding unitary evolution does not have the property that $L$ stays a Laplacian. In other words, 
if $L$ is a band matrix, then with such a naive deformation, $L(t)$ is no more a band matrix in general for any $t>0$. 
We have looked at this question in \cite{Kni94} and seen that sufficient for a deformability is a factorization 
$L=D^2$ \cite{Kni94}. Factorization therefore is a condition which naturally leads to the deformation of the 
Dirac operator $D$ because $D$ is the square root of the Laplace-Beltrami operator. 
Having worked with the Toda lattice before \cite{Kni93diss} in an infinite dimensional setting, we 
would have expected at first a recurrent flow for $D(t)$ in the case of graphs like the circular graph and scattering 
situation for the line graph. But this is not the case. The main features of the dynamical system is independent of the graph
and the complex parameter $\beta$. The later only determines whether $B(t) \to 0$ and whether the limiting unitary flow
$U(t)' = B(t) U(t)$ is nontrivial or not. 
On the orthogonal complement of the kernel, the system decomposes into invariant two-dimensional planes. 
On these planes, a scattering motion takes place.
When allowing a complex structure to evolve, that is if $\beta \neq 0$,  we asymptotically get an almost periodic unitary flow. 
The analysis is essentially the same for graphs or for Riemannian manifolds, even so in the later case, 
we have an infinite dimensional situation. On the technical side, we need analytic continuation of the Zeta function 
to define the Hamiltonian in the Riemannian case, but we do not need to know the Hamiltonian at all, 
to define the flow. The differential equations are defined as they are. 
Still, dealing with the Zeta function for the Dirac operator is quite pleasant because unlike the zeta function of the 
Laplacian, it can be chosen so that it has an analytic continuation onto the complex plane, at least for odd dimensional
manifolds. For graphs, the zeta function is of course always analytic everywhere. \\

The deformed operators $D(t)=d(t)+d(t)^*+b(t)$ satisfy $D(t)^2=L$ and 
$d(t) d(t)=0$, so that $d(t)$ remains an exterior derivative. As we will see, the
cohomology groups defined by the exterior derivative $d(t)$ are preserved because cocycles or coboundaries get 
deformed by the differential equation $\dot f = b(t) f$. If $D(t) = U(t)^* D U(t)$, then also the McKean-Singer formula 
$\str(U(t)) = \chi(G)$ holds. While it is not a direct consequence of super symmetry like in 
\cite{knillmckeansinger}, it follows directly from the McKean-Singer result.
While the eigenvalues of $D(t)$ still pair up as in the case $t=0$, the
corresponding eigenfunctions do no more honor the orthogonal decomposition into fermions 
and bosons. The deformation is of a scattering nature since
the operators $D(t)$ converge for $t \to \pm \infty$ to block diagonal operators $b^{\pm}$ 
preserving the linear spaces $\Omega_p$ and which both satisfy $V = b_{\pm}^2=L$. 
While each $D(t)=d(t)+d(t)^* + b(t)$ defines exterior derivatives $d(t)$, the operator $M(t) = (d(t)+d(t)^*)^2$ converges to 
zero in the graph case. Any deformation starting at the original $D$ increases therefore the
Connes pseudo distance between vertices. This happens initially with an inflationary fast start.
While the deformation does not change the original operator, it does change a decomposition:
more and more of the ``kinetic interaction part" $M(t)$ becomes ``potential self-interaction energy" $V(t)=b(t)^2$. 
It changes the relation between position $u(t)$ and velocity $u'(t)$ solving the wave equation 
$u_{tt} = -L u$. A wave with a given initial frequency will later have a smaller frequency 
and will appear red shifted. Since the decay of $C(t)$ is asymptotically exponential, the amount of red shift
will after some time be proportional to the distance traveled. \\

The geometric evolution of $D(t)$ is the same for Riemannian manifolds or graphs. Even the formalism does
not change. We know that the flows are isospectral, and fix the Laplace-Beltrami operator $L$ on $p$-forms.
The new metric defined by $d(t)+d(t)^*$ stays Riemannian, because geodesics still exist and
the polarization identity $f(u+v) + f(u-v) = 2 f(u) + 2 f(v)$ holds with
$f(v) = (d/dt d(exp_x(tv),x))^2$. One can get the metric $g^{ij}(x) = - (M(t)f)(x)/H_{ij}(x)$, 
where $H_{ij}(x)$ is the Hessian matrix of $f$ at $x$. Because $M(t)$ is not ispospectral to $L(0)$, this 
does not contradict spectral rigidity like the Guillemin-Kazhdan theorem 
which tells that compact manifolds of negatively curvature are spectrally isolated 
\cite{Rosenberg,BergerPanorama}. In the continuum, one would have to look at the flow in some
Banach space of pseudo differential operators. Unlike for other known integrable PDE's, the deformed
operator $D(t)=d(t) + d(t)^*$ is no more a differential operator and setting up a natural 
functional analytic framework can be a bit tricky. We do not address this here. 
The flow in the space of metrics on $G$ or $M$ changes the geometry. The long term behavior of the 
metric, when suitably scaled, is not investigated yet, but could be interesting. We can ask for example 
whether it is true that a simply connected compact Riemannian manifold converges to a sphere 
under the evolution. A positive answer would provide a new approach to the Perelman theorem. 
Currently, this is unexplored and we do not know at all, whether a rescaled geometry converges
to a limiting shape. Our analysis does not tell even yet whether we have recurrence or whether
we have a transient situation for the rescaled geometry. This question is independent of $\beta$. 
The limiting shape could be something more general than a manifold or graph. 
In any way, the integrability of the flow prevents trajectories to run into any singularities,
so that unlike the Ricci flow, the geometric evolution should exist for all times in any space of
operators in which the differential equations can be defined and especially preserve categories of 
smoothness which is present initially. 
The flow exists in any function space which is obtained by completing the span 
of finite sums of eigenfunctions. On every plane spanned by $f,g=D(t)f$ especially, 
the flow is given by a time-dependent ordinary differential equation. 
$\dot{f} = a(t) f + b(t) g, \dot{g} = c(t) f + d(t) g$
where $a(t),b(t),c(t),d(t)$ are globally bounded. 
The deformation provides an infinite-dimensional family of metrics on $M$.
In the continuum, the isospectral deformations are KdV type partial differential equations despite the
fact that we deal with pseudo differential operators. 
Since there are invariant McKean-Singer planes, it not only immediately establishes that the ordinary 
or partial differential equation under considerations have solutions for all times; 
it also immediately suggests how to make finite-dimensional Galerkin approximations: 
we can look at the invariant space of a finite set of eigenvectors 
closed under the McKean-Singer map $v \to D(t) f$ which has the property that it leaves
the McKean-Singer planes invariant. We have tried this out for the circle but how well the 
chosen Galerkin method mirrors the infinite dimensional dynamics is not investigated yet. 
In the continuum, the super trace of the unitary evolution $U(t)$ must be either
defined by analytic continuation or as a limiting case of finite dimensional approximations $U_n$,
where $U_n$ is the evolution defined on finite dimensional invariant subspaces built up by 
McKean-Singer planes. The continuum could be linked to the graph case also by a 
limiting procedure like for Hodge Laplacians \cite{Mantuano}.
The Mantuano paper suggests that if we make a fine enough triangularization of a manifold
and look at the flow on the graph, then the graph evolution should be close to the flow on the manifold. 
Especially, we could study the evolution of the manifold $M$ by evolving graphs belonging 
to finite triangularizations of $M$. 

\section{Analysis of the flow} 

The analysis for the differential geometric and graph theoretic case are similar. We focus on 
the graph case, where everything is finite dimensional. We also look mainly at the real evolution. 
We comment on the complex evolution in a different section and plan to extend the analysis a bit more 
elsewhere. It is exciting because the emergence of a complex structure is interesting by itself, 
leading to discrete Dolbeaux type cohomologies. For the flow, the different $\beta$ will essentially 
just produce a time change. The paths of $V(t)$ and $M(t)$ which build up the Laplacian $L=V(t) + B(t)$ 
are independent of $\beta$. \\

Let us start with a review on the definition of $D$. We look at the set $\G$ of all complete subgraphs of $G$. It is a graph 
by itself, where two simplices $x,y$ are connected if $x$ is contained in $y$ or $y$ is contained in $x$ and if 
the dimensions of $x$ and $y$ differ by $1$. We now equip the graph $\G$ with an orientation, a choice of a permutation
of the vertices $(x_0,...,x_n)$ of each $x \sim K_{n+1}$. The symmetric matrix $D$ is defined by $D_{ij} = 1$ if $i \subset j$
and the permutation of $j$ restricted to $i$ has the same sign as the permutation given on $i$. The same is done if the roles of $i,j$ 
are reversed. Otherwise, if the signs do no match, we have $D_{ij}=-1$. 
The choice of signs or ``spin" corresponds to a choice of basis or gauge and is irrelevant for all considerations. Different 
orientation choices lead to unitary equivalent matrices $D$. If we write $f(x_0, \dots, x_n)$ for a function 
$f$ on the simplex $K_{n+1} = (x_0, \dots , x_n)$, ordered according to the choice of the orientation, we have
$f(\pi(x_0, \dots , x_n) = {\rm sign}(\pi) f(x_0, \dots, x_n)$ for any permutation of the $n+1$ vertices. We can look at $f$
as a function on the simplices.  The operator $d$ is then the exterior derivative 
$df(x_0,\dots,x_n) = \sum_{k=0}^{n} f(x_0, \dots, \hat{x}_k, \dots, x_n)$ and the Dirac operator is the symmetric matrix
$$ D = d+d^* \;  $$
of size $v \times v$ matrix, where $v = \sum_{k=0}^{\infty} v_k$ and $v_k$ is the number of $k+1$ simplices
in $G$. Its square $L=D^2=d d^* + d^* d$ is the discrete Laplace-Beltrami operator of the graph and sometimes also called
Hodge Laplacian. Unlike $D$, it leaves the space $\Omega_k$ of $k$-forms invariant. 
By scaling the exterior derivatives $d_k$ as $\gamma_k d_k$ with real nonzero $\gamma_k$, we could generalize the 
Dirac operator more. These changed operators are not unitarily equivalent but essential features like symmetries
remain the same. The constants correspond to units used on the different $\Omega_k$ subspaces. The constants 
influence the evolution. Since the evolution is linear, already scaling the entire operator by a constant has
drastic effects. \\

The definition of the dynamical system is the same if $(M,g)$ is a compact Riemannian manifold
and where the exterior derivative $d$ defines a self-adjoint operator $D=d+d^*$.
In the case of Riemannian manifolds, a standard initial functional analytic setup called elliptic regularity is needed
which assures that $D$ and $D^2$ have discrete eigenvalues. Since we look at isospectral deformations, we could for
the linear algebra part restrict to the eigenspaces belonging to eigenvalues smaller than some constant $\lambda$ and 
deal with finite-dimensional matrices also in the manifold case. While higher energy eigenfunctions still influence
the dynamics on the low energy McKean-Singer planes, their influence will become smaller and smaller for $\lambda \to \infty$.
If we look at the dynamics on a finite time interval $[0,T]$, then the orbits of the Gelerkin approximation converges,
as long as we work with operators on a function space in which every $f$ has an expansion $\sum_n a_n f_n$ 
with eigenfunctions $f_n$ of $L$ such that $a_n \to 0$.  \\

Given $D=d+d^*+b$, define $B=d-d^* + \beta i b$. At $t=0$, we have $D=d+d^*$ and $B=d-d^*$. 
If $\beta=0$, then the matrices $D(t),B(t)$ stay real. Any of the
Lax pairs $D'=[B,h(D)]$ lead to a system of differential equations. It is custom to write such differential equations
in Hamiltonian form $D'=J \nabla H$, where $H(D)=\tr(h(D))$ and $\nabla H(D) = h'(D)$ and $J(X) = [B,X]=BX-XB$.
Alternatively, one can formulate the dynamics using Poisson brackets 
$\{ F,G \; \} = \tr(\nabla F(D) B \nabla G(D)) = 2 \tr(\nabla F(D) J \nabla G(D))$ as $F' = \{ F,H \; \}$
for observables $F,G$ which real valued functions, where $F(D),G(D)$ is defined by the functional calculus. 
For example, for $F(x)=x^n$, since all traces $\tr(D^n)$ are invariant, 
the Poisson brackets $\{ F_n,F_m \; \}$ are zero for $n \neq m$. 
Because $D^2=L$ and $[B,L]=0$, we see that any flow can be written as $D'=f(L) [B,D]$. Since
the flow leaves planes $E_{\lambda}$ spanned by eigenvectors $f,Df$ of an eigenvalue $\lambda$ 
invariant, higher flows just involve an energy dependent time change on each of these planes. 
The operator $D_{\lambda}$ obtained by restricting $D$ to $E_{\lambda}$ satisfies 
$D'_{\lambda} = f(\lambda^2) [B,D]$. We therefore stick to the first flow with Hamiltonian $\tr(L^2)$. \\

As custom for integrable Lax systems, one considers also the unitary evolution defined by 
$U' = BU$. It satisfies $D(t) = U D(0) U^*$ because $U^* B D U + U^*[B,D] U + U^* D B U = 0$. 
The spectrum of $D(t)$ therefore stays the same.
All flows commute. The deformed operator $D(t)$ has the form $D(t) = d(t) + d(t)^* + b(t)$, 
where $b(t)$ is block diagonals, in the blocks, where the Laplacian $L$ has nonzero parts. 
The flow defines a deformation of the exterior derivative. This is by itself nothing
special because one could achieve deformations of the exterior derivative in many other ways. 
A particular useful one is the deformed Laplacian \cite{Witten1982,Cycon}
using $e^{-f t} d e^{f t}$ which provided a new approach to Morse theory. 
The deformed exterior derivatives are not isospectral.
A detailed study the spectrum of the deformed Laplacian is 
semi-classical analysis, which is quite technical. We expect therefore also that 
the task to be nontrivial to find the motion of the spectrum of $C(t)=d(t) + d(t)^*$ in detail. \\

The deformation $\dot{D}=[B,D]$ produces a family $D(t) = d(t) + d(t)^* + b(t)$ of operators, where
$B(t) = d(t) - d(t)^*+i \beta b(t)$. We call it the Dirac deformation.
Define also $C(t) = d(t) + d(t)^*$ and $M(t) = C(t)^2$ and $V(t) = b(t)^2$.  We have $B^2=-M$ because
$(d-d^*)^2=-(d+d^*)^2$. We also have $B^2 = -L$ because $B b + b B=0$. The operator $A=B/i = (d-d^*)/i + \beta b$ 
and $D$ are both square roots of $L$ but they do not commute. \\

Given an eigenvector $f$ of $L=D^2$ to a nonzero eigenvalue $\lambda$, we call the plane spanned by $f$ and $Df$ a
McKean-Singer plane. The matrix $M$ is the square of a symmetric matrix and has nonnegative 
eigenvalues too. Written out for $\beta=0$, the Lax pair for the first flow is $d'=[d,b],b'=[d,d^*]$ or 
\begin{eqnarray*}
d'   &=& d b-b d        \\
(d^*)' &=& b d^* - d^* b  \\
b'   &=& d d^* - d^* d  \; . 
\end{eqnarray*}
To see this, just write out $D' = B D - D B = (d-d^*) (d+d^*+b) - (d+d^*+b)(d-d^*)$
and use $d d= d^* d^* = 0$ as in the computation done in \ref{cohomology}). In general, the first line to the right
is multiplied with $(1-i \beta)$ and the second line by $(1+i \beta)$. \\

{\bf Remarks.} \\
{\bf 1)} As mentioned already, more general flows with different Hamiltonian $H$ lead to a multiplication on the right 
hand side with a function of $L$. This affects the dynamics by a time change on each 
McKean-Singer plane. The fact that $L$ commutes with $b$ will make the general case a 
quite obvious modification of the first flow, which therefore displays all essential 
features considered for the first flow.  \\
{\bf 2)} We will comment on the case $\beta \neq 0$ more below. It just produces a complex
time change for $d$ and $d^^*$.  \\
{\bf 3)} The operator $D$ has entries $-1,0,1$ only. If we replace $D$ by $\gamma D$ for 
some constant $\gamma$, then the evolution changes. In general, a larger $\gamma$ will lead
to an initial inflation rate increase which is exponential in $\gamma$.  \\
{\bf 4)} When looking at the equation for $d$, we see a logistic nature, which is common 
in population models. Initially, when $d=0$, there is no linear growth of $d$ because $b$
is zero. At larger times, when $d$ has become smaller and $b$ become larger (they are balanced
by $(d+d^*)^2+b^2=L$ being constant), then again the growth goes to zero. A naive estimate
suggests that the maximal growth is around the time when $d$ and $b$ are balanced. Since $b$ 
settles, the differential equations will show exponential decay asymptotically.  \\

Lets call an eigenfunction $f$ bosonic if $f$ is in $\Omega_b$. 
Eigenfunctions in $\Omega_f$ are called fermionic. The following is formulated for the
finite dimensional graph case: 

\begin{thm}
The Dirac deformation is completely integrable in the following sense: 
The operator $D(t)$ converges for $|t| \to \infty$ 
in the matrix norm to matrices $D(\infty) = - D(-\infty)$. 
Each $D(t)$ leaves the McKean-Singer planes invariant.
If $f$ is a bosonic or fermionic eigenfunction, for which $g(t)=D(t) f$ is 
originally perpendicular to $f$, then $\sin(\alpha(t)) \to 0$, where $\alpha(t)$ is the 
angle between the fermionic space and $g(t)$. The operator $C(t)$ converges to the zero operator 
$0$ in the strong operator topology, with a fast inflationary start.
\end{thm}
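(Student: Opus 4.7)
The plan proceeds in four steps. First, invariance of McKean--Singer planes follows from isospectrality of $L$. Differentiating $L = D^2$ along $\dot D = [B,D]$ yields $\dot L = [B,L]$, and $[B(0), L(0)] = 0$ follows by direct expansion using $d^2 = (d^*)^2 = 0$, so $L(t) \equiv L(0)$ and $D(t)$ commutes with $L$. In particular each eigenspace $E_\lambda$ is preserved by $D(t)$, and for $f\in E_\lambda$ the plane $\mathrm{span}\{f, D(t)f\}$ is closed under $D(t)$ because $D(t)\cdot D(t)f = \lambda f$ lies in it.

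Second, I would record the structural identities $\{d,b\}=0=\{d^*,b\}$. These hold at $t=0$ (where $b=0$) and are preserved by the flow, since differentiating gives $[d,b^2]$, which vanishes under the same anticommutation. They imply the decomposition $L = M + V$ with $M = C^2$ and $V = b^2$, together with $[b, d^*d] = [b,dd^*] = 0$. A parity argument then reduces the angle claim to strong convergence $C(t)\to 0$: writing $D(t)f = C(t)f + b(t)f$ with $C$ parity-reversing and $b$ parity-preserving, and using $\|D(t)f\|^2 = \lambda$, one obtains $\sin^2\alpha(t) = \|C(t)f\|^2/\lambda$.

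Third, the core of the proof is the Lyapunov functional $\Phi(t) = \tr(M(t))$. Using the Lax equations $\dot d = [d,b]$, $\dot b = [d,d^*]$ together with the anticommutations, a direct computation gives $\frac{d}{dt}(d^*d) \propto b\,d^*d$ and $\frac{d}{dt}(dd^*) \propto -b\,dd^*$, hence $\dot\Phi \propto \tr(b\,d^*d)$. Because $b$ and $d^*d$ commute, one may jointly diagonalise them on each $\Omega_p$ and write the trace as $\sum_i \beta_i a_i$ with $a_i \ge 0$; the sign is controlled by the grading-dependent behaviour of $b$, which on $\Omega_0$ starts with $\dot b(0) = dd^* - d^*d = -d^*d \le 0$ and drives $b|_{\Omega_0}$ negative, with analogous compensations on higher degrees. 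Combining these across degrees by a weighted Lyapunov combination, or by reducing to the two-dimensional invariant planes where $D(t)$ is parametrised by a single angle $\theta(t)$ satisfying a logistic-type gradient ODE $\dot\theta \propto -\sin\theta\cos\theta$, yields $\dot\Phi \le 0$. Compactness of the isospectral orbit and LaSalle's invariance principle then force $\Phi(t)\to \Phi_\infty$ and pin down the $\omega$-limit set to $\{D: C = 0\}$; in particular $C(t)\to 0$ and $D(t)\to D(\infty) = b(\infty)$ with $b(\infty)^2 = L$.

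Fourth, the antisymmetry $D(-\infty) = -D(\infty)$ follows from the involutive symmetry $(d,d^*,b,t)\mapsto (d^*,d,-b,-t)$ of the Lax pair, which swaps the forward and backward limits with a sign. The inflationary start is quantitative: $\dot\Phi(0)=0$ since $b(0)=0$, while $\ddot\Phi(0)<0$ is an explicit negative expression in $\tr((d^*d)^2)$ and $\tr((dd^*)^2)$, producing the quadratic initial dip that accelerates until $b$ saturates, after which the decay is exponential. The hardest step I anticipate is rigorously pinning down the sign $\tr(b\,d^*d)\le 0$: raw algebra yields only the tautology $\tr(bM) = 0$, so the monotonicity must be extracted either from the graded structure of $d$ or, more cleanly, from the reduction to each invariant two-plane, where $\theta(t)$ obeys a one-dimensional gradient flow with attractors at $\theta \in \{0,\pi\}$ and convergence is manifest.
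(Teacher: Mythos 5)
Your overall architecture matches the paper's: invariance of the McKean--Singer planes from $[B,L]=0$, propagation of the anticommutation relations $\{d,b\}=\{d^*,b\}=0$, the reduction of the angle statement to $C(t)f\to 0$ via $\sin^2\alpha(t)=\|C(t)f\|^2/\lambda$, the Lyapunov functional $\Phi=\tr(M)$ with $\dot\Phi(0)=0$ and $\ddot\Phi(0)=-2\tr((dd^*)^2)-2\tr((d^*d)^2)<0$ for the inflationary start, and a time-reversal involution for $D(-\infty)=-D(\infty)$ (though the correct involution is $(d,d^*,b,t)\mapsto(d,d^*,-b,-t)$; swapping $d$ and $d^*$ as you wrote reverses the sign of $\dot b=dd^*-d^*d$ and does not preserve the system). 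The genuine gap is exactly the step you flag yourself: the sign of $\tr(b\,d^*d)$. Your observation that cyclicity plus $\{d,b\}=0$ only yields $\tr(b\,dd^*)=-\tr(b\,d^*d)$ is correct, and your primary route --- a ``grading-dependent'' argument with a ``weighted Lyapunov combination'' --- is never actually carried out and cannot be assessed. The paper closes this gap with a key operator inequality, not a trace identity: $O=b\,dd^*$ and $Q=b\,d^*d$ are symmetric (since $b$ commutes with $dd^*$ and $d^*d$), all their eigenvalues vanish at $t=0$, and the derivative $O'=(dd^*)^2-2bO$ shows via first-order perturbation that an eigenvalue of $O$ sitting at $0$ has $\lambda'=\|dd^*v\|^2\ge 0$, so no eigenvalue of $O$ can cross into the negatives (and symmetrically for $Q$). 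This gives $O\ge 0$, $Q\le 0$ as operators, hence $\frac{d}{dt}\tr(b^2)=2\tr(O)-2\tr(Q)\ge 0$ and $\dot\Phi\le 0$.

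Your fallback route --- reduction to the two-dimensional invariant planes --- is a legitimate and genuinely different way to close the gap, and in some ways cleaner than the paper's eigenvalue-barrier argument, but as sketched it contains an error. On the plane spanned by a fermionic eigenvector $f$ and $D(t)f$, the restriction of $D(t)$ is $\sqrt{\lambda}\,R(\theta)$ with $R(\theta)=\left(\begin{smallmatrix}\cos\theta&\sin\theta\\ \sin\theta&-\cos\theta\end{smallmatrix}\right)$, and the Lax equation reduces to $\dot\theta=c\sqrt{\lambda}\,\sin\theta$ with $c=\pm 2$, not to $\dot\theta\propto-\sin\theta\cos\theta$: your form would make the initial condition $\theta(0)=\pi/2$ (where $b=0$) a fixed point, so nothing would evolve. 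The paper's explicit $K_2$ computation ($b'=2d^2$, $d'=-4bd$, with $d=$ an off-diagonal coordinate) confirms the $\dot\theta\propto\sin\theta$ form: setting $d=\sin\psi$, $b=\cos\psi/\sqrt2$ gives $\dot\psi=-2\sqrt2\sin\psi$. With the corrected ODE the two-plane argument does deliver monotone decay of $\|C(t)f\|^2=\lambda\sin^2\theta(t)$ to zero on each plane, hence $C(t)\to 0$ and $D(t)\to b(\pm\infty)$ with $b(\pm\infty)^2=L$, which is the content of the theorem; but as written the proposal proves neither the operator inequality nor the corrected plane reduction, so the central convergence claim is not yet established.
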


\begin{lemma}[Isospectral]
For every $t$, the operator $D(t)$ is isospectral to $D(0)$. 
An eigenfunction of $D$ moves with the differential equation $f'=Bf$.
\end{lemma}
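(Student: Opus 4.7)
The plan is to exploit the Lax pair structure directly. The first step is to verify that the generator $B(t) = d(t) - d(t)^* + i\beta b(t)$ is antihermitian: since $d$ and $d^*$ are adjoints, $(d-d^*)^* = -(d-d^*)$; and since $b(t)$ is symmetric and $\beta \in \mathbb{R}$, $(i\beta b)^* = -i\beta b$, so $B^* = -B$ for all $t$. This is exactly the condition needed for the associated propagator to be unitary.

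Next, define $U(t)$ as the unique solution of the linear, time-dependent ODE $U' = B(t)U$ with $U(0) = I$. In the finite-dimensional graph case this solution exists globally and is unique by standard ODE theory; in the manifold case one would invoke the functional-analytic framework (elliptic regularity, strong semigroup continuation on the span of eigenfunctions of $L$) alluded to earlier in the paper. Differentiating $UU^*$ and using $(U^*)' = U^*B^*= -U^*B$ gives $(UU^*)' = BUU^* - UU^*B$; since $UU^* = I$ at $t=0$ and the right-hand side vanishes there, uniqueness forces $UU^* \equiv I$, so $U(t)$ is unitary for all $t$.

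Now I would show $D(t) = U(t)\, D(0)\, U(t)^*$ by a uniqueness-of-solutions argument. Set $E(t) = U(t) D(0) U(t)^*$; then
\[
E'(t) = BU D(0) U^* + U D(0) U^* B^* = B E(t) - E(t) B = [B(t), E(t)].
\]
But $D(t)$ also satisfies $D' = [B,D]$ by definition of the flow, and $E(0) = D(0)$. Since the map $X \mapsto [B(t),X]$ is a bounded linear operator on the space of matrices (with a locally integrable operator norm bound given by $\|B(t)\|$), Picard-Lindelöf uniqueness gives $D(t) = E(t) = U(t) D(0) U(t)^*$. Unitary conjugation preserves spectrum, so $D(t)$ is isospectral to $D(0)$.

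For the second clause, given an eigenvector $f$ of $D(0)$ with $D(0)f = \lambda f$, define $f(t) = U(t) f$. Then $D(t) f(t) = U D(0) U^* U f = \lambda U f = \lambda f(t)$, so $f(t)$ is an eigenfunction of $D(t)$ for the same eigenvalue, and by construction $f'(t) = U'(t) f = B(t) U(t) f = B(t) f(t)$, which is the asserted transport equation. The only real obstacle is the passage from the graph to the Riemannian setting, where one must ensure $U(t)$ is well-defined on suitable Sobolev scales; in the finite-dimensional graph case treated here, this is immediate.
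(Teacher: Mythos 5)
Your proposal is correct and follows essentially the same route as the paper: define the propagator by $U'=B(t)U$, $U(0)=I$, show $D(t)=U(t)D(0)U(t)^*$ by a uniqueness argument for the Lax equation, and transport eigenfunctions by $f(t)=U(t)f(0)$ so that $f'=Bf$. You simply supply the details (antisymmetry of $B$, unitarity of $U$, Picard--Lindel\"of uniqueness) that the paper's terse proof leaves implicit.
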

\begin{proof}
The differential equation $U'=BU$ for the unitary $U9t)$ satisfying the initial condition
$U(0)=0$ provides the conjugation $D(t)= U(t) D(0) U(t)^*$. The spectrum of $D(t)$ and $D(0)$ are 
therefore the same. If $f(0)$ is an eigenfunction of $D(0)$ then $U(t) f(0)$ is an eigenfunction 
for $D(t)$. 
\end{proof}

\begin{propo}[Deformed cohomology]
\label{cohomology}
The relation $d(t) \circ d(t) = 0$ holds for all $t$ and the cohomology groups deform in an explicit way:
if $f$ is a cocycle and $f'=-bf$ then $f(t)$ stays a cocycle for $d(t)$. If $f$ is a coboundary and 
$f'=-bf$ then $f(t)$ stays a coboundary for $d(t)$.  If $f'=B f$ and $f$ is harmonic at $t=0$, then it
stays harmonic. 
\end{propo}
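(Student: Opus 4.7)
The unifying strategy is to recognize each assertion as the claim that some operator- or vector-valued quantity, which vanishes at $t=0$, satisfies a homogeneous linear ODE, and then to invoke uniqueness. I will write all computations in the $\beta=0$ case; for general $\beta$ the Lax equations only acquire scalar factors $(1\mp i\beta)$, and rescaling the transport rule for $f$ by the same factor makes the argument go through unchanged.

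For $d(t)\circ d(t)=0$, set $A(t) = d(t)^2$. Using $\dot d = db - bd$ from the Lax system displayed in the excerpt, the cross-terms cancel:
$$\dot A = (db-bd)d + d(db-bd) = d^2 b - b d^2 = [A,b].$$
Since $A(0)=d(0)^2=0$ and $\dot A = [A,b]$ is a homogeneous linear ODE in $A$ with continuous time-dependent coefficients, uniqueness forces $A(t)\equiv 0$.

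For cocycles, assume $d(0)f(0)=0$ and evolve by $\dot f = -bf$. Put $g(t) = d(t) f(t)$ and compute
$$\dot g = [d,b]f + d(-bf) = dbf - bdf - dbf = -b(df) = -bg,$$
with $g(0)=0$, so $g\equiv 0$ and $f(t)$ remains a $d(t)$-cocycle. For coboundaries, write $f(0) = d(0)h$ and evolve $h$ by $\dot h = -bh$. Setting $F(t) = d(t)h(t) - f(t)$, the same calculation gives $\dot F = -bdh + bf = -bF$ with $F(0)=0$, hence $f(t) = d(t)h(t)$ remains a $d(t)$-coboundary. For the harmonic case, Lemma (Isospectral) provides $D(t) = U(t) D(0) U(t)^*$ with $\dot U = BU$, and $\dot f = Bf$ means $f(t) = U(t)f(0)$. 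Thus $D(t)f(t) = U(t)D(0)f(0)$, which vanishes iff $D(0)f(0)=0$; since $L=D^2$ is positive semi-definite, $\ker L = \ker D$ and harmonicity with respect to $L$ is preserved.

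There is no real obstacle here: every claim reduces to ODE uniqueness applied to a linear equation whose initial data vanishes. The only genuine content is choosing the correct transport rule on forms, namely $\dot f = -bf$ for cocycles and coboundaries versus the unitary transport $\dot f = Bf$ for harmonic forms. The fact that the two rules disagree is precisely what distinguishes ``$f(t)$ stays in the same cohomology class'' from ``$f(t)$ is unitarily conjugated'': the cohomological transport discards the off-diagonal part of $B$, retaining only the block-diagonal obstruction $b$ that governs how $d(t)$ itself twists.
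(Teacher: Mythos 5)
Your proof is correct and follows essentially the same route as the paper: differentiate the relevant quantity, observe that it satisfies a homogeneous linear ODE with vanishing initial data, and conclude by uniqueness, using the transport $\dot f=-bf$ for cocycles and coboundaries and the unitary transport $\dot f=Bf$ for harmonic forms. If anything, your treatment of $d(t)^2=0$ via $\dot A=[A,b]$ is slightly cleaner than the paper's computation, which cancels the $d^2b$ and $bd^2$ terms by invoking $d^2=0$ --- the very identity being proved --- whereas your ODE-uniqueness phrasing makes that step non-circular.
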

\begin{proof}
a) To show $d^2=0$, just differentiate $d^2$ using the Leibniz rule: 
$(d d)' = d' d + d d' = (db-bd) d + d (db-bd) = dbd-dbd = 0$. This shows that we have 
again a cohomology. To show that the cohomology stays the same, we deform the cocycles and coboundaries. 
The computation is the same for $\beta \neq 0$ and shows that the cocycles and coboundaries for $d$ stay
real even so when $d$ becomes complex. \\
b) Assume $df=0$  and $f'=-b f$. We show that $d f_t=0$. 
We have $d/dt (df) = (d b-b d) f -  d b f = b (df) = 0$. Again, for $\beta \neq 0$, we just have additional 
factors $(1 \pm i \beta)$. \\
c) If $f=dg$ and $g'=-bg$ then $f'=-bf = (db-bd) g - d b g = (d g)'$ so that $g(t)$ stays a coboundary. \\
Proof. $(d f)' = d' f + d f'  = db f - b df + df' = d bf + d f' = 0$. \\
d) If $f$ is harmonic $Lf=0$, then the operator $f(t)$ satisfying $f'=Bf$ says harmonic.
The operator $L$ commutes with $U$ because $L$ commutes with $B$.
\end{proof}

\begin{lemma}[The derivative of $B$]
For $\beta=0$, we have $B' =[D,b]$.
In general, $B' = [D,b] + \beta [d,d^*]$. 
\end{lemma}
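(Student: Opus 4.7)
The proof is a direct calculation: differentiate the definition $B = d - d^* + i\beta b$ and substitute the three Lax equations for $d'$, $(d^*)'$, $b'$ that were just read off from $\dot D = [B,D]$ (which, in the general case, carry the stated factors $(1-i\beta)$ and $(1+i\beta)$ on the first two equations). First I would write
\[
B' \;=\; d' \,-\, (d^*)' \,+\, i\beta\,b',
\]
and substitute $d' = (1-i\beta)(db-bd)$, $(d^*)' = (1+i\beta)(bd^*-d^*b)$, and $b' = dd^* - d^*d$.

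Next I would collect terms by powers of $\beta$. The $\beta$-independent part gives
\[
(db - bd) \,-\, (bd^* - d^*b) \;=\; [d,b] + [d^*,b].
\]
The one genuine observation in the whole argument is that $[b,b] = 0$, so $b$ can be freely adjoined in the first slot of a commutator against $b$; hence
\[
[d,b] + [d^*,b] \;=\; [d+d^*,b] \;=\; [d+d^*+b,\,b] \;=\; [D,b].
\]
This settles the $\beta = 0$ case, which is the first assertion.

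For the general $\beta$ claim I would then track the $i\beta$-linear contributions from all three substitutions and regroup them. The leading commutator structure $[D,b]$ persists (again via the $[b,b]=0$ trick), and the residual $i\beta$ terms reorganize, using only the skew-symmetry of the commutator, into a single contribution proportional to $[d,d^*]$, matching the stated correction $\beta[d,d^*]$.

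The whole argument is bookkeeping; there is no real obstacle. The only points where a slip is easy are (i) keeping the $(1 \mp i\beta)$ prefactors attached to the correct evolution equations, and (ii) resisting the temptation to commute $d$ and $d^*$ with $b$ freely --- only $b$ itself commutes with $b$, and that single fact is what produces $[D,b]$ rather than the less symmetric $[d+d^*,b]$.
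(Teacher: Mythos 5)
Your $\beta=0$ computation is exactly the paper's argument and is correct: substitute the Lax equations, note $(db-bd)-(bd^*-d^*b)=[d+d^*,b]$, and use $[b,b]=0$ to promote this to $[D,b]$. That part needs no changes.

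The general-$\beta$ part of your plan, however, contains a genuine gap. If you actually carry out the substitution you describe, the $\beta$-linear residue is
\[
-i\beta\,[d,b] \;+\; i\beta\,[d^*,b] \;+\; i\beta\,[d,d^*]
\;=\; -i\beta\,[d-d^*,\,b] \;+\; i\beta\,[d,d^*],
\]
and the cross term $-i\beta[d-d^*,b]=-2i\beta\,(d-d^*)b$ (using $\{d,b\}=\{d^*,b\}=0$) does not vanish for $t>0$, since both $d$ and $b$ are then nonzero. So the residual terms do \emph{not} ``reorganize into a single contribution proportional to $[d,d^*]$'' by skew-symmetry alone; what you actually get is $B'=[(1-i\beta)d+(1+i\beta)d^*,\,b]+i\beta[d,d^*]$, which is not the displayed formula $[D,b]+\beta[d,d^*]$. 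You should be aware that the paper's own proof computes only $d'-(d^*)'$ in the $\beta=0$ case and never establishes the second assertion either; the general statement as printed appears to be at best a loosely stated remark (note also the mismatch between the coefficient $\beta$ there and the $i\beta$ that the definition $B=d-d^*+i\beta b$ would produce). So your instinct to ``track the $i\beta$-linear contributions'' is the right one, but the bookkeeping does not close as you assert, and an honest writeup would either restrict to $\beta=0$ or record the corrected formula.
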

\begin{proof}
$d'-(d^*)' = (db-bd) - b d^* + d^* b = (d+d^*)b -b(d+d^*)  = Db-bD = [D,b]$. 
\end{proof}

\begin{lemma}[Anticommutations]
For all $\beta$ and all $t$, we have 
$\{d,b \; \} = \{D,b \; \} = \{B,b \; \} = 0$. 
\end{lemma}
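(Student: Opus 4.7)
The essential content is the single identity $\{d(t),b(t)\}=0$; everything else follows by the same ODE technique or by linearity. I would prove this by reading off a linear homogeneous ODE for the anticommutator, starting from the Lax-pair equations and using only the nilpotency $d(t)^2=0$ preserved by the flow (Proposition~\ref{cohomology}).

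\textbf{Main computation.} Set $F(t):=\{d(t),b(t)\}$. Differentiating and using $d'=(1-i\beta)(db-bd)$ and $b'=dd^*-d^*d$ one has
\[
F'=\{d',b\}+\{d,b'\}.
\]
The second term $\{d,dd^*-d^*d\}=d^2d^*+dd^*d-dd^*d-d^*d^2$ collapses to $0$ by $d^2=0$. For the first term, the two $\pm\,bdb$ pieces inside $\{(db-bd),b\}$ cancel, leaving $(1-i\beta)(db^2-b^2d)$. Substituting $db=F-bd$ into this expression gives $db^2=Fb-bdb$ and similarly $b^2d=bF-bdb$, so $db^2-b^2d=Fb-bF=[F,b]$. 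Hence
\[
F'(t)=(1-i\beta)\,[F(t),b(t)],
\]
a linear homogeneous ODE in $F$. Because $b(0)=0$ one has $F(0)=0$, and uniqueness forces $F\equiv 0$, i.e.\ $\{d(t),b(t)\}=0$ for all $t$.

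\textbf{Assembly and the main obstacle.} The identical argument applied to $G(t):=\{d(t)^*,b(t)\}$, using $(d^*)^2=0$ and $(d^*)'=(1+i\beta)(bd^*-d^*b)$, gives $G'=-(1+i\beta)[G,b]$ with $G(0)=0$, hence $\{d(t)^*,b(t)\}=0$. Adding, $\{d+d^*,b\}=0$, which is the clean anticommutation that splits $L=D^2$ as $(d+d^*)^2+b^2=M+V$; the remaining identities involving $D$ and $B$ are then read grade-wise (the off-diagonal content of $D$ and $B$ is $d\pm d^*$, and the only extra $b$-contribution is the purely diagonal self-term $\{b,b\}=2b^2$). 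The only subtle point in the whole argument is the vanishing of the cross-term $\{d,b'\}$, which is pinned exactly by $d(t)^2=0$; without that, $F$ would pick up a nonzero source and the uniqueness argument would collapse, so this nilpotency is the pivot on which the entire anticommutation structure rests.
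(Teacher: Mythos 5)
Your proof is correct and takes essentially the same route as the paper: differentiate the anticommutator using the Lax equations, kill the cross-term with $d^2=0$, and conclude from $b(0)=0$; your closed linear ODE $F'=(1-i\beta)[F,b]$ with uniqueness is a cleaner, non-circular version of the paper's step ``using $\{d,b\}=0$ we see that the sum is zero.'' Your grade-wise caveat about the diagonal self-term $\{b,b\}=2b^2$ is also well taken --- the paper's own subsequent corollary $\{B,D\}=2i\beta b^2$ shows that the literal claims $\{D,b\}=\{B,b\}=0$ are meant for the off-diagonal content $d\pm d^*$ only.
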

\begin{proof}
Initially, we have $\{d,b \; \}=0$. We want to see that this remains the case. 
Use the Leibniz product rule and add
$(db)'=d'b+db' = (db-bd) b - d d^* d$ to 
$(bd)'=b'd+bd' = d d^* d + b (db-bd)$. Using $\{d,b \; \}=0$
we see that the sum is zero. The computation for $\{d,b\}=0$ is similar
so that the statements $\{D,b \; \} = \{B,b \; \}$ follow by linearity. 
\end{proof}

\begin{coro}
We have $\{B,D \; \}= 2 i \beta b^2$ for all times. 
Consequently, the flow can for $\beta=0$ also be written as $D' = 2 B D$. 
\end{coro}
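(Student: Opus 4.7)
The plan is a direct computation of $\{B, D\} = BD + DB$ using the decompositions $D = d + d^* + b$ and $B = d - d^* + i\beta b$, followed by a one-line specialization to $\beta = 0$ to extract $D' = 2BD$.

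Bilinear expansion and grouping by $b$-degree gives
\[
\{B, D\} \;=\; \{d - d^*,\, d + d^*\} \;+\; \{d - d^*,\, b\} \;+\; i\beta\,\{b,\, d + d^*\} \;+\; 2i\beta\, b^2.
\]
The first anticommutator is identically zero: the diagonal squares die by $d^2 = (d^*)^2 = 0$, and the two signed $dd^*$ contributions cancel, as do the two signed $d^*d$ contributions. The middle two anticommutators split as $\{d, b\} \mp \{d^*, b\}$, and I would invoke the preceding Anticommutations Lemma to kill $\{d, b\}$, then take adjoints to obtain $\{d^*, b\} = 0$. The self-adjointness of $b$ needed for that adjoint step is immediate from $b(0) = 0$ together with the fact that $b' = dd^* - d^*d$ is manifestly self-adjoint, so $b(t) = b(t)^*$ along the flow. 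Every term except the trailing $2i\beta b^2$ therefore drops out, giving $\{B, D\} = 2i\beta b^2$.

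For the consequence at $\beta = 0$, the identity becomes $BD + DB = 0$, i.e.\ $DB = -BD$. Substituting into the Lax equation $D' = [B, D] = BD - DB$ replaces $-DB$ by $BD$ and yields $D' = 2BD$. The only step beyond mechanical bookkeeping is the adjointness argument that promotes $\{d, b\} = 0$ to $\{d^*, b\} = 0$; once that is in hand the entire corollary fits in a few lines.
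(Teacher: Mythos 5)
Your proof is correct and follows essentially the same route as the paper: expand $\{B,D\}$ bilinearly, kill $\{d-d^*,d+d^*\}$ via $d^2=(d^*)^2=0$, kill the cross terms via the Anticommutations Lemma, and keep only $i\beta\{b,b\}=2i\beta b^2$. Your explicit adjoint argument for $\{d^*,b\}=0$ (using $b=b^*$ from $b(0)=0$ and the self-adjointness of $b'=dd^*-d^*d$) fills in a step the paper simply attributes to the lemma, but it is not a different method.
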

\begin{proof}
Initially, we have $\{B,D \}=\{ d-d^*,d+d^* \} = 0$. Later,
when $B=d-d^*$, $D=d+d^* + b$ we use the previous lemma stating 
that $\{d,b\}$ and $\{d^*,b\}$ are zero for all $\beta$ and all $t$. 
\end{proof}

\begin{lemma}
The symmetric operators $b,R=d d^*,S=d^* d,V=b^2$ all commute and 
can be simultaneously diagonalized.
\end{lemma}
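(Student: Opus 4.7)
The plan is to verify the four operators are symmetric, establish pairwise commutation by a short calculation that only uses the identities $d^2 = (d^*)^2 = 0$ together with the anticommutation relations $\{d,b\} = \{d^*,b\} = 0$ proved in the previous lemma, and then invoke the spectral theorem for a commuting family of self-adjoint operators.

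First I would point out symmetry: $R^* = (dd^*)^* = dd^* = R$ and $S^* = (d^*d)^* = d^*d = S$ are manifestly self-adjoint, while $b$ is symmetric because $b = D - (d+d^*)$ is the difference of two symmetric operators, and hence so is $V = b^2$. Trivially $b$ commutes with $V = b^2$. The crux is commutation of $b$ with $R$ and with $S$. For $R$, I would push $b$ through $d^*$ and then through $d$ using the anticommutations: $Rb = dd^*b = -dbd^* = bdd^* = bR$, where the two sign flips cancel. The computation for $S$ is symmetric: $Sb = d^*db = -d^*bd = bd^*d = bS$. Commutation of $R$ and $S$ is even easier, since $RS = dd^*d^*d = 0 = d^*ddd^* = SR$ by $(d^*)^2 = d^2 = 0$. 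Finally $V$ commutes with $R, S$ because it is a polynomial in $b$, which already does.

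Having verified that $\{b, R, S, V\}$ is a commuting family of self-adjoint operators on a finite-dimensional inner product space, I would conclude by the standard spectral theorem that they can be simultaneously diagonalized by a common orthonormal basis of eigenvectors. (In the manifold case one can restrict to finite-dimensional joint eigenspaces of $L$ or, equivalently, apply the spectral theorem for a commuting family of bounded self-adjoint operators to obtain a common spectral decomposition.)

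The only step that is not wholly mechanical is the repeated use of the anticommutation relations to swap $b$ past $d$ and $d^*$ with a double sign flip. This is not really an obstacle, but it is the one place where the information gathered in the previous lemma is genuinely used; once those relations are in hand everything else is bookkeeping.
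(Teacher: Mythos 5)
Your proof is correct, but it takes a different route from the paper's. The paper argues by propagating an identity along the flow: it notes $[b,dd^*]=0$ at $t=0$ (where $b=0$), then differentiates the commutator using the equations of motion $b'=dd^*-d^*d$ and $(dd^*)'=(db-bd)d^*+d(bd^*-d^*b)$, and observes that $\frac{d}{dt}[b,dd^*]$ is itself a linear expression in $[b,dd^*]$, so the commutator stays zero for all $t$; the commutation $[R,S]=0$ is obtained, as in your argument, from $RS=SR=0$. You instead derive $[b,R]=[b,S]=0$ purely algebraically at each fixed time, by sliding $b$ through $d$ and $d^*$ with two sign flips using the anticommutation relations $\{d,b\}=\{d^*,b\}=0$ from the preceding lemma. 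Your version is cleaner and arguably more robust: the paper's differentiation of $(dd^*)'$ already has to invoke those same anticommutation relations to simplify the right-hand side, so nothing is saved by the ODE argument, and your computation makes the logical dependence on the Anticommutations lemma explicit rather than implicit. One small point worth making precise: the previous lemma states $\{d,b\}=0$ but not literally $\{d^*,b\}=0$; you should note that the latter follows by taking adjoints, since $b$ is self-adjoint and $(db+bd)^*=bd^*+d^*b$. With that remark added, your appeal to the spectral theorem for a commuting family of self-adjoint operators finishes the simultaneous diagonalization claim, which the paper's proof leaves tacit.
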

\begin{proof}
All these operators only depend via a time change on $\beta$. 
$S=d d^*$ and $S=d^* d$ commute because their products $RS,SR$ are zero.
Initially, $[b,d d^*]=0$. Now differentiate. We have $[b',d d^*]=0$. We also
have $[b,(d d^*)' = [b,(db-bd) d^* + d (b d^* -d^*b)] = [b,-bd d^*-d d^* b]=0$.
\end{proof}

\begin{coro}
The Laplace-Beltrami operator $L(t)$ does not move.
\end{coro}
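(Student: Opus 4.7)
The corollary is a short consequence of material already established, so my plan is to give a direct two-line derivation by differentiating $L=D^2$ under the Lax flow and then verifying that $B$ commutes with $L$.

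First I would compute
\[
\dot L \;=\; \dot D\, D + D\, \dot D \;=\; [B,D]\,D + D\,[B,D] \;=\; [B,D^2] \;=\; [B,L],
\]
so the corollary reduces to showing $[B,L]=0$ at all times. Equivalently, using the isospectral lemma $D(t)=U(t)D(0)U(t)^*$, this is exactly the condition that $U(t)$ commutes with $L(0)$, i.e.\ $\frac{d}{dt}\bigl(U L U^*\bigr)=0$.

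Next I would expand $L=D^2$ using the Dirac decomposition $D=d+d^*+b$. Since $d^2=(d^*)^2=0$ and the anticommutations $\{d,b\}=\{d^*,b\}=0$ (proved in the previous lemma) are preserved by the flow, the cross terms in the expansion drop out and one obtains
\[
L \;=\; d d^* + d^* d + b^2 \;=\; R + S + V.
\]
From the preceding lemma, $R,S,V$ all commute pairwise and each commutes with $b$, so $[b,L]=0$ is immediate. It remains to show $[d,L]=0$ and $[d^*,L]=0$. Using $d^2=0$,
\[
[d,R] \;=\; -\,dd^*d, \qquad [d,S] \;=\; dd^*d,
\]
which cancel, while $[d,V]=[d,b^2]=0$ follows from $\{d,b\}=0$ (so $d\,b^2 = -b\,d\,b = b^2 d$). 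The identical calculation for $d^*$ gives $[d^*,L]=0$. Combining with $[b,L]=0$ and $B=d-d^*+i\beta b$, we conclude $[B,L]=0$ for all $t$ and all $\beta$, hence $\dot L=0$.

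I do not expect any real obstacle: everything is algebraic and every ingredient has already been set up in the Anticommutations lemma and the commuting-operators lemma. The only care needed is to make sure the anticommutation identities $\{d(t),b(t)\}=\{d^*(t),b(t)\}=0$ are invoked in their dynamical form (valid for all $t$, not only at $t=0$), which is precisely what the earlier lemma provides.
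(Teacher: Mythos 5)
Your proposal is correct and follows essentially the same route as the paper: differentiate $L=D^2$ to get $\dot L=[B,L]$ and then verify $[B,L]=0$ from the previously established algebraic relations. Your term-by-term check that $[d,L]=[d^*,L]=[b,L]=0$ via $L=R+S+V$ is a bit more explicit than the paper's one-line appeal to $B^2=-M$ and the commutation of $b$ with $M$ and $L$, but it rests on the same lemmas (the preserved anticommutations and the commuting family $b,R,S,V$) and fills in the details cleanly.
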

\begin{proof}
$(D^2)' = D' D + D D' = [B,D] D + D [B,D] = B L - L B = [B,L]$. 
Since $B^2 = -M$ and $b,M$ and $b,L$ commute, we have $[B,L]=0$. 
\end{proof}

\begin{propo}
We have $L=M+V=R+S+V$, where $M,L,V,R,S$ all pairwise commute, are symmetric and 
$M,V$ both have nonnegative eigenvalues. 
The operators $b,M,L,V$ all have the same kernel for $t>0$.
\end{propo}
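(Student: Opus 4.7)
The algebraic identities come first. For $L=M+V$, expand $L=D^2=(C+b)^2=C^2+\{C,b\}+b^2$; the cross-term vanishes because $\{d,b\}=0$ (Anticommutation Lemma) together with its adjoint $\{d^*,b\}=0$ (since $b$ is self-adjoint), giving $\{C,b\}=\{d,b\}+\{d^*,b\}=0$. The further split $M=R+S$ is just $M=C^2=(d+d^*)^2=d^2+dd^*+d^*d+(d^*)^2=R+S$, using $d(t)^2=0=(d^*(t))^2$ from the Deformed Cohomology Proposition. Pairwise commutativity of $b,R,S,V$ and their simultaneous diagonalizability are the content of the preceding lemma, so $M=R+S$ commutes with all of $b,R,S,V$ and $L=M+V$ commutes with all four. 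Each operator is symmetric (either of the form $TT^*$ or the square of a self-adjoint operator), and $M=C^2,V=b^2$ are nonnegative as squares of self-adjoint operators.

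For the kernel statement, the containments that hold at every time come first: $\ker V=\ker b$ because $b$ is self-adjoint, and $\ker L=\ker M\cap\ker V$ because $L=M+V$ with $M,V\ge 0$ commuting, so $\langle Lx,x\rangle=\|Cx\|^2+\|bx\|^2=0$ forces both summands to vanish. On the harmonic space $\ker L$, the identity $M+V=0$ together with $M,V\ge 0$ forces $M=V=b^2=0$, so also $b=0$ there; hence $\ker L$ is always contained in each of $\ker b,\ker M,\ker V$. What remains is the reverse inclusion for $t>0$: on every $L$-eigenspace $E_\lambda$ with $\lambda>0$, both $M|_{E_\lambda}$ and $b|_{E_\lambda}$ must be invertible.

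The plan for this dynamical step is to restrict the Lax system to $E_\lambda$. In the boson/fermion splitting I would write
\[
D|_{E_\lambda}=\begin{pmatrix} B_b & \tilde C^* \\ \tilde C & B_f \end{pmatrix},
\]
so that the constraint $D^2=\lambda I$ reads $B_b^2+\tilde C^*\tilde C=\lambda I$. The restricted Lax system, simplified by the anticommutation $B_f\tilde C=-\tilde C B_b$, becomes
\[
\dot B_b=-2\tilde C^*\tilde C,\qquad \dot{\tilde C}=2\tilde C B_b.
\]
Setting $A=\tilde C^*\tilde C$, one computes $\dot A=2\{A,B_b\}$ and then verifies that $[A,B_b]$ satisfies the homogeneous linear ODE $\frac{d}{dt}[A,B_b]=2\{[A,B_b],B_b\}$; since $[A,B_b](0)=0$, uniqueness forces $[A,B_b]\equiv 0$. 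Parameterizing $B_b=-\sqrt{\lambda}\sin\Theta$, $A=\lambda\cos^2\Theta$ with $\Theta(0)=0$, the system collapses to the scalar ODE $\dot\theta=2\sqrt{\lambda}\cos\theta$, whose explicit solution $\theta(t)=2\arctan(e^{2\sqrt{\lambda}t})-\pi/2$ lies strictly in $(0,\pi/2)$ for all $t>0$. Hence $B_b$ and $\tilde C$ are invertible on $E_\lambda^b$, and symmetrically $B_f,\tilde C^*$ on $E_\lambda^f$, so $b|_{E_\lambda}$ and $M|_{E_\lambda}$ are both invertible.

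The expected main obstacle is the commutativity preservation $[A,B_b]\equiv 0$ in the high-multiplicity case $\dim E_\lambda>2$. In the two-dimensional McKean-Singer-plane case the parameterization by a single angle is immediate, but for higher multiplicity one genuinely needs the linear ODE for the commutator together with uniqueness to propagate the trivial initial value. Once that is secured, the collapse to a scalar ODE on each joint eigenvalue, and the monotone opening of $\theta$ from $0$ toward $\pi/2$, yield the kernel equality uniformly across $E_\lambda$.
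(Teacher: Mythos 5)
Your algebraic half — the expansion $L=D^2=C^2+\{C,b\}+b^2$ with the cross term killed by $\{d,b\}=\{d^*,b\}=0$, the further split $M=R+S$ from $d^2=(d^*)^2=0$, commutativity via the preceding lemma, and symmetry/nonnegativity as squares of self-adjoint operators — is exactly the paper's argument. Where you genuinely diverge is the kernel statement. The paper disposes of it in one line, ``the operators $M,V,L$ all have the same kernel for $t>0$ because they commute,'' which is not by itself a proof: commuting nonnegative operators such as $\mathrm{diag}(1,0)$ and $\mathrm{diag}(0,1)$ have different kernels. The real content is precisely what you isolate, namely the invertibility of $M$ and $b$ on each eigenspace $E_\lambda$ with $\lambda>0$, and you supply an argument where the paper supplies none. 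Your route through the restricted Lax system is sound, and can be shortened: on $E_\lambda$ the constraint $D^2=\lambda I$ already gives $A=\tilde C^*\tilde C=\lambda I-B_b^2$, so $[A,B_b]=0$ is automatic and your commutator-ODE/uniqueness step is redundant; moreover $\dot B_b=-2A=2B_b^2-2\lambda I$ with $B_b(0)=0$ is solved by functional calculus as $B_b(t)=-\sqrt{\lambda}\tanh(2\sqrt{\lambda}\,t)\,I$, whence $A(t)=\lambda\,\mathrm{sech}^2(2\sqrt{\lambda}\,t)\,I>0$ for all $t$ and $B_b(t)$ is invertible for $t\neq 0$, with the fermion block handled symmetrically. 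This reproduces your angle picture ($\sin\theta=\tanh(2\sqrt{\lambda}t)$) and matches the explicit $K_2$ solution given later in the paper. In short: your proof is correct, follows the paper on the algebraic identities, and on the kernel claim is substantially more complete than the paper's own.
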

\begin{proof}
We have $L=d d^* + d^* d + b^2 + db + d^* b + b d + b d^*$, where the last part 
is zero.  We check that $\{d,b\}=\{d^*,b\}=0$ so that $[d d^*,b] = [d^* b,b] = 0$.
The operators $M,V,L$ all have the same kernel for $t>0$ because they 
commute. 
\end{proof}

Here comes the key lemma: 

\begin{lemma}
$O=b d d^*$ is symmetric with no negative eigenvalues and
$Q=b d^* d$ is symmetric with no positive eigenvalues.
Both operators $O,Q$ are real for all $t$ and all $\beta$. 
\end{lemma}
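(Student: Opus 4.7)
I would first dispose of symmetry using the preceding lemma: $b$ and $R := dd^*$ commute and are each Hermitian, so $O^* = R^* b^* = R b = b R = O$, and likewise $Q = bS$ with $S := d^*d$ is symmetric.

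For the sign of the spectrum, the idea is to reduce the operator statement to a scalar logistic ODE on each joint eigenvector. Because $b, R, S, L$ pairwise commute (prior lemma) and $RS = dd^* d^* d = 0$, they can be simultaneously diagonalized. A common eigenvector carries eigenvalues $r, s, \beta, \lambda$ for $R, S, b, L$ respectively, with $r, s \geq 0$, $rs = 0$, and $r + s + \beta^2 = \lambda$ conserved (since $L = R + S + b^2$ is time-independent). Using the Lax equations together with the anticommutations $\{b, d\} = \{b, d^*\} = 0$, the simplifications $dbd^* = -bR$ and $d^*bd = -bS$ give
\[
R' = -4\, bR, \qquad S' = 4\, bS, \qquad b' = 2(R - S).
\]
Taking diagonal matrix elements in the joint basis (Hellmann--Feynman) yields the scalar system $r' = -4 r\beta$, $s' = 4 s\beta$, $\beta' = 2(r - s)$. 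In the case $r > 0, s = 0$, conservation forces $\beta' = 2(\lambda - \beta^2)$; with $\beta(0) = 0$ this integrates to $\beta(t) = \sqrt{\lambda}\, \tanh(2\sqrt{\lambda}\, t)$, which has the sign of $t$, so the $O$-eigenvalue $r\beta \geq 0$ for $t \geq 0$. The dual case $s > 0, r = 0$ gives $\beta(t) = -\sqrt{\lambda}\, \tanh(2\sqrt{\lambda}\, t)$, hence the $Q$-eigenvalue $s\beta \leq 0$. The case $r = s = 0$ is trivial. Thus $O$ has no negative eigenvalues and $Q$ has no positive ones.

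For reality, I would observe that the operator ODE for $(R, S, b)$ is independent of the deformation parameter $\beta$: in computing $R'$ and $S'$, the prefactors $(1 \mp i\beta)$ arising from $d'$ and $(d^*)'$ combine as $(1 - i\beta) + (1 + i\beta) = 2$, so the closed system for $R, S, b$ coincides with the $\beta = 0$ case. Starting from real initial data $R(0), S(0), b(0) = 0$, the operators $R, S, b$ stay real for all $t$ and all $\beta$, and hence $O = bR$ and $Q = bS$ are real.

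The main obstacle is making the Hellmann--Feynman step rigorous, because the joint eigenbasis of $R, S, b$ moves with $t$. I would handle this by restricting to the fixed eigenspaces $E_\lambda$ of the conserved operator $L$ (which $R, S, b$ preserve since they commute with $L$) and invoking analytic perturbation theory for a smooth commuting family of Hermitian operators on each finite-dimensional $E_\lambda$, which is straightforward in the graph setting.
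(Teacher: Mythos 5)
Your argument is correct and reaches the conclusion by a genuinely different route than the paper. The paper stays with the single operator $O=bdd^*$ and applies the Rayleigh (Hellmann--Feynman) perturbation formula to its eigenvalues: computing $O'=b'R+bR'=R^2+b(dd^*)'=R^2-cbO$ (using $SR=0$), it observes that at an eigenvalue $\lambda=0$ one gets $\lambda'=\langle v,R^2v\rangle=\|Rv\|^2\ge 0$, so eigenvalues starting at $0$ (since $b(0)=0$) cannot cross into the negative half-line --- a barrier/invariance argument. You instead diagonalize the whole commuting family $b,R,S,L$ simultaneously, reduce to the scalar system $r'=-4r\beta$, $s'=4s\beta$, $\beta'=2(r-s)$ with the constraints $rs=0$ and $r+s+\beta^2=\lambda$, and integrate it in closed form to get $\beta=\pm\sqrt{\lambda}\tanh(2\sqrt{\lambda}\,t)$. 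What your approach buys is more information: explicit eigenvalue trajectories (recovering the $\tanh$/$\mathrm{sech}^2$ profile that the paper only exhibits for $K_2$), strict positivity of the nonzero $O$-eigenvalues, and a transparent proof that the whole $(R,S,b)$ path is $\beta$-independent and real (which the paper proves separately in its later ``time change'' lemma). It also makes explicit something the paper glosses over: the sign statement genuinely holds only for $t\ge 0$, with $O$ and $Q$ exchanging roles for $t<0$ (consistent with $b(-t)=-b(t)$); this is harmless for the downstream corollary $\tr(b^2)'\ge 0$, which likewise is a statement about $t\ge 0$. The cost of your route is the simultaneous-diagonalization step for a moving commuting Hermitian family, which you correctly identify as the technical obstacle; your fix (restrict to the finite-dimensional fixed eigenspaces $E_\lambda$ of the conserved $L$ and use analytic perturbation theory, i.e.\ Rellich applied successively to $b$ and then to $R,S$ on the analytic eigenspaces of $b$) is standard and works in the graph case, but it is machinery the paper's barrier argument does not need, since the latter only requires the perturbation formula for the single family $O(t)$. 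One small caveat on the paper's side that your explicit solution sidesteps: the inequality $\lambda'\ge 0$ at $\lambda=0$ is not strict, so the barrier argument needs a one-sided Gronwall comparison ($\lambda'\ge -C|\lambda|$) to be fully rigorous; your closed-form solution avoids this entirely.
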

\begin{proof}
The computations are similar for $O$ and $Q$ and we only write it out for $O$. 
The matrix $O$ is symmetric because $O^* = d d^* b = b d d^* = O$. 
The eigenvalues of $O,Q$ are zero at $t=0$ because $b$ is zero at $t=0$. We look now at how eigenvalues 
of $O$ change and show that at $0$, the eigenvalues of $O$ can not move to the left and the eigenvalues of $P$
do not move to the right. \\
The Rayleigh perturbation formula tells $\lambda' = \langle w O',v \rangle$,
where $v$ is a unit eigenvector of $\lambda$ and $w$ the dual vector $(O^*)' w$ to $v$. By symmetry of $O$ we have 
$\lambda' = \langle  v,O'v \rangle$. We know that $b$ and $d d^*$ have the same eigenvectors for $t>0$
because they commute. \\
First, we get $(d d^*)' = (d b-b d) d^* + d (b d^* - d^* b) = -2 b d d^*$ and from that,
$O' = (d d^*)^2 + b (d d^*)' = (d d^*)^2 - 2 b^2 d d^* = (d d^*)^2 - 2 b O$. 
Assume $v$ is an eigenvector to $O$ to the eigenvalue $\lambda$, then 
$\lambda' = \langle v,O'v \rangle = \langle v,((d d^*)^2 v -2bO) v\rangle$. This means that 
$\lambda' \geq 0$ if $\lambda=0$ meaning that $\lambda$ can not cross to the negative side. 
\end{proof}

We have $\tr(b)=0$ for all $t$ because $\tr(d d^* - d^* d)=0$. 
The next trace of $b$, which is $\tr(b^2)$ turns out to be a Lyapunov function: 

\begin{lemma}[A Lyapunov function] 
$\tr(b^2)$ increases monotonically so that $\tr(b^2)' \geq 0$. 
Initially, for $t=0$, we have $\tr(b^2)'=0$ and asymptotically, we have $\tr(b^2)=\tr(L)$
so that $\tr(b^2)'$ will have a maximum somewhere. 
\end{lemma}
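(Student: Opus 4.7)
The plan is to differentiate $\tr(b^2)$ in closed form and then apply the key lemma that $O=bdd^*$ and $Q=bd^*d$ have, respectively, no negative and no positive eigenvalues.

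First I would write $\frac{d}{dt}\tr(b^2)=2\tr(b\,b')$. Using the Lax equations already derived, $b'=dd^*-d^*d$, so
\[
\tfrac{d}{dt}\tr(b^2)=2\tr(b\,dd^*)-2\tr(b\,d^*d)=2\tr(O)-2\tr(Q).
\]
Now the previous lemma identifies $O$ as a symmetric operator with no negative eigenvalues and $Q$ as a symmetric operator with no positive eigenvalues. Hence $\tr(O)\geq 0$ and $\tr(Q)\leq 0$, and therefore $\frac{d}{dt}\tr(b^2)\geq 0$. This is the monotonicity claim.

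For the initial condition at $t=0$, the operator $b$ vanishes by definition of the initial data $D(0)=d+d^*$, so $O(0)=b(0)dd^*=0$ and $Q(0)=b(0)d^*d=0$, giving $\frac{d}{dt}\tr(b^2)|_{t=0}=0$. For the asymptotic value I would invoke the conservation law from the previous proposition, $L=M+V$, which together with the fact that $\tr(L)$ is constant (since $L$ itself does not move) gives
\[
\tr(b^2)=\tr(V)=\tr(L)-\tr(M).
\]
Since $M=C^2=(d+d^*)^2$ is a nonnegative operator and the main theorem asserts that $C(t)\to 0$ in the strong operator topology (hence on the finite-dimensional ambient space $M(t)\to 0$ in operator norm, and in particular $\tr(M(t))\to 0$), we obtain $\tr(b^2)\to\tr(L)$ as $|t|\to\infty$. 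Combined with the monotone increase from the value $\tr(b^2(0))=0$, the derivative must therefore attain a maximum at some finite intermediate time, giving the inflation-bump picture.

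The only delicate step is the sign analysis through $\tr(O)$ and $\tr(Q)$, but this is already packaged in the preceding key lemma, so once the derivative of $\tr(b^2)$ is expressed as $2\tr(O)-2\tr(Q)$ the proof is immediate. The asymptotic statement formally relies on the strong-operator convergence $C(t)\to 0$ from the main theorem, so in a fully self-contained exposition one would want to order the presentation so that this convergence (or at least the weaker statement that $\tr(M)\to 0$, which follows already from $\tr(b^2)$ being bounded above by $\tr(L)$ and monotone, hence convergent, forcing $\tr(M)$ to converge as well) is available at this point.
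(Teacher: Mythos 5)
Your proof is correct and follows essentially the same route as the paper: differentiate to get $\tfrac{d}{dt}\tr(b^2)=2\tr(b\,dd^*)-2\tr(b\,d^*d)$ and invoke the key lemma on the signs of the eigenvalues of $O=bdd^*$ and $Q=bd^*d$. Your additional remarks on the initial condition and on the ordering of the asymptotic statement (which the paper leaves implicit in this lemma) are sound.
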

\begin{proof}
Use that $d/dt \tr(b^2) = 2 \tr(b b')=2 \tr(b d d^* - b d^* d)$
and that $b d d^*$ and $-b d^* d$ have both only nonnegative eigenvalues. 
\end{proof}

\begin{coro}
$\tr(M') \leq 0$.
\end{coro}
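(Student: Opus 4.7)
The plan is to exploit the conservation law $L(t)=M(t)+V(t)$ established in the preceding proposition, combined with the Lyapunov property of $\tr(b^2)$ just proved. Since all of these operators are trace-class (finite matrices in the graph case), the trace is a bounded linear functional that commutes with differentiation in $t$.

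First I would differentiate the identity $L=M+V$ with respect to $t$. The previous corollary states that $L(t)$ does not move, so $L'=0$ and hence $M'=-V'=-\,(b^{2})'$ as operators. Taking traces gives $\tr(M') = -\tr((b^{2})')$.

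Next I would invoke the Lyapunov lemma, which asserts that $\tr(b^{2})' \geq 0$ for all $t$, with equality at $t=0$ and asymptotically. Combined with the previous step this immediately yields $\tr(M') \leq 0$, with equality precisely at the endpoints $t=0$ and $|t|\to\infty$.

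There is essentially no obstacle here: the argument is a one-line consequence of isospectrality of $L$ plus the monotonicity of $\tr(b^{2})$. The only thing to be slightly careful about is that $\tr$ and $d/dt$ commute, which is immediate in the finite-dimensional graph setting, and that the decomposition $L=M+V$ genuinely holds as operators (not merely up to commutator terms), which was verified in the proposition preceding the key lemma using $\{d,b\}=\{d^{*},b\}=0$. Hence the corollary follows without further computation.
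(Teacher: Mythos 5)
Your argument is correct and essentially identical to the paper's: both reduce $\tr(M)'$ to $-\tr(b^2)'$ via the constancy of $L$ and then invoke the Lyapunov lemma $\tr(b^2)'\geq 0$. The only cosmetic difference is that you use the operator identity $L=M+V$ (from the earlier proposition), whereas the paper expands $L=M+Cb+bC+b^2$ and kills the cross terms by noting $\tr(Cb)=\tr(bC)=0$; both routes rest on the same anticommutation facts.
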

\begin{proof} 
We have $\tr(bC) = \tr(Cb)=0$ because these matrices do not have anything in the diagonal.
Because $L$ does not move, we have $\tr(L)'=0$ and 
$$  \tr(M(t))' =\tr(L-C b - b C - b^2)'= -\tr(b^2)' \leq 0 \; . $$
\end{proof}

\begin{coro}[Attractor] 
$M(t)$ has its spectrum in $[0,a(t)]$ with $a(t) \to 0$ for $|t| \to \infty$.
$d(t)$ converges to zero and $b(t)$ converges to an operator $V$ satisfying $V^2=L$.
In the graph case, the convergence is in norm, in the manifold case in the 
strong operator topology. 
\end{coro}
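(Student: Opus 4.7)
The plan is to combine the monotonicity from the previous corollary with LaSalle's invariance principle and a second differentiation of the nonnegative Lyapunov quantity $O=bdd^*$. Let $\phi(t):=\tr(M(t))$. By the previous corollary, $\phi$ is non-increasing and bounded below by $0$, so $\phi(t)\to m_\infty\geq 0$. In the graph case the orbit $\{D(t)\}$ is bounded by isospectrality, hence precompact, and the $\omega$-limit set $\Omega$ is a nonempty compact flow-invariant set on which $\phi\equiv m_\infty$. Along any orbit in $\Omega$ one therefore has $\phi'\equiv 0$, which through the Lyapunov identity $\phi'=-\tr(b^2)'=-c(\tr(O)-\tr(Q))$ and the key sign lemma ($O\geq 0$ and $-Q\geq 0$ as symmetric operators) forces $\tr(O)=\tr(-Q)=0$ and thus $O\equiv Q\equiv 0$ on $\Omega$.

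The decisive step is to upgrade the weak conclusion $bdd^*\equiv bd^*d\equiv 0$ on $\Omega$ to $M\equiv 0$ on $\Omega$, since the former alone still admits spurious equilibria in which $b$ and $M$ live on complementary spectral subspaces of $L$. I would differentiate $O$ once more along the orbit in $\Omega$: using $d^2=(d^*)^2=0$, the anticommutations $\{d,b\}=\{d^*,b\}=0$, and $[b,dd^*]=0$, one obtains $(dd^*)'=-4\,bdd^*$ and hence $O'=c_1(dd^*)^2-c_2\,b^2\,dd^*$ for positive constants $c_1,c_2$. Along the $\Omega$-orbit $O=0$ and hence $O'=0$, so $(dd^*)\bigl(dd^*-\gamma b^2\bigr)=0$ with $\gamma=c_2/c_1>0$. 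Simultaneously diagonalizing the commuting symmetric operators $dd^*$ and $b^2$, each eigenvalue $r_i$ of $dd^*$ satisfies both $r_i(r_i-\gamma b_i^2)=0$ and $b_i r_i=0$; the branch $r_i=\gamma b_i^2$ is killed by $b_i r_i=0$, forcing $r_i=0$. Hence $dd^*\equiv 0$ on $\Omega$, and the symmetric computation with $Q$ gives $d^*d\equiv 0$ on $\Omega$. Therefore $M\equiv 0$ on $\Omega$ and $m_\infty=0$.

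The remaining conclusions follow cleanly. In finite dimensions $M(t)\geq 0$ together with $\tr(M(t))\to 0$ forces the operator norm $\|M(t)\|\to 0$, which is exactly the spectral inclusion $\mathrm{spec}(M(t))\subset[0,a(t)]$ with $a(t)\to 0$. Then $\|d(t)\|^2=\|d^*d(t)\|\leq\|M(t)\|\to 0$ gives $d(t)\to 0$ in norm, and $b(t)^2=L-M(t)\to L$ in norm. For the convergence of $b(t)$ itself to a fixed $V$ with $V^2=L$ one uses that $b(t)$ commutes with $L$ and so preserves each $L$-eigenspace $H_\lambda$, that $\|b'(t)\|=O(\|M(t)\|)\to 0$, and the explicit integrability of the flow on each invariant McKean-Singer plane inside $H_\lambda$: the planewise 2D ODE is explicitly solvable and its angular coordinate has a unique limit, and assembling these yields convergence of $b(t)$ in norm. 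The Riemannian case reduces to the graph case on each finite-dimensional spectral subspace of $L$ below any cutoff $\lambda_0$; strong operator topology convergence on the full Hilbert space then follows by splitting any test vector into low- and high-energy components and letting $\lambda_0\to\infty$.

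The main obstacle I expect is precisely the passage from $bM\equiv 0$ to $M\equiv 0$ on $\Omega$: LaSalle combined with the sign lemma gives only the former, and does not directly exclude equilibria where $b$ and $M$ partition the $L$-spectrum. Extracting the higher-order relation $(dd^*)^2=\gamma b^2 dd^*$ by differentiating $O$ a second time is what closes this gap. A secondary subtlety is that when $L$ has eigenvalues of multiplicity greater than one the set of symmetric square roots of $L$ is a positive-dimensional Grassmannian variety, so the convergence $b(t)^2\to L$ alone does not imply convergence of $b(t)$; pinning down the specific limit $V$ genuinely requires the MKS plane integrability rather than any generic square-root argument.
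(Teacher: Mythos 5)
Your proposal is correct and follows the same skeleton as the paper's proof, but the paper's argument is only three sentences long --- ``$M$ is positive, the trace decreases, and converging to anything other than $0$ would give another equilibrium point'' --- so essentially all of the substance in your write-up is material the paper leaves implicit or does not supply at all. The most valuable addition is exactly the step you flag as the main obstacle: the paper's Lyapunov computation only yields $b\,dd^*=b\,d^*d=0$ on the limit set, and the paper's later ``Lack of equilibria'' corollary asserts rather than proves that this forces $M=0$ (its argument that $b=0$ is ``not possible'' at finite times presupposes the very convergence being established). Your second differentiation closes this cleanly; in fact it is even simpler than you state, since on an invariant orbit with $O=b\,dd^*\equiv 0$ one also has $b^2dd^*=b\cdot O=0$, so $O'=c_1(dd^*)^2=0$ and the symmetry of $dd^*$ gives $dd^*=0$ directly, without the simultaneous diagonalization. (Minor point: the paper writes $(dd^*)'=-2b\,dd^*$ where the correct constant from $d'=2db$, $(d^*)'=2bd^*$ is $-4b\,dd^*$; your $c_1,c_2$ formulation sidesteps this harmless slip.) You are also right that convergence of $b(t)$ itself --- as opposed to $b(t)^2\to L$ --- is not addressed by the paper's proof at all and genuinely needs the planewise scattering structure when $L$ has degenerate eigenvalues; the paper relies on the explicit $K_2$-type solution on each McKean--Singer plane for this, which is the route you indicate. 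Your reduction of the manifold case to finite-dimensional spectral blocks matches the paper's stated setup.
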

\begin{proof}
$M(t)=C(t)^2+B(t)^2$ has positive or zero eigenvalues. We have seen that the trace decreases. 
If we would converge to something different from $0$ then we would have an other equilibrium point.
\end{proof}

We call $g(t)=D(t) f$ the super partner of $f$ if $f$ is an 
eigenfunction of $L$ to a nonzero eigenvalue $\lambda$. By definition, the super partner of
a super partner is a multiple of the original $f$. Because $-B^2 = C^2 = M$, we could 
also have looked at $f'=B f$, which is also perpendicular to $f$ initially. 

\begin{coro}[Superpartners]
If $f$ is an eigenvector of $L$, then both $D(t) f$ and $B(t) f$ are eigenvectors of $L$.
\end{coro}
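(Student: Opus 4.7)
The plan is to reduce the claim to two commutation relations that have already been established earlier in the text, namely $[L,D(t)]=0$ (which is immediate from $L=D(t)^2$) and $[B(t),L]=0$ (which appeared in the proof that $L$ does not move, where one computes $(D^2)' = [B,L]$ and concludes this bracket vanishes from the previous lemmas on commutation of $b$ with $M$ and with $L$). Given these, eigenspaces of $L$ are automatically invariant under both $D(t)$ and $B(t)$.

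First I would recall that by the Corollary stating the Laplace--Beltrami operator does not move, $L(t)=L(0)=L$ is a fixed operator. Hence the equation $Lf=\lambda f$ is unambiguous and time independent. Next, since $L=D(t)^2$, trivially $LD(t)=D(t)L$, so if $Lf=\lambda f$ then
\[
L\bigl(D(t)f\bigr)=D(t)Lf=\lambda\,D(t)f,
\]
which is the claim for the super partner $g(t)=D(t)f$. For the $B(t)$ super partner, I would invoke $[B(t),L]=0$ (recorded implicitly in the proof of $L'=0$) to obtain $LB(t)f=B(t)Lf=\lambda B(t)f$ in exactly the same way.

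If one prefers an a priori derivation of $[B,L]=0$ rather than citing it: write $B=d-d^*+i\beta b$ and $L=dd^*+d^*d+b^2$, and use the anticommutation lemma $\{d,b\}=\{d^*,b\}=0$ together with $d^2=(d^*)^2=0$ to check each bracket term by term. Then the same two-line argument as above goes through.

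The only mild subtlety, and really the only candidate for an obstacle, is keeping track that the lemma $[B,L]=0$ holds for all $\beta$ and all $t$, not merely at $t=0$; but this is exactly what the anticommutation lemma \emph{Anticommutations} guarantees, since all the ingredients of $B$ anticommute with $b$ and commute in the right way with $dd^*+d^*d$. Once that is in hand, the Corollary is a one-line consequence of the commutation of $L$ with each of $D(t)$ and $B(t)$.
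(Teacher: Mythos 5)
Your proposal is correct and follows essentially the same route as the paper: the paper's proof also reduces the statement to the facts that $D(t)$ and $B(t)$ both commute with $L$ (the former from $L=D(t)^2$, the latter from the lemma showing $L$ does not move) and then applies the one-line computation $L D(t) f = D(t) L f = \lambda D(t) f$ and likewise for $B(t)$. Your extra care about $[B,L]=0$ holding for all $\beta$ and $t$ is a reasonable elaboration but not a departure from the paper's argument.
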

\begin{proof}
This follows from the fact that both $B$ and $D$ commute with $L$. 
$L f = \lambda f$, then $L D(t) f = D(t) L f = D(t) \lambda f = \lambda D(t) f$. 
In the same way, $L B f = B Lf = \lambda Bf$. 
\end{proof}

\begin{coro}[McKean-Singer planes]
The vectors $f(t),g(t) =D(t) f(t)$ stay in the plane spanned by 
$f(0),g(0)$ for all times $t \in \R$. 
\end{coro}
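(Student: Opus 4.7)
Since $f(t)=U(t)f(0)$ evolves by $U'=BU$ (isospectral lemma) and $D(t)=U(t)D(0)U(t)^*$, one computes $g(t)=D(t)f(t)=U(t)D(0)U(t)^*U(t)f(0)=U(t)g(0)$. The assertion that $f(t),g(t)$ stay in $E:=\operatorname{span}(f(0),g(0))$ is therefore equivalent to $U(t)E=E$ for all $t$, which in turn reduces to showing $B(t)E\subseteq E$ for all $t$. I will establish this by proving that the three component operators $d(t),\, d(t)^*,\, b(t)$ each preserve $E$, so that the same holds for $B(t)=d(t)-d(t)^*+i\beta b(t)$.

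\textbf{Key argument.} Taking $f=f(0)$ to be a pure-type eigenvector (say bosonic with $d^*f=0$, so $g(0)=df$ and $E=\operatorname{span}(f,df)$), I first verify by direct computation that $d(0),\, d(0)^*,\, b(0)$ all preserve $E$: indeed $df\in E$, $d(df)=d^2f=0$, $d^*f=0$, $d^*(df)=Lf=\lambda f\in E$, and $b(0)=0$. I then consider the linear subspace $\mathcal{S}$ of all operator triples $(d,d^*,b)$ simultaneously preserving $E$, and observe that $\mathcal{S}$ is invariant under the Lax vector field: whenever $d$ and $b$ both preserve $E$, the identity $[d,b]v=d(bv)-b(dv)$ shows $[d,b]v\in E$ for $v\in E$, and similarly for $[b,d^*]$ and $[d,d^*]$. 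Since the Lax equations $d'=[d,b],\ (d^*)'=[b,d^*],\ b'=[d,d^*]$ have unique solutions in finite dimensions, a trajectory starting in $\mathcal{S}$ remains in $\mathcal{S}$ for all $t$. Hence $d(t),d(t)^*,b(t)$, and therefore $B(t)$, preserve $E$ at every time; integrating $U'=BU$ with $U(0)=I$ gives $U(t)E=E$, so $f(t),g(t)\in E$ as claimed.

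\textbf{Main obstacle.} The argument requires $f$ to be of pure type, either bosonic with $d^*f=0$ or fermionic with $df=0$ (or the analogous complementary cases), so that the initial triple actually sits in $\mathcal{S}$. If instead $f$ has nonzero components in both $\ker(d)^\perp$ and $\ker(d^*)^\perp$, then $d^*f$ lies in a form-degree disjoint from $\operatorname{span}(f,df+d^*f)$, so $d(0)^*$ fails to preserve $E$ and the invariance argument never gets started. In the paper's setting---where the McKean--Singer plane is built from a bosonic or fermionic eigenfunction $f$ of $L$ paired with $Df$---this purity hypothesis is naturally present, and the commutation lemmas $\{d,b\}=\{d^*,b\}=0$ together with the Lax equations ensure the closure described above.
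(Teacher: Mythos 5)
Your proof is correct, and it takes a genuinely different and more complete route than the paper's. The paper disposes of this corollary in one line: since $B$ and $L$ commute and $Dg=D^2f=\lambda f$, ``the vector field takes values in the plane spanned by $f$ and $Df$.'' But $[B,L]=0$ only shows that $B(t)$ preserves the full eigenspace $\ker(L-\lambda)$, which is in general larger than the two-dimensional McKean--Singer plane, so the paper's argument leaves precisely the point you address unproved: why $B(t)$ (and hence $U(t)$) preserves the \emph{plane} $E=\operatorname{span}(f,Df)$ rather than merely the eigenspace. Your mechanism --- verify at $t=0$ that $d,d^*,b$ each map $E$ into $E$, observe that the set of triples with this property is a linear subspace to which the quadratic Lax vector field $d'=[d,b],\ (d^*)'=[b,d^*],\ b'=[d,d^*]$ is tangent, and invoke uniqueness of solutions --- is a clean propagation argument that actually delivers the two-plane invariance, and it is independent of the unitary $U$, so there is no circularity in then integrating $U'=BU$ inside $E$. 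Your flagged ``obstacle'' is also a genuine observation rather than a defect: if $f\in\Omega_p$ has both $df\neq 0$ and $d^*f\neq 0$, then $Bf=df-d^*f$ lies outside $\operatorname{span}(f,df+d^*f)$ for degree reasons, so the corollary as literally stated requires $f$ to be closed or coclosed (equivalently, taken from a Hodge-adapted eigenbasis); the paper never makes this hypothesis explicit, and your proof makes visible exactly where it is used. The only cosmetic points worth tightening are that global (not just local) existence on the invariant subspace follows from the a priori bound supplied by isospectrality, and that for $\beta\neq 0$ the Lax equations acquire factors $(1\mp i\beta)$, which scale but do not leave the commutator form and therefore do not disturb the tangency argument.
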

\begin{proof}
$f,g=D(t)f$ span a two dimensional McKean Singer plane. 
Since $B$ and $L$ commute, $D g$ is again a multiple of $f$ and $g$. 
That means the vector field takes values in the plane spanned by $f$ and $Df$. 
\end{proof} 

\begin{lemma}
If $f(0) \in \Omega_f$, then the angle between $D(t)f(0)$ and $\Omega_f$ 
converges to $0$ for $t \to \pm \infty$. 
\end{lemma}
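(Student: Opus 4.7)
The plan is to decompose $D(t)f(0)$ orthogonally along the splitting $\Omega = \Omega_f \oplus \Omega_b$ and then invoke the attractor corollary to show that the bosonic component vanishes in the limit while the total norm stays constant.

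First I would observe that the grading $D(t) = C(t) + b(t)$ is automatically adapted to the parity decomposition. Indeed, $d(t): \Omega_p \to \Omega_{p+1}$ and $d(t)^*: \Omega_p \to \Omega_{p-1}$ each reverse parity, so $C(t) = d(t) + d(t)^*$ maps $\Omega_f$ into $\Omega_b$. On the other hand, by the block-diagonal structure already established, $b(t)$ preserves each $\Omega_p$ and in particular preserves $\Omega_f$. Hence for $f = f(0) \in \Omega_f$ the orthogonal decomposition of $D(t)f$ into fermionic and bosonic pieces is simply
\[
D(t)f \;=\; \underbrace{b(t)f}_{\in\, \Omega_f} \;+\; \underbrace{C(t)f}_{\in\, \Omega_b},
\]
so that
\[
\sin\alpha(t) \;=\; \frac{\|C(t)f\|}{\|D(t)f\|}.
\]

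Next I would show that the denominator is constant in $t$. By the isospectral lemma, $D(t) = U(t) D(0) U(t)^*$ with $U(t)$ unitary, and by the corollary that $L$ does not move, $D(t)^2 = L$ for all $t$. Therefore
\[
\|D(t)f\|^2 \;=\; \langle f, D(t)^2 f\rangle \;=\; \langle f, Lf\rangle,
\]
which is independent of $t$. If $Lf = 0$, then $D(t)f = 0$ for all $t$ and the claim is vacuous (the angle is zero by convention). Otherwise $\|D(t)f\|$ is a fixed positive constant.

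For the numerator I would appeal directly to the Attractor corollary, which states that $M(t) = C(t)^2$ converges to $0$ in norm in the graph case (and strongly in the manifold case). Since $\|C(t)f\|^2 = \langle f, M(t)f\rangle$ and $M(t) \to 0$ strongly, $\|C(t)f\| \to 0$ as $|t|\to\infty$. Combining with the constancy of the denominator gives $\sin\alpha(t) \to 0$, which is the claim.

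The only real subtlety is to make sure the statement is read with the convention that harmonic $f$ (for which $Df \equiv 0$) are excluded or treated by convention, since then the angle $\alpha(t)$ is not defined; in the manifold case one should also note that strong, rather than norm, convergence of $C(t)$ is what is actually used, and that is exactly what the Attractor corollary provides for a fixed $f$.
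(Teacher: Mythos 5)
Your proof is correct and rests on exactly the same two ingredients as the paper's: the block-diagonality of $b(t)$ (so that $b(t)f\in\Omega_f$ and $C(t)f\in\Omega_b$) and the Attractor corollary giving $C(t)\to 0$. Your version is in fact a slightly cleaner, more quantitative rendering of the paper's argument -- the explicit formula $\sin\alpha(t)=\|C(t)f\|/\|D(t)f\|$ with the constant denominator $\langle f,Lf\rangle^{1/2}$, plus the explicit treatment of the harmonic case, tightens a step the paper leaves implicit via the limit operator $b(\infty)$ and the McKean--Singer plane.
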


\begin{proof} 
Let $\Sigma$ denote the two-dimensional eigen space spanned by $f(0),D(t) f(0)$. 
Because $D(t) f \to b_t f$ and $b_t$ preserves the $\Omega_b \oplus \Omega_f$ splitting
and the intersection of $\Sigma$ with $\Omega_f$ is one dimensional, the angle between $\Omega_f$
and $D(t) f(0)$ has to converge to zero. 
\end{proof}

Define $R=d d^*$ and $S=d^* d$.  

\begin{lemma}
The operator $R b$ and $S b$ both leave the McKean Singer planes invariant. 
Both $R b$ and $S b$ restricted to a two-dimensional McKean Singer plane have one 
nonzero and one zero eigenvalue.
\end{lemma}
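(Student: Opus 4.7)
The strategy is to choose $f$ inside the $\lambda$-eigenspace of $L$ to be a simultaneous eigenvector of the mutually commuting symmetric operators $R,S,V,b$ (which exists by the previous lemma). Write $Rf=rf$, $Sf=sf$, $bf=\beta f$, so $\lambda=r+s+\beta^{2}$. Because $L$ preserves form degrees we may further take $f\in\Omega_{p}$ for a single $p$. The McKean--Singer plane is $P=\mathrm{span}(f,Df)$ with $Df=df+d^{*}f+\beta f$, and the hypothesis is that $P$ is two-dimensional.

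The key step is a form-degree dichotomy: exactly one of $df$ and $d^{*}f$ vanishes. Since $[L,D]=0$, the vector $Df$ is again a $\lambda$-eigenvector of $L$; and since $L$ preserves each $\Omega_{q}$, every form-degree component of $Df$ must separately be a $\lambda$-eigenvector. Using $\{d,b\}=0$ (so that $b^{2}d=db^{2}$ and hence $Vd=dV$) one computes $R(df)=d(Sf)=s\,df$, $S(df)=d^{*}(d\,df)=0$, and $V(df)=\beta^{2}\,df$, whence $L(df)=(s+\beta^{2})\,df$. Comparing with the required $\lambda df=(r+s+\beta^{2})\,df$ forces $r=0$ whenever $df\neq 0$. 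The mirror calculation gives $s=0$ whenever $d^{*}f\neq 0$. Because $r=\|d^{*}f\|^{2}/\|f\|^{2}$ and $s=\|df\|^{2}/\|f\|^{2}$, the two alternatives cannot both hold, and both failing would collapse $P$ to a single dimension, contradicting the hypothesis.

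Assume without loss of generality that $d^{*}f=0$ and $df\neq 0$ (the other case is symmetric under $d\leftrightarrow d^{*}$). Then $P=\mathrm{span}(f,df)$, and using the same identities together with $b(df)=-d(bf)=-\beta\,df$ (again from $\{d,b\}=0$), the restrictions to $P$ in the basis $\{f,df\}$ are diagonal:
\begin{equation*}
R\big|_{P}=\begin{pmatrix}0 & 0\\ 0 & s\end{pmatrix},\qquad S\big|_{P}=\begin{pmatrix}s & 0\\ 0 & 0\end{pmatrix},\qquad b\big|_{P}=\begin{pmatrix}\beta & 0\\ 0 & -\beta\end{pmatrix}.
\end{equation*}
In particular $R,S,b$ each preserve $P$, and therefore so do $Rb$ and $Sb$, with
$Rb\big|_{P}=\mathrm{diag}(0,-s\beta)$ and $Sb\big|_{P}=\mathrm{diag}(s\beta,0)$.
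Each has one zero eigenvalue and one eigenvalue of magnitude $s|\beta|$, which is nonzero as soon as $b$ has moved off its initial value~$0$.

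The main obstacle is the form-degree dichotomy in paragraph two; once in hand, the rest reduces to a routine $2\times 2$ diagonal computation using the anticommutations $\{d,b\}=\{d^{*},b\}=0$ already established in the earlier lemma.
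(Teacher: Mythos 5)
Your proof is correct in substance but follows a genuinely different route from the paper's. The paper simply asserts the invariance of the McKean--Singer plane under $Rb$ and $Sb$, then shows the $2\times 2$ restriction has zero determinant, and finally rules out the zero matrix by a contradiction argument through harmonicity of $g=d^*df$; you instead pick $f$ to be a simultaneous eigenvector of the commuting family $R,S,V,b$ inside a fixed form degree and diagonalize everything explicitly on the plane. Your ``form-degree dichotomy'' is the real content, and it can be reached even faster from the earlier lemma: since $RS=SR=0$, the eigenvalues satisfy $rs=0$, and $r=\|d^*f\|^2/\|f\|^2$, $s=\|df\|^2/\|f\|^2$ then force exactly one of $df$, $d^*f$ to vanish on a nondegenerate plane. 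What your approach buys is an explicit spectral decomposition ($Rb|_P=\mathrm{diag}(0,-s\beta)$, $Sb|_P=\mathrm{diag}(s\beta,0)$) and an honest proof of the invariance claim that the paper only asserts; what it costs is that it only covers planes built from such simultaneous eigenvectors (for a degenerate $L$-eigenvalue an arbitrary eigenvector $f$ need not diagonalize $b$, and then $\mathrm{span}(f,Df)$ need not be $Rb$-invariant at all --- but the paper implicitly makes the same choice throughout). Two small points to tidy: your $\beta$ for the $b$-eigenvalue collides with the paper's coupling constant $\beta$, and the ``nonzero eigenvalue'' conclusion genuinely fails at $t=0$ where $b=0$; you flag this, which is more than the paper does (its step ``$Rbf=0$ implies $dd^*f=0$'' silently discards the possibility $bf=0$), but to close it one should note that $\beta'=\pm s\neq 0$ at $t=0$ together with the monotonicity coming from the key lemma on $O=bdd^*$ keeps the $b$-eigenvalue away from zero for all $t\neq 0$.
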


\begin{proof}
We know that $Rb,Sb$ and $B$ do leave the plane invariant so that $DB b=(R-S) b$ does. 
We need explicit kernel elements: if $f,g=Df$ span the McKean-Singer plane,
we have to show that $Rb f$ and $Rb Df$ are parallel or that the matrix
$$  \left[ \begin{array}{cc} \langle Rb f,f \rangle     & \langle Rb f,g \rangle \\
                             \langle Rb g, f \rangle  & \langle Rb g,g  \rangle 
           \end{array} \right] \;  $$
has a zero determinant $\langle Bf,Rf \rangle   \langle Bg,Rg \rangle   
                   =   \langle  bf.Rg \rangle   \langle bg,Rf \rangle$. 
Assume we had $Rb f=0$ and $Rb g=0$. Then  $d d^* f=0$ and $d d^* (d+b) f=0$
which implies $d d^* f=0$ and $d d^* d f=0$. But we know that $g = d^* d f$ can not be zero
because otherwise, $f$ would be a new harmonic which is not possible. But now $d^* g=0$ and $d g=0$
but this means $g$ is harmonic and so $d^* d d^* d g=0$ but then it is in the kernel of $d^*d$ also
because the matrix $d^* d$ is symmetric. This contradiction shows that $Rb$ can not be the zero matrix.
The computation for $Sb$ is similar. 
\end{proof}

None of the isospectral flows have an equilibrium point for which $d(t)$ is not equal to zero: 

\begin{coro}[Lack of equilibria]
$M(t)$ goes to zero for $|t| \to \infty$ but is not zero for finite $t$. 
\end{coro}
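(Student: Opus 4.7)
The plan is to run a short contradiction argument supported by uniqueness of ODE solutions, together with the already-established Attractor corollary. The decay half of the statement, $M(t)\to 0$ as $|t|\to\infty$, is exactly the content of that corollary, so nothing new is required there; only the non-vanishing at finite times needs work.

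For the non-vanishing half, suppose for contradiction that $M(t_0)=0$ at some finite $t_0$. Since $M(t)=C(t)^2$ is the square of the self-adjoint operator $C(t)=d(t)+d(t)^*$, positive semidefiniteness together with $M(t_0)=0$ forces $C(t_0)=0$, hence $d(t_0)=0$ and $d(t_0)^*=0$. I would then substitute $d=0$ into the explicit Lax equations already displayed in the paper,
\[
 d' = db-bd,\qquad (d^*)' = bd^*-d^*b,\qquad b' = dd^* - d^*d,
\]
and observe that every right-hand side vanishes identically. Thus $(d,b)=(0,b(t_0))$ is a genuine fixed point of the first-order system governing the Dirac deformation.

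Since the right-hand sides of this system are polynomial, hence locally Lipschitz, in the entries of $(d,b)$, Picard–Lindel\"of gives uniqueness of integral curves, and the only trajectory through the fixed point is the constant one. Propagating backward to $t=0$ then forces $d(0)=0$, which contradicts the standing assumption that $G$ has at least one edge. (The edgeless case is vacuous: then $d\equiv 0$ from the start and the whole flow is trivial.) This completes the plan.

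The one step that deserves care, rather than being a real obstacle, is the passage from the operator identity $M(t_0)=0$ to the operator identity $d(t_0)=0$. This rests solely on the fact that $M=C^2$ with $C$ self-adjoint, so $\ker M=\ker C$, and $C=d+d^*$ being a sum of adjoint pieces that live on orthogonal grade components forces both $d$ and $d^*$ to vanish separately. After that, the argument is purely the Picard–Lindel\"of uniqueness principle applied to the Lax system, so I expect no significant technical difficulty.
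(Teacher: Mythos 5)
Your argument is correct, but it takes a genuinely different route from the paper's. The paper works through its trace identities: it uses $\frac{d}{dt}\tr(M)=-\frac{d}{dt}\tr(b^2)=-2\tr(b d d^* - b d^* d)$, invokes the key lemma that $b d d^*$ and $-b d^* d$ are positive semidefinite, and then tries to rule out the vanishing of these operators at finite nonzero times via the preceding lemma on $Rb$ restricted to McKean--Singer planes (that last step is written rather informally). You instead observe that $M(t_0)=0$ forces $C(t_0)=0$ because $M=C^2$ with $C$ self-adjoint, hence $d(t_0)=d(t_0)^*=0$ by the grading, that any such point is an equilibrium of the polynomial Lax system, and that Picard--Lindel\"of uniqueness then propagates $d\equiv 0$ back to $t=0$, contradicting the presence of an edge. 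Your route needs only the block structure of the flow and ODE uniqueness, with no spectral positivity input; what the paper's route buys in exchange is the stronger quantitative fact that $\tr(M(t))$ is strictly monotone off $t=0$, not merely that $M(t)\neq 0$. Two points you should make explicit to close the argument fully: the step $C=0\Rightarrow d=0$ uses that $d(t)$ remains strictly block upper triangular (mapping $\Omega_p$ to $\Omega_{p+1}$), which holds because $b$ is block diagonal and $d'=db-bd$ preserves the grading; and for $\beta\neq 0$ the right-hand sides acquire factors $(1\mp i\beta)$ but still vanish when $d=0$, so the same equilibrium argument applies, with global existence of the bounded isospectral trajectory guaranteeing that backward propagation to $t=0$ is legitimate.
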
 

\begin{proof}
Lets look at the first flow. The higher flows are just a time change on each McKean-Singer plane. 
If $\tr(M')=0$ then $b d d^*=0$ which is only possible at $t=0$ or for $|t| \to \infty$. 
If $b$ were zero at some time $t_1$, then $M=L$ which is not possible because
$M(t)$ and $d(t)$ converge to $0$ and $b(t)$ converges to $V$ satisfying $V^2=L$. 
We have seen that $M(t)=C(t)^2$ has positive or zero eigenvalues and that the trace decreases.
If $\tr(b^2)'=0$ then $b d d^*$ and $-b d^* d$ must both be zero which is not possible. 
Here is an other argument: 
Since $[B,D]=[d-d^*,d+d^*]=2 d d^* - 2 d^* d$ is never zero as we can see
when we apply $[B,D]$ to $n$ forms, where $[B,D] = 2 d d^*$ is never zero because
this is $2L_n$ and this being zero would mean that the matrix is zero which is not
possible because the diagonal is not for $t=0$.
Because $[B,L]=0$ and $[L,D]=0$ we have $[B L^k,D] = 2 L^k( d d^*-d^* d)$ and also
this can not be zero as an operator.
\end{proof}

\begin{lemma}[Asymptotics]
We have $D(t) + D(-t) = 2C(t)$. In particular $D(\infty) = - D(-\infty)$. 
\end{lemma}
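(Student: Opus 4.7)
The plan is to exploit a discrete symmetry of the Lax system under the combined involution $(d,d^*,b)\mapsto (d,d^*,-b)$ together with time reversal $t\mapsto -t$. Writing $D(t)=C(t)+b(t)$ with $C(t)=d(t)+d(t)^*$, the identity $D(t)+D(-t)=2C(t)$ is equivalent to the two parity statements
\begin{equation*}
C(-t)=C(t),\qquad b(-t)=-b(t).
\end{equation*}
So I aim to prove these parities, then read off both conclusions.

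First, I would introduce the candidate solution $\tilde d(t):=d(-t)$, $\tilde{d^*}(t):=d^*(-t)$, $\tilde b(t):=-b(-t)$ and plug it into the Lax system recorded above the lemma:
\begin{align*}
d' &= (1-i\beta)(db-bd), \\
(d^*)' &= (1+i\beta)(bd^*-d^*b), \\
b' &= dd^*-d^*d.
\end{align*}
A short check, using only the chain rule and the sign flips in $\tilde b$, shows that $(\tilde d,\tilde{d^*},\tilde b)$ satisfies the identical system: the extra minus sign from $t\mapsto -t$ cancels the extra minus sign coming from $\tilde b=-b$ in the first two equations, while in the $b$-equation both minus signs appear on the left. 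This is the whole algebraic content; no new identities beyond the stated ODEs are required.

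Next I would compare initial data. At $t=0$ we have $b(0)=0$, so $\tilde b(0)=-b(0)=0=b(0)$, and of course $\tilde d(0)=d(0)$, $\tilde{d^*}(0)=d^*(0)$. By uniqueness of solutions to the (globally defined, as the operator norms stay bounded by $\|D(0)\|$) initial value problem, $\tilde d=d$, $\tilde{d^*}=d^*$, $\tilde b=b$. Therefore $d(-t)=d(t)$, $d^*(-t)=d^*(t)$ and $b(-t)=-b(t)$, giving both parities and hence $D(t)+D(-t)=2C(t)$.

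Finally, to conclude $D(\infty)=-D(-\infty)$, I would invoke the Attractor corollary which gives $C(t)\to 0$ in norm as $|t|\to\infty$. Passing to the limit in $D(t)+D(-t)=2C(t)$ yields $D(\infty)+D(-\infty)=0$. The only mild obstacle I foresee is the bookkeeping of the factors $(1\pm i\beta)$ in the complex case, but these are precisely arranged so that the $(d,d^*)$-equations are odd in $b$ and the $b$-equation is even in $b$, which is exactly what the symmetry requires.
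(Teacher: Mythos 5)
Your proof is correct and follows essentially the same route as the paper, whose proof is just the one-line instruction to check that the statement holds at $t=0$ and that $b(t)+b(-t)=0$; you supply the missing details (the time-reversal/sign-flip symmetry of the Lax system, the matching initial data, and uniqueness of solutions), and your handling of the $(1\pm i\beta)$ factors and the appeal to the attractor corollary for the limit statement are both sound.
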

\begin{proof}
Initially, at $t=0$ this is true. Now check that $b(t) + b(-t)=0$ for all $t$. 
\end{proof}

\section{Non-linear McKean-Singer symmetry}

Now, we look at the nonlinear analogue of the McKean-Singer symmetry.

\begin{lemma}
For every $k>0$ we have $\str(B^k)=0$ for $t=0$ and 
${\rm Re}(\str(B^k))=0$ for every $t$. 
\end{lemma}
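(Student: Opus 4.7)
The plan is to decompose $B = X + Y$ with $X = d(t) - d(t)^*$ and $Y = i\beta b(t)$, and treat even and odd $k$ separately. Writing $\Gamma$ for the grading operator (so that $\Gamma|_{\Omega_p} = (-1)^p$ and $\str(A) = \tr(\Gamma A)$), the piece $X$ anticommutes with $\Gamma$ while $Y$ commutes with $\Gamma$; by the anticommutation lemma proved above, $\{X,Y\}=0$. Throughout I assume $\beta \in \R$, so that $B^* = -B$.

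For the statement at $t=0$, where $Y=0$ and $B(0) = d-d^*$, I would argue as follows. For odd $k$, the operator $B^k$ maps each $\Omega_p$ into subspaces of opposite parity, so every diagonal block with respect to the $\Omega_b \oplus \Omega_f$ splitting vanishes and $\str(B^k) = 0$ directly. For even $k = 2m$, use $d^2 = (d^*)^2 = 0$ to get $(d-d^*)^2 = -(dd^* + d^*d) = -L$, so $\str(B(0)^{2m}) = (-1)^m \str(L^m)$; the standard McKean-Singer pairing (via $D$, which anticommutes with $\Gamma$ and commutes with $L$) matches nonzero eigenvalues of $L$ between $\Omega_b$ and $\Omega_f$, and $L^m$ kills the kernel for $m \geq 1$, yielding $\str(L^m) = 0$.

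For general $t$ and odd $k$, I would exploit the anti-Hermiticity of $B$: $(B^k)^* = (-1)^k B^k$, and since $\Gamma$ is self-adjoint, $\str(A^*) = \overline{\str(A)}$. Hence $\overline{\str(B^k)} = (-1)^k \str(B^k)$, which for odd $k$ forces $\str(B^k) \in i\R$, i.e.\ $\mathrm{Re}(\str(B^k)) = 0$.

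The substantial case is even $k = 2m$ at general $t$. Using $\{X,Y\}=0$, $X^2 = -M$, and $Y^2 = -\beta^2 V$, one gets $B^2 = -(M+\beta^2 V)$, so $\str(B^{2m}) = (-1)^m \str((M+\beta^2 V)^m)$. By the commutativity lemma, $M$ and $V$ are simultaneously diagonalizable, and I would pass to a joint eigenbasis. The key point is that $C = d(t)+d(t)^*$ anticommutes with $\Gamma$ and commutes with both $M$ (trivially) and $V$ (because $\{C,b\}=0$ gives $[C,b^2]=0$); therefore, for any common eigenvector $v$ with $Mv = \mu v \neq 0$, the vector $Cv$ is nonzero (since $\|Cv\|^2 = \mu\|v\|^2$), lies in the opposite parity, and carries the same $(M,V)$ eigenvalues as $v$. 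These contributions cancel in the supertrace. The only remaining contributions are from $\ker M$, but the kernel proposition gives $\ker M = \ker V$, so on that subspace $M+\beta^2 V$ vanishes and $(M+\beta^2 V)^m = 0$ there for $m \geq 1$. Thus $\str(B^{2m}) = 0$, which is in particular real-zero. The main obstacle is precisely this even-$k$ case for $\beta \neq 0$: one cannot reduce $B^2$ to $-L$ and invoke McKean-Singer for $L$ alone, since $B^2 = -(M + \beta^2 V)$ differs from $-L$ when $\beta^2 \neq 1$; the McKean-Singer-type cancellation has to be performed at the level of the joint $(M,V)$ spectrum, and this is what the intertwining $[C,V]=0$ together with the shared kernel buys us.
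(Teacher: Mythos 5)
Your proof is correct, and for the even powers it takes a genuinely different (and more robust) route than the paper. The paper's proof handles odd $k$ exactly as you do — $B^k$ is anti-Hermitian for odd $k$, so its diagonal has no real part — and for even $k$ it simply writes $B^{2n}=(\pm L)^n$ and invokes the classical McKean--Singer identity $\str(L^n)=0$. That one-line reduction rests on $B^2=-L$, which holds at $t=0$ (where $b=0$) and for $\beta^2=1$, but for general $\beta$ one has $B^2=-(M+\beta^2 V)$, exactly as you point out. Your argument repairs this: you run the McKean--Singer cancellation not for $L$ but on the joint $(M,V)$-spectrum, using $C=d(t)+d(t)^*$ as the parity-reversing intertwiner (it anticommutes with the grading, commutes with $M$ and, via $\{C,b\}=0$, with $V=b^2$, and is injective off $\ker M$ since $\|Cv\|^2=\langle v,Mv\rangle$), and then dispose of the kernel contribution via the paper's proposition that $M$ and $V$ share their kernel for $t>0$ (at $t=0$ the sum is just $L$ and the classical identity applies). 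The paper's version buys brevity at the price of being literally valid only for $\beta\in\{0,\pm1\}$; yours covers all real $\beta$ and makes explicit which structural ingredients — the anticommutation $\{d,b\}=\{d^*,b\}=0$, the commutativity of $M$ and $V$, and the shared kernel — the supertrace identity actually depends on. The only hypothesis you should state up front (as you do) is that $\beta$ is real, so that $i\beta b$ is anti-Hermitian and $B^*=-B$; this is also implicit in the paper.
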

\begin{proof}
For odd $k$, there is nothing real in the diagonal and the claim is trivial.
For even $k$, note that $B^2=L$ at all times, so that $B^{2n}=L^n$. But we know
that $\str(L^n)=0$ by the classical McKean-Singer result.
\end{proof}

{\bf Remark.} \\
For $\beta \neq 0$, $\str(B)$ becomes purely imaginary for nonzero $t$. \\

The following nonlinear analogue of the McKean Singer result
needs analytic continuation in the Riemannian geometry case because $U(t)$ is not 
trace class in the continuum. Lets focus on the graph case where we deal with finite
matrices. We prove that $\str(U(t))$ remains constant for 
for all complex $t$ if $\beta=0$. 

\begin{propo}[Mckean Singer corollary]
a) For all $\beta$, we have ${\rm Re}(\str(U(t))) = \chi(G)$ for all $t$. \\
b) The trace of $U(t)$ stays real for all $t$. 
\end{propo}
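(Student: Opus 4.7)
My plan is to reduce the calculation to the eigenspace decomposition of $L$ and analyse each summand separately. Since $L$ is a constant of the motion and $[B,L]=0$, the unitary $U(t)$ commutes with $L$, so both $\tr(U)$ and $\str(U)$ split as sums over the eigenspaces $V_{\lambda^2}:=\ker(L-\lambda^2 I)$.

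On $V_0=\ker L$, any homogeneous harmonic $f\in\Omega_p$ satisfies $\|D(t)f\|^2=\langle f,Lf\rangle=0$, so $D(t)f=0$; since $d(t)f$, $d(t)^*f$, $b(t)f$ live in orthogonal degree components, each vanishes separately, whence $B(t)f=0$ and $U(t)|_{V_0}=I$. By the Hodge theorem this contributes $\chi(G)$ to $\str(U)$ and $\dim\ker L\in\mathbb R$ to $\tr(U)$, reducing the proposition to showing that for every $\lambda>0$ one has $\str(U|_{V_{\lambda^2}})\in i\mathbb R$ and $\tr(U|_{V_{\lambda^2}})\in\mathbb R$.

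For these eigenspaces I use the anticommuting pair of involutions $P$ and $\Gamma:=D(0)/\lambda$ (anticommuting since $\{P,D(0)\}=0$ at $t=0$). They generate a Clifford algebra whose unique irreducible representation realises $V_{\lambda^2}\cong W\otimes\mathbb C^2$ with $P=I_W\otimes\sigma_z$ and $\Gamma=I_W\otimes\sigma_x$. Any operator $A$ on $V_{\lambda^2}$ expands uniquely as $A=\sum_{\alpha\in\{0,x,y,z\}}A_\alpha\otimes\sigma_\alpha$, and the matrix traces satisfy $\tr A=2\tr_W(A_0)$ and $\str A=2\tr_W(A_z)$. Writing $U|_{V_{\lambda^2}}=\sum_\alpha X_\alpha(t)\otimes\sigma_\alpha$ and splitting $B=B_o+B_e$ with $B_o=d-d^*$ anticommuting with $P$ (purely in the $\sigma_x,\sigma_y$ components) and $B_e=i\beta b$ commuting with $P$ (purely in the $I,\sigma_z$ components), the equation $U'=BU$ becomes a coupled linear system for the $X_\alpha$ via the Pauli multiplication table. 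A direct sign count then shows that for $\beta=0$ the ansatz ``$X_0,X_x,X_z$ are real and $X_y$ is purely imaginary'' is preserved, so $\tr_W(X_0)\in\mathbb R$; the residual identity $\sum_{\lambda>0}\tr_W(X_z|_{V_{\lambda^2}})=0$ needed for part (a) is then extracted by comparing the $\sigma_z$-channel of $U$ with its adjoint and using that $U$ is real orthogonal for $\beta=0$.

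The main obstacle is promoting this argument to $\beta\neq 0$. In the real ($\beta=0$) case the real/imaginary invariance decouples cleanly, but the new $Y_0,Y_z$ entries of $B$ (from $i\beta b$) couple $X_0$ and $X_z$ to $X_x,X_y$ in a way that breaks the naive ansatz. The cleanest remedy I can think of is to construct a time-dependent anti-unitary $K(t)$ on $V_{\lambda^2}$ satisfying $K(t)B(t)K(t)^{-1}=-B(t)$: then $K(t)U(t)K(t)^{-1}$ is related to $U(t)^*$, forcing $\tr_W(X_0)\in\mathbb R$ and $\tr_W(X_z)\in i\mathbb R$. Identifying this anti-linear conserved structure of the Lax flow for arbitrary $\beta$ is the delicate step where the real work lies.
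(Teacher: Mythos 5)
Your reduction to the spectral decomposition of $L$ is sound: $U(t)$ commutes with $L$, the kernel contributes exactly $\chi(G)$ (your argument that $D(t)f=0$ forces $d(t)f=d(t)^*f=b(t)f=0$ separately by degree, hence $B(t)|_{\ker L}=0$ and $U|_{\ker L}=I$, is correct), and the Clifford-algebra structure generated by $P$ and $\Gamma=D(0)/\lambda$ on $V_{\lambda^2}$ is legitimate. This is a genuinely different route from the paper, which instead Taylor-expands $\str(U(t))$ at $t=0$, reduces each derivative to supertraces of powers of $B$, invokes ${\rm Re}(\str(B^k))=0$ (via $B^{2n}=\pm L^n$ and the classical McKean--Singer identity $\str(L^n)=0$), and proves (b) by observing that complex conjugation intertwines the $\beta$ and $-\beta$ flows so the spectrum of $U(t)$ is closed under conjugation.

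The proposal nevertheless has a genuine gap, and you name it yourself: for $\beta\neq 0$ the anti-unitary $K(t)$ with $K(t)B(t)K(t)^{-1}=-B(t)$ is postulated, not constructed, and its existence is essentially the whole content of the proposition in the only regime where it is nontrivial (for $\beta=0$, $B$ is real antisymmetric, $U$ is real orthogonal, and (b) is immediate). Moreover, even granting such a $K(t)$, the step ``$K(t)U(t)K(t)^{-1}$ is related to $U(t)^*$'' does not follow from the intertwining relation alone: since $K$ is time-dependent, $\frac{d}{dt}\bigl(K(t)U(t)\bigr)=K'U-BKU$, so $KU$ solves the adjoint equation $V'=-BV$ only up to the extra term $K'K^{-1}$, which you would also have to control. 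Separately, even at $\beta=0$ the two decisive claims of part (a) --- the ``direct sign count'' preserving your reality ansatz and the identity $\sum_{\lambda>0}\tr_W(X_z|_{V_{\lambda^2}})=0$ --- are asserted rather than derived. That piece is fillable, and more easily than by your $\sigma$-channel bookkeeping: $U(t)$ preserves each individual McKean--Singer plane ${\rm span}\{f,D(0)f/\lambda\}$, on which $P$ acts as ${\rm diag}(1,-1)$ and the real orthogonal $U$ acts as a rotation, so the supertrace of $U$ on each plane is $\cos\theta-\cos\theta=0$. But as written, the proof establishes the proposition only for $\beta=0$, and the extension to all $\beta$ --- which is what the statement claims --- remains open.
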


\begin{proof}
a) We know that $\str(L^n)=0$ initially because of the classical McKean Singer symmetry. 
Because $\str(U(t))$ is analytic, we have to show that the real part of 
$d^k/dt^k \str(U) = 0$ for all $k$ at $t=0$.
Differentiate $U$ at $t=0$. $U'=BU, U''=B' U + B^2 U=B U^2+B^2 U, U'''=B U^3+B^2 U^2 + 2 B^2 U + B^3$ etc.
Using $U(0)=Id$ and the lemma implies that all these derivatives are zero at $t=0$.  \\
b) The second statement follows from the fact that if $\lambda$ is an eigenvalue of $U(t)$ then 
also $\overline{\lambda}$ is an eigenvalue of $U(t)$ because $U f = \lambda f$ implies 
$\overline{U} \overline{f} = \overline{\lambda}  \overline{f}$ and the fact that the evolution 
with $+\beta$ gives an isospectral evolution to the evolution with $-\beta$. 
\end{proof}

{\bf Remarks.} \\
{\bf 1)} For $\beta \neq 0$, the super trace $\str(U(t))$ of $U(t)$ becomes imaginary and oscillates. 
We still have ${\rm Re}\str(U(t)) = \chi(G)$ for all $t$. 
Similarly, the trace $\tr(U(t))$ of $U(t)$ becomes imaginary and oscillates. \\
{\bf 2)} There are various discrete symmetries in the system. 
Besides the "charge" symmetry $C: \lambda \leftrightarrow -\lambda$,
the "super symmetry" preserving $\str(U(t))$, there is a "time reversal symmetry"
$T: U \leftrightarrow U^*$. There is also the symmetry $\beta \to -\beta$
and $D=d+d^* \to A=(d-d^*)/i$. Applying the symmetry again, gives $-D$. \\
{\bf 3)} The last remark suggests to write $A=j D$ where $j$ is the $j$ in a quaternion $z=a+bi+cj+dk$. Then 
$i A= i j D = k D$ so that $D'=[k D,D]$ and $B=k D$, $A=j D$. 
One can write the Lax pair as an equation for one operator $D$ using quaternions. The solution $D(t)$
must be understood as a quaternion then but this does not save us any computation time since it does
not save us to store the Lax pair $D$ and $B$, which now appear as rotated by 90 degrees in a quaternion
algebra. It just shows that the nonlinear equation is natural. 

\section{Complex case} 

The analysis so far was done mostly in the case $\beta=0$. Essential features stay the same
when turning on the complex parameter $\beta$. It turns out that the dynamics $D(t)$ is only affected by 
a time re-parametrization. What changes however is that $U(t)$ does not converges to unitary 
operators $U(\pm \infty)$ for $t \to \pm \infty$ as in the real case, but that $U(t)$ approaches 
an almost periodic attractor which describes the linear wave evolution asymptotically. \\

{\bf Remarks. } \\
{\bf 1)} The change from $B=d-d^*$ to $B=d-d^*+i \beta b$, where $\beta$ is a parameter, is similar 
to \cite{Toda} who modified the Toda flow by adding $i \beta$ to $B$.  \\
{\bf 2)} For $\beta=1$, we have $B^2=-C^2-b^2=-L$ and $A=i (d-d^*)$ satisfies $A^2=M$, and from $\{d-d^*,D \;\}=0$
which follows from $\{d,b\}=\{d^*,b\}=0$ we see that
for $\beta=0$, the super symmetry relations $\{A,D\}=0, A^2=M, D^2=L$ hold between
the self-adjoint operators $D,A=B/i$ and $L$. So, $\beta \neq 0$ changes some symmetry.  \\
{\bf 3)} In general, independent
of $\beta$, the time evolution has changed some symmetry already:
the relation $\{P,D\}=0, P^2=1, D^2=L$ for $t=0$ has led to the $\lambda \leftrightarrow -\lambda$ 
symmetry of the spectrum of $D$ because if $f$ is an eigenvector, then $Pf$ is an eigenvector. 
This analysis with $P$ only works for $t=0$ because $\{P,D \}=0$ is false in 
general for $t \neq 0$. But since the flow is isospectral, the charge symmetry 
$\lambda \leftrightarrow -\lambda$ still holds for all $t$.  \\
{\bf 4)} In the real case $\beta=0$, we have for all $t$:
If $f$ is an eigenvector of $D$ to the eigenvalue $\lambda$, then $Af$ is an eigenvector to 
the eigenvalue $-\lambda$. In the same way, if $f$ is an eigenvector of $A$ to the eigenvalue $\lambda$
then $Df$ is an eigenvector of $A$ to the eigenvalue $-\lambda$. \\
{\bf 5)} Since $D(t)$ is now complex, even so it is selfadjoint, the eigenvectors have become
complex. (Similar than for $\left[ \begin{array}{cc} 0 & i \\ -i & 0 \end{array} \right]$
which has the eigenfunctions $\left[ \begin{array}{c} \pm i \\ 1 \end{array} \right]$ to the eigenvalues $\pm 1$). \\
{\bf 6)} In the complex case, where $B$ also has a diagonal part, not only the {\bf super partner} $D f$ but also 
the choice of the {\bf anti-particle partner} $Bf$ is not perpendicular to $f$. Because the eigen-space of $-\lambda$
is also at least 2 dimensional, there are many anti particle partners and because the evolution is unitary, there are
anti particle partners of $f$ which are perpendicular of $f$. \\
{\bf 7)} At time $t=0$, we have $B D=-D B = (d-d^*) (d+d^*) = d d^* - d^* d$. \\
{\bf 8)} The symmetry $A \leftrightarrow C$ is an other symmetry to consider. Because both $A$ and $C$ are square roots of $M$
and both feature supersymmetry, the spectra of $A$ and $C$ are the same. But because they have not the same eigenfunctions
(they do not commute), they can not be diagonalized simultaneously.   \\

The essential features of the system like expansion are $\beta$-independent. Actually, 
$\beta$ just produces an additional "force" onto the motion of the diagonal energy part $b(t)$ of $D$
which accelerates the convergence towards the attractor:

\begin{lemma}[Time change]
The following statements hold for all $\beta$. \\
a) The operators $b(t), d d^*, d^* d, M(t) = d d^* + d^* d$ are real and move along paths which are independent of $\beta$. \\
b) The acceleration ratio is $b''_{\beta}(t)/b''_0(t) = 1+\beta^2$. 
\end{lemma}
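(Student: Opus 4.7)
The plan is to project the Lax equation $\dot D = [B, D]$ with $B = d - d^* + i\beta b$ onto the three block components of $D = d + d^* + b$, and then observe that the $\beta$-dependence appears only through two conjugate scalar factors $1 \mp i\beta$ on the off-diagonal equations, so that a closed subsystem governing $b$, $dd^*$, $d^*d$ comes out $\beta$-free.

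First I would compute $[B, D]$ explicitly: using $d^2 = (d^*)^2 = 0$ and grouping the summands by which of $\Omega_p \to \Omega_{p+1}$, $\Omega_p \to \Omega_{p-1}$, $\Omega_p \to \Omega_p$ they belong to, one finds
\[
[B, D] = 2(dd^* - d^*d) + (1 - i\beta)(db - bd) + (1 + i\beta)(bd^* - d^*b).
\]
Since the three block types are linearly independent, matching $\dot D = \dot d + \dot{d^*} + \dot b$ yields
\[
\dot d = (1-i\beta)[d, b], \qquad \dot{d^*} = (1+i\beta)[b, d^*], \qquad \dot b = 2[d, d^*].
\]
In particular $\dot b$ is already manifestly free of $\beta$.

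For part (a), the key step is to show that $(dd^*)'$ and $(d^*d)'$ are also $\beta$-free. I would expand
\[
(dd^*)' = (1-i\beta)(db-bd)d^* + (1+i\beta)\,d(bd^*-d^*b) = 2\,dbd^* - (1-i\beta)\,bdd^* - (1+i\beta)\,dd^*b,
\]
and invoke the earlier commutation $[b, dd^*] = 0$ to turn $dd^*b$ into $bdd^*$, whereupon the two $\mp i\beta$ pieces cancel. Using the anticommutation $\{d, b\} = 0$ further simplifies this to $(dd^*)' = -4\,bdd^*$; a symmetric argument with $\{d^*, b\} = 0$ and $[b, d^*d] = 0$ yields $(d^*d)' = 4\,bd^*d$. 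Hence $(b,\, dd^*,\, d^*d)$ satisfies a closed autonomous polynomial ODE system whose right-hand side is $\beta$-independent and whose initial data at $t=0$ is also $\beta$-independent and real. Uniqueness of ODE solutions then forces the three trajectories (and therefore $M = dd^* + d^*d$) to coincide for every $\beta$, and reality is inherited from the real coefficients of the vector field.

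For part (b), the only residual $\beta$-dependence lives in the off-diagonal operators $d(t), d^*(t)$ themselves, through the conjugate prefactors $1 \mp i\beta$. I would form $b''$ by one more differentiation and track how these factors combine: whenever a piece coming from the $\dot d$-equation multiplies a piece coming from the $\dot{d^*}$-equation, the product $(1-i\beta)(1+i\beta) = 1 + \beta^2$ appears as an overall scalar against the corresponding $\beta = 0$ expression. The main technical obstacle is to verify that the relevant combination for $b''$ decomposes exactly into such conjugate pairs, so that no uncancelled $(1\pm i\beta)^2$ terms survive; this organisation is again made possible by $[b, dd^*] = 0$ and $\{d, b\} = \{d^*, b\} = 0$, the same identities that drove part (a).
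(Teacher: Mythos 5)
Your part (a) is correct and follows essentially the same route as the paper: you derive $\dot d=(1-i\beta)[d,b]$, $\dot{d^*}=(1+i\beta)[b,d^*]$, $\dot b=2[d,d^*]$, then use $\{d,b\}=\{d^*,b\}=0$ and $[b,dd^*]=[b,d^*d]=0$ to obtain a closed, real, $\beta$-free autonomous system for $(b,\,dd^*,\,d^*d)$ and conclude by uniqueness of ODE solutions. This is in fact tidier than the paper's own argument, which only asserts that $dd^*$ and $d^*d$ ``just have a $\beta$-dependent time change.''

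Part (b) is where the proposal has a genuine gap, and the gap is instructive. Your own computation in (a) gives $(dd^*)'=-4\,b\,dd^*$ and $(d^*d)'=4\,b\,d^*d$ with no $\beta$ remaining, hence $b''=2\bigl((dd^*)'-(d^*d)'\bigr)=-8\,b\,(dd^*+d^*d)$ is itself $\beta$-independent, so the ratio $b''_\beta/b''_0$ your method yields is $1$, not $1+\beta^2$. The mechanism you invoke for (b) --- conjugate prefactors multiplying to $(1-i\beta)(1+i\beta)=1+\beta^2$ --- cannot produce the claimed factor in $b''$: in $(dd^*)'=d'd^*+d(d^*)'$ the prefactors enter \emph{additively}, $2(1-i\beta)+2(1+i\beta)=4$, and genuine products such as $d'(d^*)'$ only arise one derivative later, in $(dd^*)''$ (i.e.\ in $b'''$), where they are in any case cancelled by the accompanying $(1\mp i\beta)^2$ terms, as they must be once $dd^*(t)$ is known to be $\beta$-free. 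Be aware that the paper's own proof writes $b''=(dd^*)'-(d^*d)'=4(1+\beta^2)\,dbd^*$, which is precisely the product-of-coefficients slip your calculation exposes; statement (b) as printed is inconsistent with statement (a) as you correctly prove it, and the honest conclusion of your approach is that $b$, $dd^*$, $d^*d$ and $M$ undergo no time change at all as $\beta$ varies. So either abandon (b), or state it only after locating where a $\beta$-dependent rate genuinely appears (namely in $d$ and $d^*$ separately, through the factors $1\mp i\beta$), not in $b$.
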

\begin{proof}
Since $b'$ is real, $b$ must stay real assuring that $B(t)$ stays skew symmetric and $U(t)$ stays
unitary. Now, we want to see what effect it has if we change $\beta$.
The equations of motion $(d + d^* + b)' = [d-d^*+i \beta b,d+d^*+b]$ can be rewritten as
\begin{eqnarray*}
d' &=& 2 (1-i \beta)  db \\
(d^*)' &=& 2 (1+i \beta) b d^* \\
b' &=& d d^* - d^* d  \; . 
\end{eqnarray*}
In order to see that $b$ is up to a time change, it is enough to see that $d d^*$ 
and $d^* d$ just have a $\beta$-dependent time change. Indeed, $d d^*$ and $d^* d$ and $b$ are all real and
$b'' = (d d^*)' -(d^* d)' = 4 (1+\beta^2) d b d^*$. \\
Since $b^2 + M = L$ does not move, also $M(t) = L-b^2(t)$ traces a path which is independent of $\beta$. 
\end{proof}

\begin{coro}
The limiting operator $D(\infty)$ is independent of $\beta$. 
\end{coro}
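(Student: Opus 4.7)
The plan is to combine the previous lemma with the Attractor corollary, which already identifies the limit of $D(t)$ with $b(\infty)$. Since the Attractor corollary gives $d(t) \to 0$ in norm (graph case) and $D(t) = d(t) + d(t)^* + b(t)$, the limit $D(\infty)$ equals $b(\infty)$. So it suffices to show $b(\infty)$ does not depend on $\beta$.

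For this I would invoke part (a) of the previous lemma: the curve traced by $b(t)$ in the space of self-adjoint operators is the same for every $\beta$; only the speed of traversal varies with $\beta$. Concretely, $b(t)$ is real, $b$ and $M$ commute, and $b^2 + M = L$ is constant, so $b(t)$ moves on the compact level set $\{X \in \mathrm{Sym} : X^2 + M(t) = L,\ [X,M(t)]=0\}$, with the same trajectory for all $\beta$. The endpoint of this curve — namely the operator $V$ with $V^2 = L$ singled out by the Attractor corollary as the limit of $b(t)$ — is therefore a geometric feature of the trajectory, not of its parametrization, and hence is $\beta$-independent.

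The only thing left to confirm is that the $\beta$-dependent reparametrization genuinely carries $t \to \infty$ to the same endpoint of the path. Here I would use part (b): $b''_\beta(t) = (1+\beta^2) b''_0(t)$ shows the $\beta$-evolution runs at least as fast as the $\beta = 0$ evolution (with $1+\beta^2 \geq 1$), and monotonically so by the Lyapunov lemma, which ensures $\mathrm{tr}(b(t)^2)$ strictly increases toward $\mathrm{tr}(L)$ along the common curve. Hence both the $\beta = 0$ and $\beta \neq 0$ trajectories exhaust the same path and share the same terminal point.

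The mild obstacle is giving a clean meaning to ``the same path'': the raw statement in the previous lemma is at the level of instantaneous accelerations, not of a global time change. I would handle this by phrasing things in terms of the image of the map $t \mapsto b_\beta(t)$ rather than the parametrization itself; since $b_\beta(t)$ is a continuous, monotone (along the Lyapunov direction $\mathrm{tr}(b^2)$) curve in a compact set with $\beta$-independent velocity direction at each point of the common image, the image and its endpoint are intrinsic to the ODE's orbit structure. From this the corollary is immediate.
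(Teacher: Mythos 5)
Your proof is correct and follows the same route as the paper, which simply cites the preceding time-change lemma with the remark that the curves coincide and only the rate of convergence depends on $\beta$. You supply the details the paper leaves implicit (identifying $D(\infty)$ with $b(\infty)$ via the Attractor corollary, and checking that the reparametrization actually exhausts the common path), which is a welcome tightening rather than a different argument.
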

\begin{proof}
Having the same curves, this means that it is only the rate of convergence which 
depends on $\beta$. 
\end{proof}

Remarkably, for $\beta \neq 0$, a complex differential structure has emerged during the evolution: 
the operator $D(t)$ has become complex for $t>0$, even so we have started
with a real graphs or manifold. Define $\partial={\rm Re}(d)$ and
$\overline{\partial}=i{\rm Im}(d))/2$, then $\partial^2 = \overline{\partial}^2 = 0$ and
$d=\partial + \overline{\partial}$. Because cocycles and coboundaries deform in an explicit way,
cohomology groups defined by $\partial$ and $\overline{\partial}$ are both the same than for $d$. \\

Since $\partial \overline{\partial} = \overline{\partial} \partial=0$, the Laplacian $L=D^2$ is
the sum of two Laplacians $L^{\partial} = (D^{\partial})^2$ and
$L^{\overline{\partial}} = (D^{\overline{\partial}})^2$, where $D^{\partial} = \partial + \partial^*$
and $D^{\overline{\partial}} = \overline{\partial} + \overline{\partial}^*$. 

The complex structure disappears asymptotically for $t \to \infty$. 
For large $t$, the linear flow $\exp(i D(\infty) t)$ is close to the nonlinear flow 
$U(t)$, satisfying $U(t)^* D(t) U(t)=D(0)$. \\
While the exterior derivative $d$ as well as the new Dirac operator $C=d+d^*$ are complex for $t>0$, the
operator $M=C^2$ is real if we start with a real $D$. While asymptotically,
$||{\rm Im}(D(t))||/||{\rm Re}(D(t))|| \to 0$, the complex structure is especially relevant in the
early stage of the evolution.

\section{Example: the circle}

The simplest compact Riemannian manifold without boundary is the circle $M=T$ with the standard
homogeneous metric $g_{11}=1$. The Dirac operator is a differential operator on $\Lambda M$ which is a 
$2^{dim(M)}$-dimensional vector bundle over $M$. With $df(x) = f_x dx$ and $d^* g(x) dx = -g_x$, the Dirac operator is
$$ D = \left[ \begin{array}{cc} 0           & \partial_x \\
                               -\partial_x  & 0 \end{array} \right] \; . $$
Its square is the Laplace-Beltrami operator
$$ L = \left[ \begin{array}{cc} -\Delta  &  0 \\
                                  0      &  -\Delta \end{array} \right] \;  $$
which respects the split into zero and one forms. Fourier theory diagonalizes these operators. The eigenbasis 
$\left[ \begin{array}{c} \pm i e^{i n x} \\ e^{i n x} \end{array} \right]$
belongs to eigenvalues $\pm n$ so that the spectrum of $D$ is the set of integers $\Z$ and $L$ 
has the eigenvalues $n^2$ for $n=0,1,\dots$.  \\

The letters $A,B,C$ in the following have no relation with previous use of $A,B,C$ in this text. 
We look here only at $\beta=0$: 
the deformation $D=D(t)= d+d^* + b = \left[ \begin{array}{cc} B           & A                    \\
                                                              A^*         & C         \end{array} \right]$
satisfies $D' = [B,D]$ using
$B = d-d^*= \left[ \begin{array}{cc} 0                     & A                    \\
                                    -A^*                   & 0          \end{array} \right]$ can be 
written as the matrix differential equation
\begin{equation}
B' = 2 A A^*, A' = 2 A C,  C' = -2 A^* A \; . 
\label{2}
\end{equation}
Since the quantity $AC+BA$ is time invariant,
$L$ is block diagonal with entries $B^2 + A A^*$ and $C^2 + A^* A$.
In Fourier space, $A,B,C$ are double infinite matrices. We have $B(0)=C(0)=A(\infty)=0$ and
$A(0) = {\rm Diag}(\dots -3i,-2i,-i,0,i,2i,3i, \dots)$ and
$B(\infty) = -C(\infty) = {\rm Diag}( \dots 3,2,1,0,1,2,3, \dots )$.
System (\ref{2}) shows initial inflation and asymptotic exponential expansion of the individual circles.
Also for general initial conditions $A,B,C \in M(n,C)$, system~(\ref{2})
satisfies $A(t) \to 0$. 

\begin{figure}
\scalebox{0.20}{\includegraphics{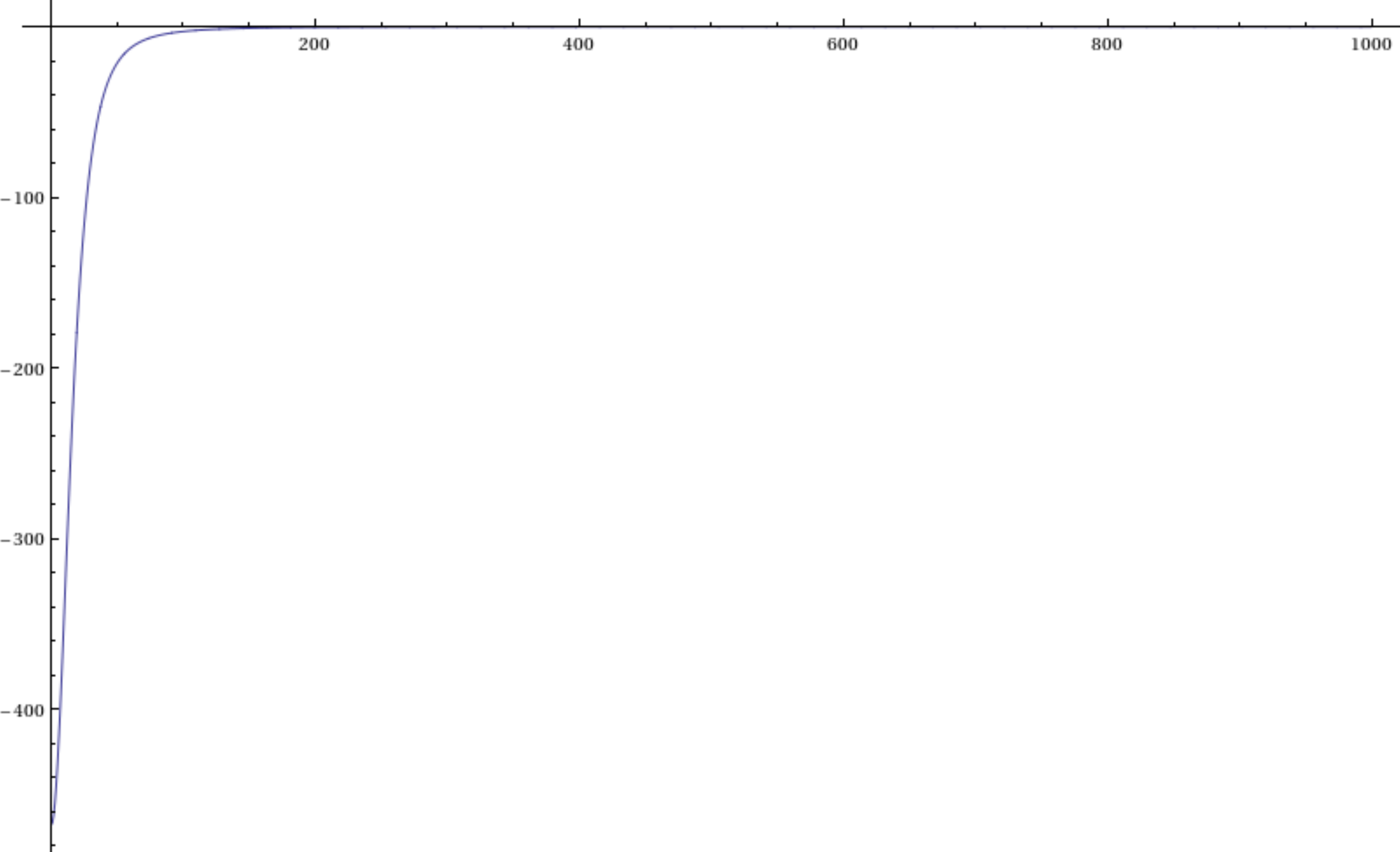}}
\scalebox{0.20}{\includegraphics{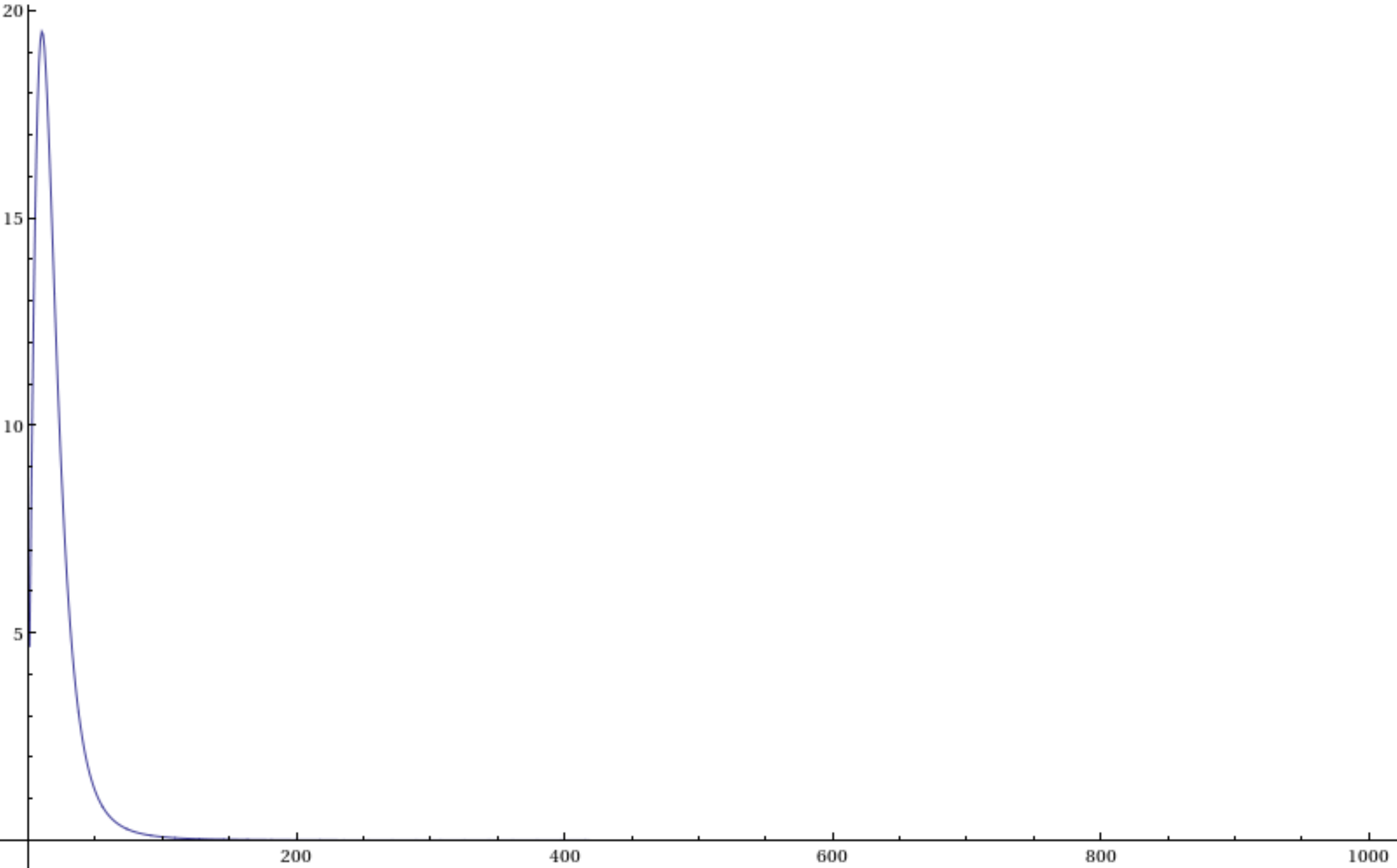}}
\caption{
Solutions to the system $D'=[B,D]$ for a one-dimensional compact Riemannian manifold, the circle. 
We see the evolution in the ${\rm tr}(A(t))$ and ${\rm tr}(A'(t))$. As in the graph case, we see an inflationary
expansion at first. The spectrum of the new Dirac operator $C(t) = d(t) + d(t)^*$ is not compatible
with the geometry of a one-dimensional Riemannian manifold. Actually, we do start with a union of two circle,
the space of $0$-forms and the space of $1$ forms. Initially, when using $D$ as the measuring stick, the two
worlds are separated: the distance between points $x$ and $Dx$ is infinite. The Connes pseudo metric still
gives infinite distance for positive $t$ because there is a kernel of $L_1$. But if we look at a distance
using to measuring the time a wave needs to go from one point to an other, then for 
positive $t$, we must think of the geometry as two circles separated by a small distance.
The distance between the planes becomes exponentially small and
internal distance in the planes expands. In the limit $t \to \infty$,
each of the two subspaces has adapted a discrete topology, in which the distance between two points 
is infinite.}
\end{figure}

{\bf Remarks.}  \\
{\bf 1)} The fact that all the integers and not only the positive integers appear as the 
spectrum of $D$ show that $D$ is an even more natural object than the Laplacian and the
Minakshisundaraman-Pleijel zeta function of the circle. The Dirac zeta function
$\zeta(s) = \sum_{n \neq 0} n^{-s} = \zeta(s) + (-1)^{-s} \zeta(s) = 
\zeta(s) [ 1+cos(\pi s) + i \sin(\pi s)]$ is now analytic in the entire complex plane because the pole $1$ has
been absorbed by $1+e^{-i \pi s}$. Of course, the Riemann zeta function and the zeta function of the circle
have the same roots $\zeta(z_i)=0$. Since $\zeta'_T(0) = -1$, the Ray-Singer regularized determinant of $D$ is $e$ 
and the Laplacian $D^2$ has the determinant $e^2$. 
The Zeta function of the Dirac operator on the circle is natural because it is 
equivalent to the Riemann zeta function but still analytic in the entire plane.  \\
{\bf 2)} For the two torus, the Dirac zeta function is
$\zeta(s) = (1+\exp(-i \pi s) \sum_{(n,m) \neq (0,0)} (n^2+m^2)^{-s/2}$. For odd dimensional tori, there is a branch 
such that $\zeta(s) = (1+\exp(-i \pi s)) \sum_{(n_1,\dots, n_d) \neq 0} (\sum_i n_i^2)^{-s/2}$ is analytic everywhere. 
The reason is that in general for odd dimensional compact Riemannian manifolds, 
the spectral zeta function has a meromorphic extension \cite{LapidusFrankenhuijsen}
to the complex plane with poles located at the odd integers $d-2q,q=0,1,2,...$. When going from the zeta function
for the Laplacian to the zeta function of the Dirac operator, the inclusion of the negative eigenvalues
produces the factor $(1+(-1)^s) = (1+\exp(-i \pi s))$ which kills the odd poles. In the even case, where the poles are at
even integers $s=d,d-2,d-4,...,4,2$, the factor $(1+\exp(-i \pi s))$ does not cover the poles any more. 

The deformed operator will be of the form
$$ D=D(t)= d+d^* + b = \left[ \begin{array}{cc} B                        & A                    \\
                                                A^*                      & C         \end{array} \right]  = 
\left[ \begin{array}{cc} 0                        & A                    \\
                         0                        & 0         \end{array} \right] + 
\left[ \begin{array}{cc} 0                        & 0                    \\
                         A^*                      & 0         \end{array} \right] +
\left[ \begin{array}{cc} B                        & 0                    \\
                         0                        & C         \end{array} \right]  \; . $$

which satisfies $D' = [B,D]$ using 
$$ B = \left[ \begin{array}{cc} 0                      & A                    \\
                                -A^*                   & 0          \end{array} \right] \; . $$
While initially $B=0$ and $A$ is a first order differential operator, this does not stay so 
after the deformation. The differential equations are:
\begin{eqnarray*}
B' &=& 2 A A^* \\
A' &=& 2 A C \\
C' &=& -2 A^* A
\end{eqnarray*}

\begin{lemma}
The quantity $AC+BA$ is left invariant under the motion.
\end{lemma}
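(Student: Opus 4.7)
The plan is to differentiate $AC+BA$ along the flow and show the derivative is zero on trajectories starting from the given initial data. Applying the product rule together with the stated equations of motion $A'=2AC$, $B'=2AA^*$, $C'=-2A^*A$, I compute
\begin{align*}
(AC+BA)' &= A'C + AC' + B'A + BA' \\
 &= (2AC)C + A(-2A^*A) + (2AA^*)A + B(2AC) \\
 &= 2AC^2 - 2AA^*A + 2AA^*A + 2BAC \\
 &= 2(AC+BA)C.
\end{align*}
So if we set $X(t):=A(t)C(t)+B(t)A(t)$, this quantity satisfies the homogeneous linear matrix equation $X' = 2XC$.

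Next I would use the initial conditions: on the circle we have $B(0)=0$ and $C(0)=0$, so $X(0)=0$. Uniqueness of solutions to the linear ODE $X'=2XC$ forces $X(t)\equiv 0$ for all $t$. That is the claimed invariance, and it immediately exhibits $L=D^2$ in the block-diagonal form stated just after the lemma.

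As a sanity check and a conceptually cleaner alternative, I would also note that the invariance of $L=D^2$ under the Dirac deformation (already established earlier in the excerpt) gives the same result without any computation: squaring the $2\times 2$ block form of $D$ yields off-diagonal block $BA+AC$, and since $L$ starts block diagonal and does not move, $BA+AC$ remains zero for all $t$. The two routes agree; the direct calculation has the advantage of also producing the explicit ODE $X'=2XC$ for the off-diagonal block.

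The main pitfall is a matter of interpretation rather than difficulty: $(AC+BA)'=2(AC+BA)C$ is not a formal identity vanishing for arbitrary $A,B,C$, so "invariance" here means constancy along this specific trajectory, using the fact that the initial value happens to be zero. The only routine checkpoint in the computation is making sure the two $\pm 2AA^*A$ terms in $(AC)'$ and $(BA)'$ cancel against each other; everything else is straightforward matrix multiplication.
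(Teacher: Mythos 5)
Your computation and conclusion are correct, but your first route differs from the paper's in a way worth spelling out, and it has one subtle circularity. The paper's own proof differentiates $BA+AC$ using the \emph{raw} commutator form of the flow, $A'=AC-BA$ (which is what $D'=[B,D]$ literally gives for the off-diagonal block), obtaining $(BA+AC)'=AC^2-B^2A$, and then gestures at iterating derivatives; it never cleanly closes the argument. You instead use the displayed system $A'=2AC$ and close with the uniqueness theorem for the linear ODE $X'=2XC$ with $X(0)=0$ --- a genuinely tidier ending than the paper's. The catch is that $A'=2AC$ is itself obtained from $A'=AC-BA$ by substituting $BA=-AC$, i.e.\ by assuming the very invariance you are proving, so your first route is mildly circular as written. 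The repair is immediate: with $A'=AC-BA$ one gets $X'=AC^2-B^2A=XC-BX$, which is still linear and homogeneous in $X$, so the same uniqueness argument gives $X\equiv 0$. Your second route --- reading off the off-diagonal block $BA+AC$ of $D^2=L$ and invoking the already-proved fact that $L$ does not move and starts block diagonal --- is free of this issue, is not the paper's argument (the paper goes the other direction, deducing block-diagonality of $L$ \emph{from} the lemma), and is the cleanest proof of the three; you are also right that "invariant" here must be read as "the relation $AC+BA=0$ is preserved along this trajectory," not as a conserved quantity for arbitrary initial data.
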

\begin{proof}
The reason is that $A'=AC+BA=0$ is left invariant
$(B A + A C)' = B' A + B A' + A' C + A C' =  2 A A^* A + B (AC-BA) + (AC-BA) C - 2 A A^* A
                                         =  B (AC-BA) + (AC-BA) C = A C^2 -B^2 A$. And if
we take the derivative of this we get $A C^3+B^3 A$. 
\end{proof}

It follows that $L$ is diagonal with entries $B^2 + A A^*$ and $C^2 + A^* A$. 

Lets look at the evolution of the system in the Fourier space, where $A,B,C$
are double infinite matrices. We have $B(0)=C(0)=0$ and 
$$ A(0) = \left[ \begin{array}{cccccccccc}
                            ... & ...& ...  &   ...  & ...  & ...  & ...  & ... & ... & ... \\
                            ... &  0 &$-3i$ & 0      & 0    & 0    & 0    & 0   & 0   & ... \\
                            ... &  0 & 0    & $-2i$  & 0    & 0    & 0    & 0   & 0   & ... \\
                            ... &  0 & 0    &  0     & $-i$ & 0    & 0    & 0   & 0   & ... \\
                            ... &  0 & 0    &  0     & 0    & 0    & 0    & 0   & 0   & ... \\
                            ... &  0 & 0    &  0     & 0    & 0    & $i$  & 0   & 0   & ... \\
                            ... &  0 & 0    &  0     & 0    & 0    & 0    & $2i$& 0   & ... \\
                            ... & ...& ...  &   ...  & ...  & ...  & ...  & ... & ... & ... \\
                  \end{array} \right] \; . $$
In the limit, we get $A(\infty)=0$ and 
$$ B(\infty) = -C(\infty) = \left[ \begin{array}{cccccccccc}
                            ... & ...& ...  &   ...  & ...  & ...  & ...  & ... & ... & ... \\
                            ... &  0 & 3    &  0     & 0    & 0    & 0    & 0   & 0   & ... \\
                            ... &  0 & 0    &  2     & 0    & 0    & 0    & 0   & 0   & ... \\
                            ... &  0 & 0    &  0     & 1    & 0    & 0    & 0   & 0   & ... \\
                            ... &  0 & 0    &  0     & 0    & 0    & 0    & 0   & 0   & ... \\
                            ... &  0 & 0    &  0     & 0    & 0    & 1    & 0   & 0   & ... \\
                            ... &  0 & 0    &  0     & 0    & 0    & 0    & 2   & 0   & ... \\
                            ... & ...& ...  &   ...  & ...  & ...  & ...  & ... & ... & ... \\
                  \end{array} \right] \; . $$
We see that also in the limit, $t \to \infty$, we do not have a differential operator. While the 
superpartner of a $0$-form had been a $1$-form initially, the superpartner of a $0$-form is now
also a $0$-form. In the same way, a fermionic $1$ form has in the end a fermionic super partner. 

\section{Example: two point graph} 

The simplest graph in which a nonzero distance appears is the two point graph $K_2$. Because $v_0=2,v_1=1$, the
Dirac operator $D$ is the $3 \times 3$ matrix 
$$ \left[ \begin{array}{ccc} 0 & 0 & -1 \\ 0 & 0 & 1 \\ -1 & 1 & 0 \end{array} \right]   \; . $$
It has eigenvalues $-\sqrt{2},\sqrt{2},0$ and the Laplacian
$$ L= D^2= \left[ \begin{array}{ccc} 1 & -1 & 0 \\ -1 & 1 & 0 \\ 0 & 0 & 2 \\ \end{array} \right] \; . $$

When we run the differential equation for $\beta=0$, we see
$$ D(0) =  \left[ \begin{array}{ccc}  0 &  0 & -1 \\
                                      0 &  0 &  1 \\
                                     -1 &  1 &  0 \end{array} \right], 
   D(1) = \left[ \begin{array}{ccc}
                      0.702191  & -0.702191 & -0.117712 \\
                     -0.702191  & 0.702191  &  0.117712 \\
                     -0.117712  & 0.117712  & -1.40438 \end{array} \right] \;  $$
which then converges in the limit $t \to \infty$ to the projection-dilation 
$V^+ =  \left[ \begin{array}{ccc}
                      1         & -  1      &  0   \\
                     -1         &    1      &  0   \\
                      0         &  0        &  -2   \end{array} \right]/\sqrt(2)$.
Backwards in time, we get the limit
$V^- =  \left[ \begin{array}{ccc}
                     -1         &    1      &  0   \\
                      1         &   -1      &  0   \\
                      0         &  0        &   2   \end{array} \right]/\sqrt(2)$.
By symmetry, the differential equation is
$$ D = \left[ \begin{array}{ccc}
                      b         & c         & -d        \\
                      c         & b         &  d        \\
                     -d         & d         &  e       \end{array} \right] \; ,  
   B = \left[ \begin{array}{ccc}
                      0         & 0         & -d        \\
                      0         & 0         &  d        \\
                      d         & -d        &  0       \end{array} \right] \; ,  $$
which gives
$$ B D - D B = \left[ \begin{array}{ccc}
                  2 d^2 & -2 d^2 & b d-c d-e d \\
                  -2 d^2 & 2 d^2 & -b d+c d+e d \\
                  b d-c d-e d & -b d+c d+e d & -4 d^2
                 \end{array} \right] $$
and shows that $c=-b$ and $e=-2b$ (which also follows from the trace being zero) so that
$ D' = \left[ \begin{array}{ccc}
                   2 d^2 & -2 d^2 & 4 b d \\
                   -2 d^2 & 2 d^2 & -4 b d \\
                   4 b d & -4 b d & -4 d^2
                  \end{array} \right]$ and so that the differential equations are
\begin{eqnarray*}
   b'&=&2d^2 \\
   d'&=&-4bd \\
\end{eqnarray*} 
This system of nonlinear equations has the explicit solutions
$$ (d(t),b(t))=(\sqrt{1-\tanh^2(\sqrt{8} t)}, \tanh(\sqrt{8} t)/\sqrt{2}) $$
and the integral $d^2+2b^2=1$ which consists of ellipses. The derivative 
$$ d'(t)=-4 \sqrt{\frac{1}{\cosh \left(4 \sqrt{2} t\right)+1}} \tanh \left(\sqrt{8} t\right) \; $$
shows the inflation which is present in general. 

\begin{figure}
\scalebox{0.20}{\includegraphics{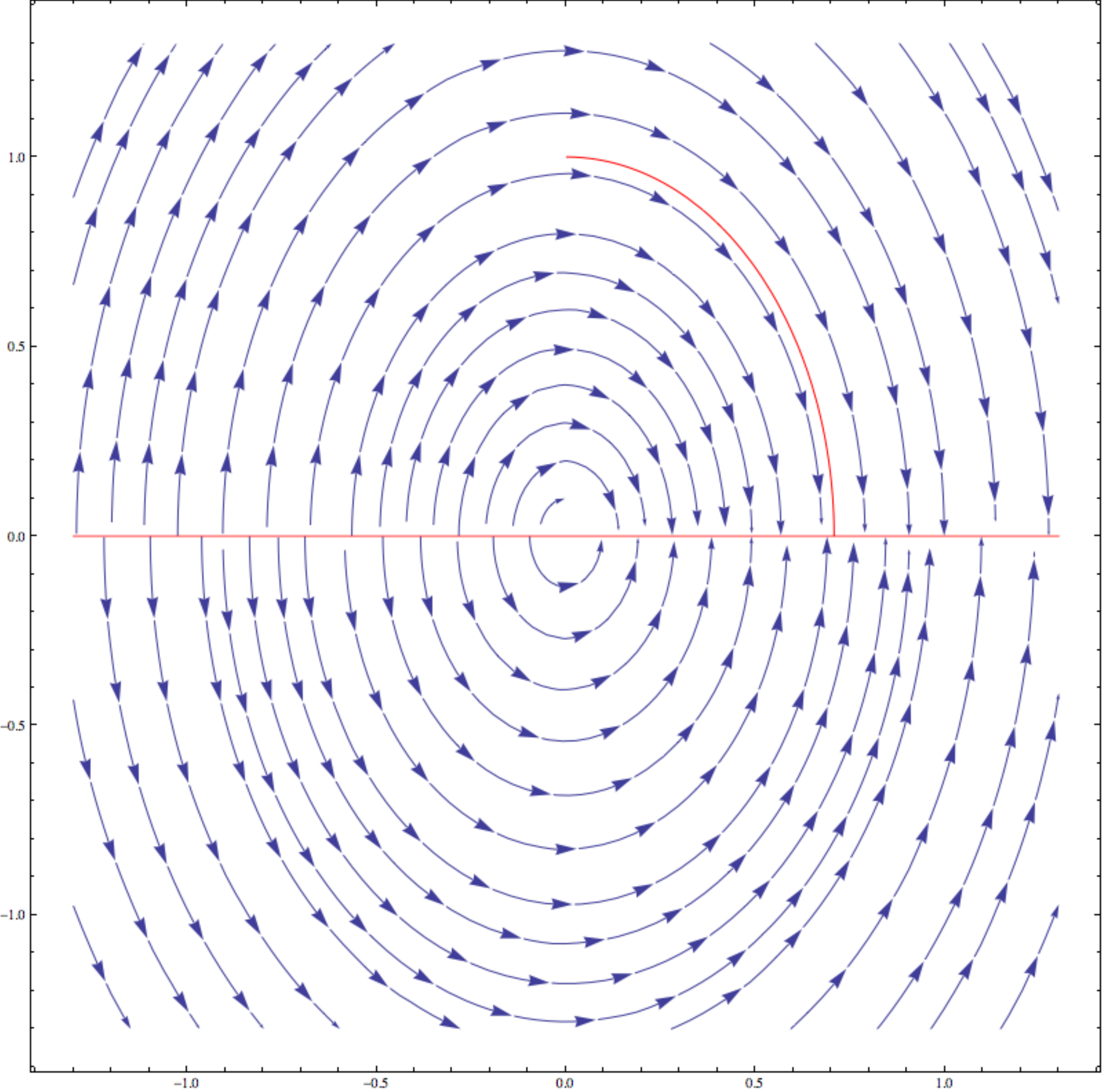}}
\scalebox{0.20}{\includegraphics{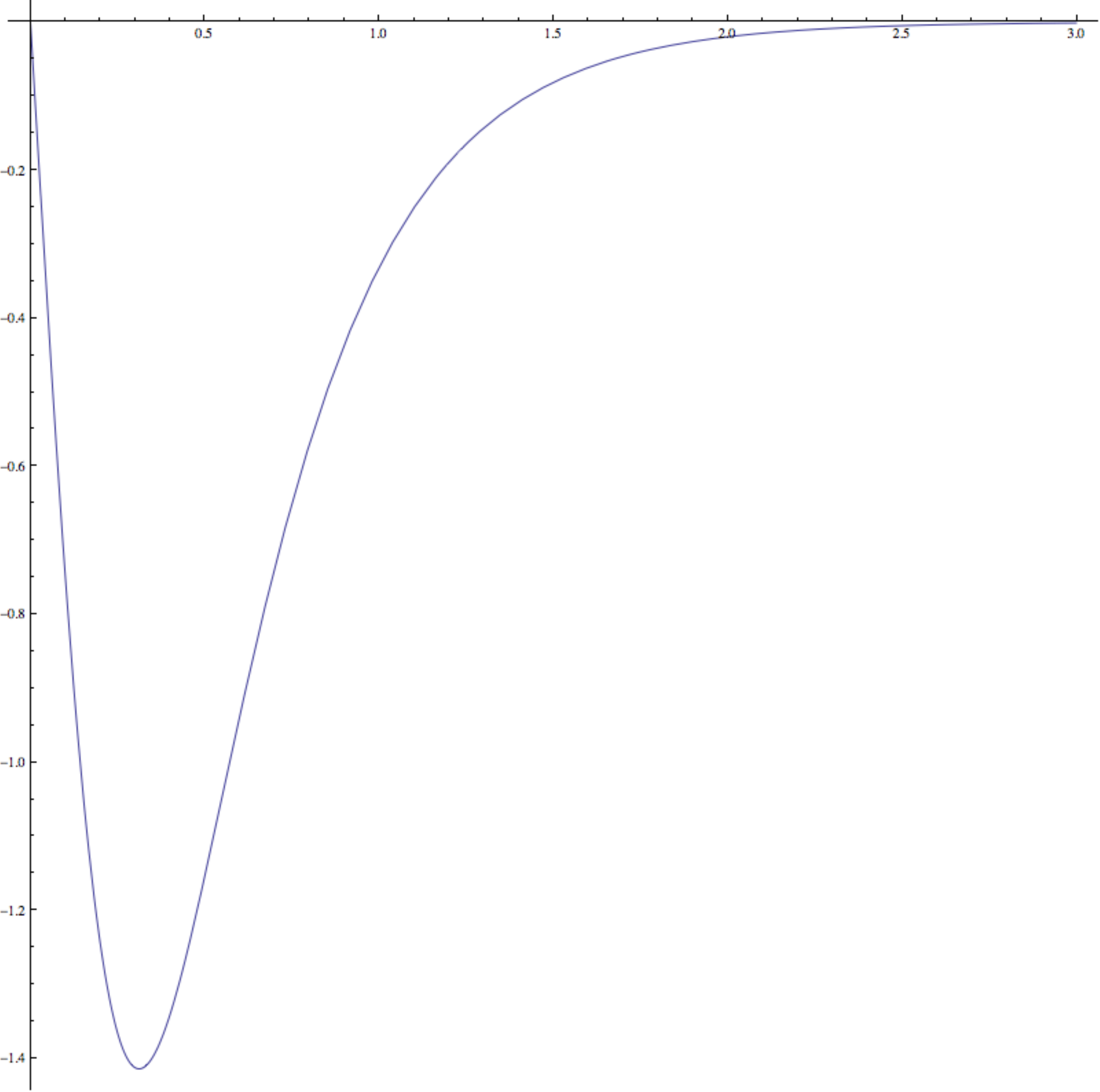}}
\caption{
The differential equation in the graph case $G=K_2$. We see the evolution in the $d(t),b(t)$
plane to the left and the function $d'(t)$ to the  right. The function $d'(t)$ has an extremum $-1/\sqrt{2}$ 
at $t={\rm arccosh}(\sqrt{(1+\sqrt{2})/2})/\sqrt{2} = 0.311613...$ which pinpoints
the inflection point for $d(t)$. Unlike the inflationary expansion of the universe
$10^{-36}$ to $10^{-33}$ seconds after the big bang, which expands the universe by a factor 
$10^{78}$, the expansion is only $\sqrt{2}$ which is the order of magnitude for the 
expansion rate for a general graph. To get larger expansion rates, the initial 
operator $D = \sum_i d_i + d_i^*$ has to be replaced by $D=\sum_i c_i (d_i + d_i^*)$. }
\end{figure}

The inflation rate is pretty much independent of the graph. To get bigger inflation rates
we have to scale the different exterior derivatives $d_k: \Omega_k \to \Omega_{k+1}$ 
differently. This corresponds to choosing units on each p-form sector. 
Lets look at the triangle graph $K_3$ for which the Dirac operator is a $7 \times 7$ matrix. 
Since we start with the neutral Dirac operator $D$ which has symmetry, we only have $6$ variables
$b_i$ to describe $b$ and $2$ variables $d_i$ to describe $C$.  Lets leave $d_0$ as it is and change 
$d_1$ by a factor $10$. This is natural when looking at the 0-form forms and 1-forms as
``branes" in the total space. The integrable differential equations $D'=[B,D]$ for the Dirac operator
$$ D= \left[
   \begin{array}{ccccccc}
    b_1 & b_2  & b_3 & d_1  & d_1 &   0  & 0 \\
    b_2 & b_1  & b_2 & -d_1 & 0   & d_1  & 0 \\
    b_2 & b_2  & b_1 & 0    & -d_1& -d_1 & 0 \\
    d_1 & -d_1 & 0   & b_4  &  b_5&  0   & -d_2\\
    d_1 & 0    & -d_1& b_5  &  b_4&  b_5 &  d_2 \\
    0   & d_1  & -d_1&  0   &  b_5&  b_4 & -d_2 \\
    0   & 0    & 0   & -d_2 &  d_2& -d_2 &  b_6
   \end{array}
   \right]  \; . $$
of the triangle $G$ with symmetry are the nonlinear system of equations
\begin{eqnarray*}
   b_1' &=& 2 2 b_2^2+2b_3^2+4 d_1^2 \\
   b_2' &=& 2b_2 b_3 -2d_1^2 \\
   b_3' &=&-2d_1^2 \\
   b_4' &=& 2b_5^2-4d_1^2+2d_2^2 \\
   b_5' &=& -2d_1^2-2d_2^2   \\
   d_1' &=& (2b_5-b_1+b_4)d_1  \\
   d_2' &=& (-b_4+2b_5+b_6)d_2  \\
\end{eqnarray*}
It is already too complicated for computer algebra systems to 
find explicit analytic solutions. We know however that it is integrable
and that $d_i \to 0$. 

\begin{figure}
\scalebox{0.12}{\includegraphics{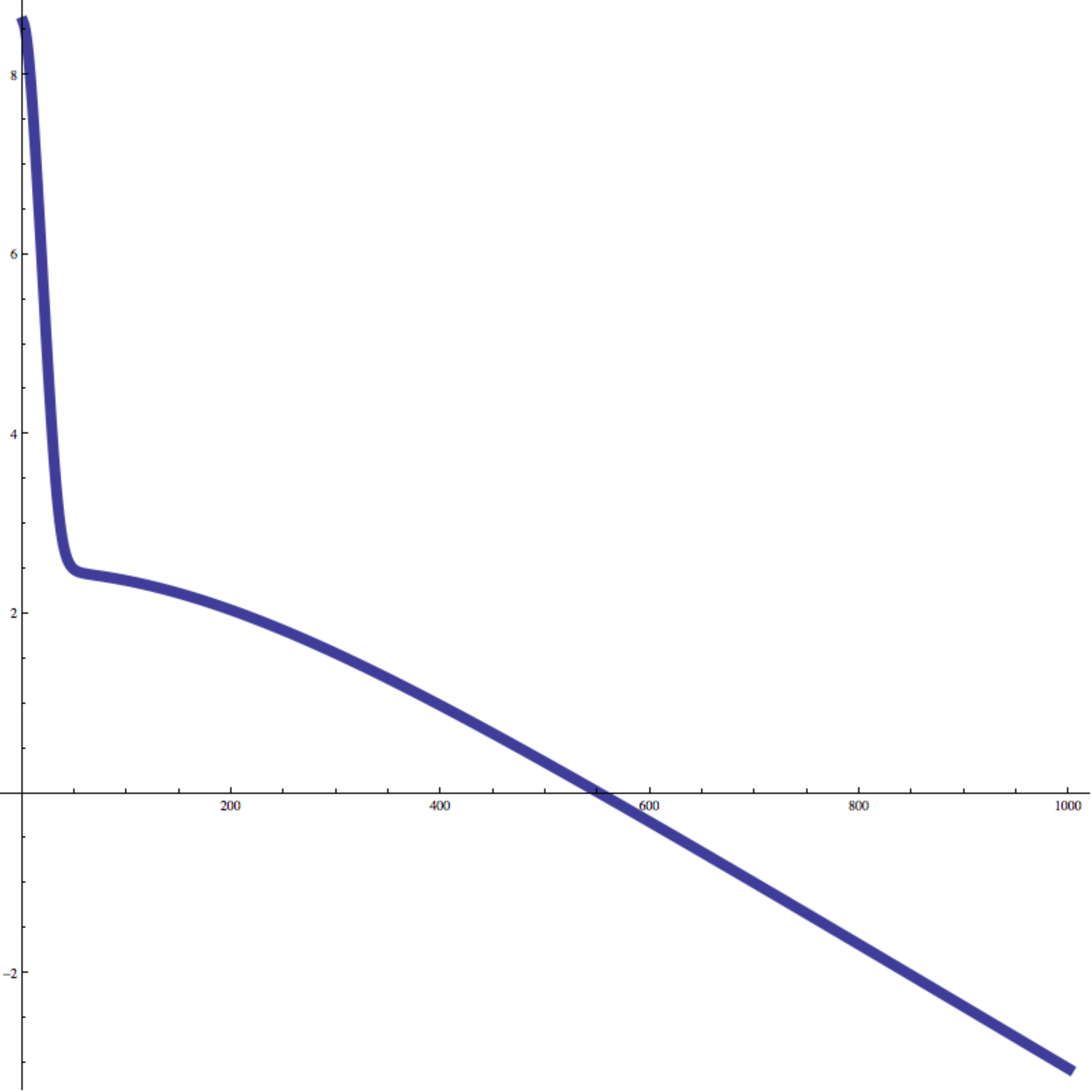}}
\scalebox{0.12}{\includegraphics{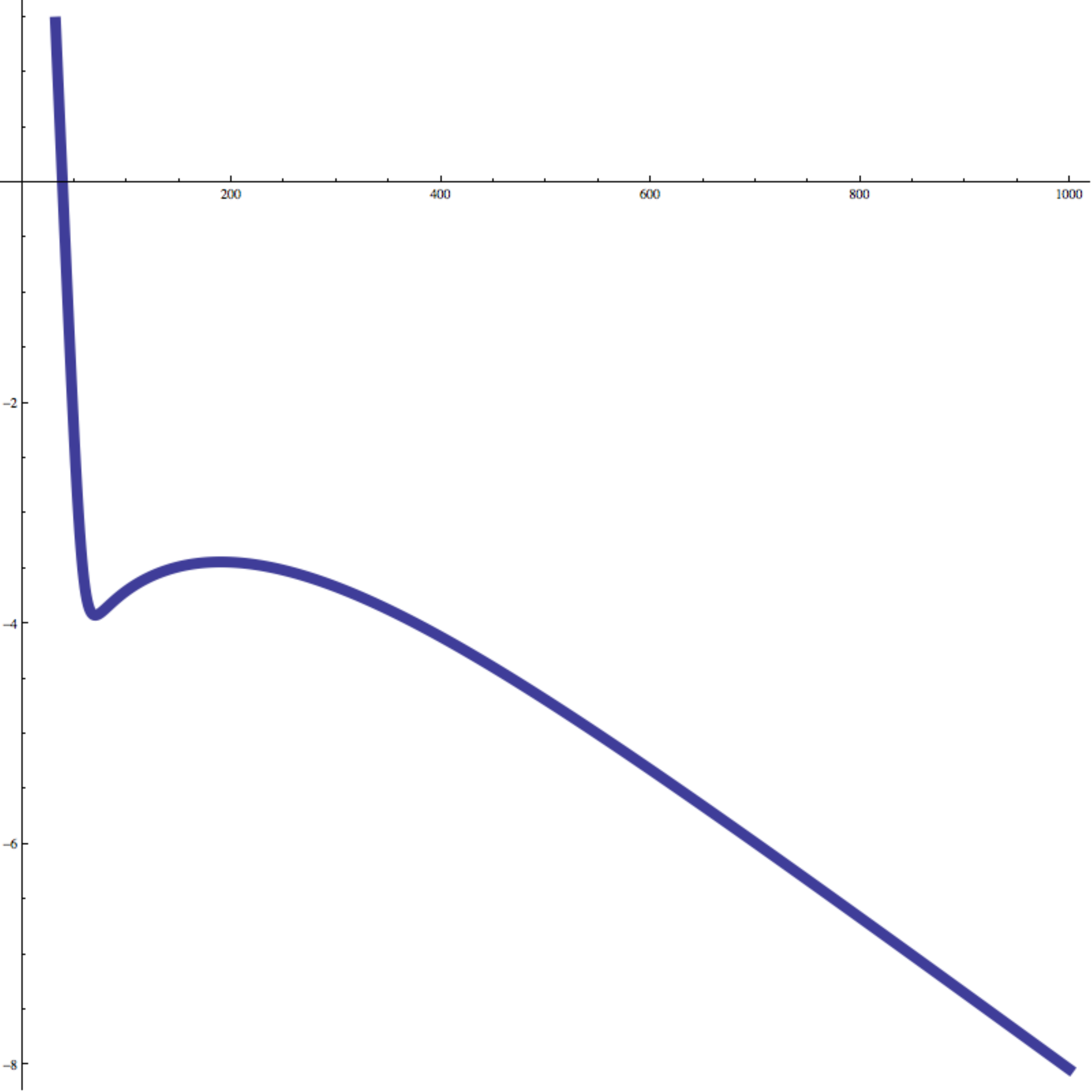}}
\caption{
The graph of the decaying function ${\rm tr}(M(t))$ as well as its derivative
$\frac{d}{dt} {\rm tr}(M(t))$ in logarithmic coordinates in the case of the triangle $G=K_3$. This is the
smallest case, where it is already possible to tune the initial condition for the Dirac operator
by adding coupling constants to $d_k: \Omega_k \to \Omega_{k+1}$. We have
scaled $d_1$ by a factor $k=10$. We see now extra ordinary large inflation.
}
\end{figure}

\section*{Appendix: The zeta function}

\begin{figure}
\scalebox{0.12}{\includegraphics{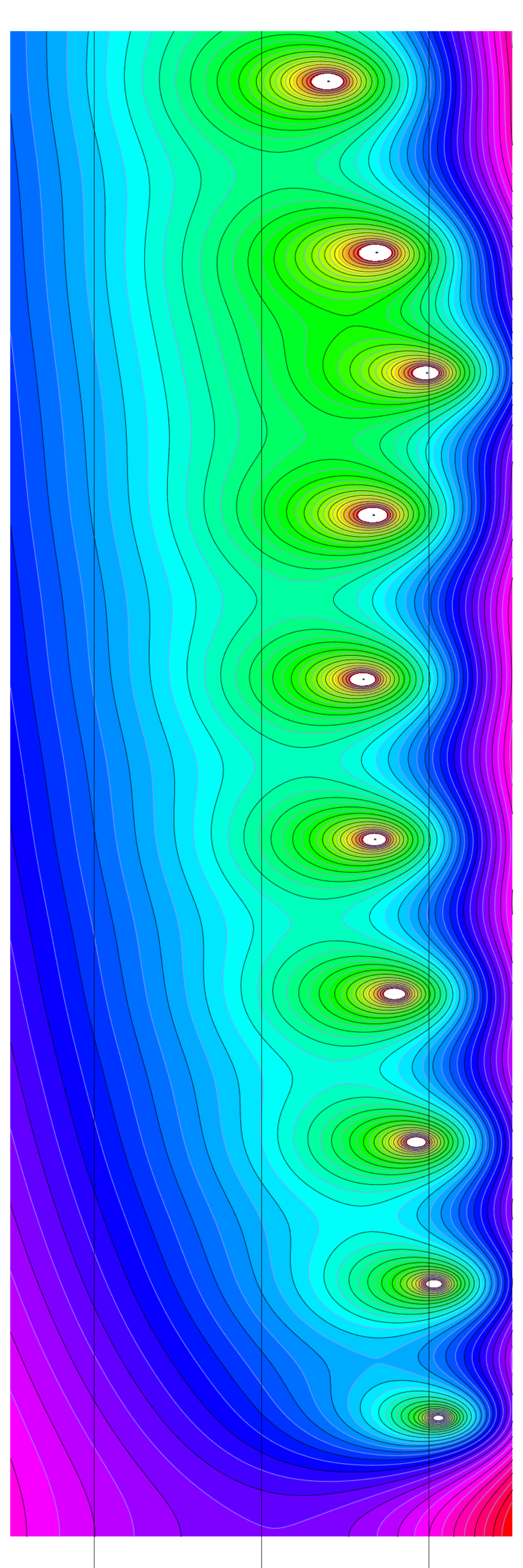}}
\scalebox{0.12}{\includegraphics{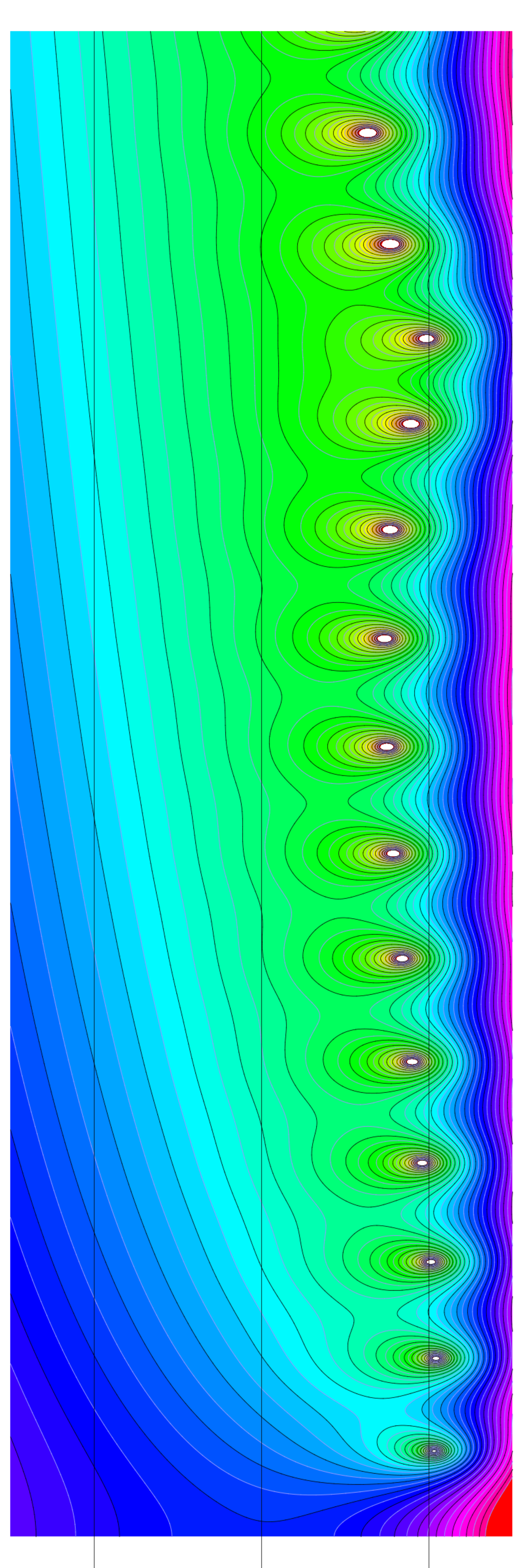}}
\scalebox{0.12}{\includegraphics{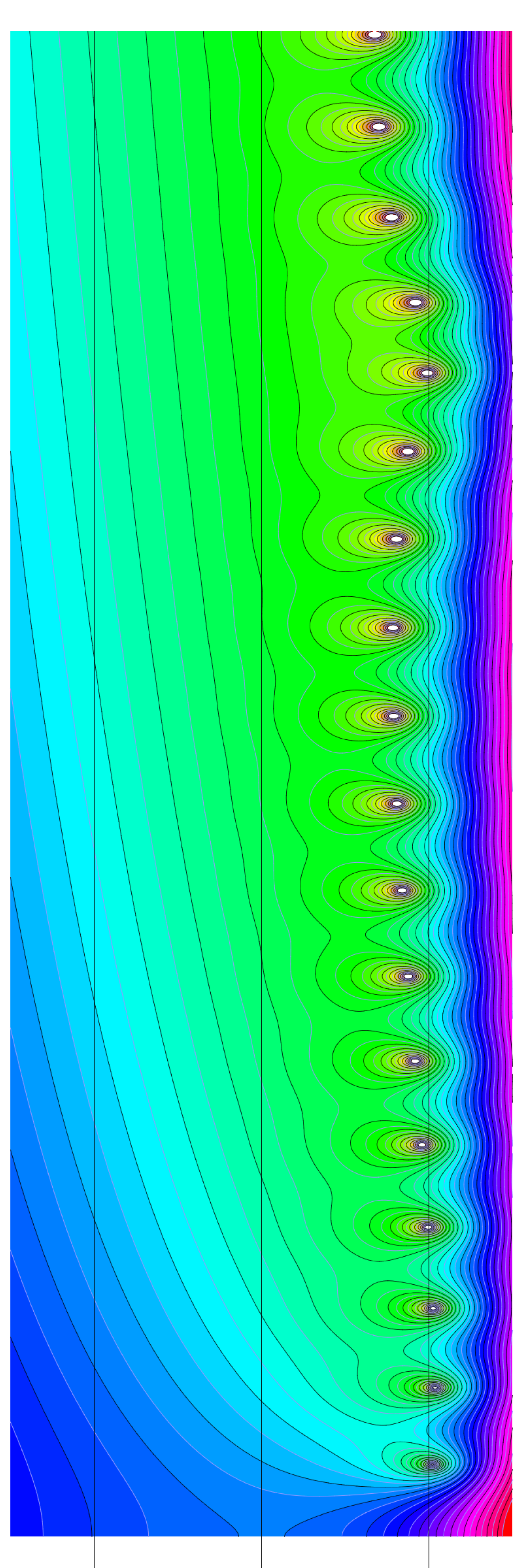}}
\caption{
The zeta function of the Dirac operator of the circular graphs $C_{10}$ (left), $C_{500}$ (middle)
and $C_{1500}$ (right) for $s = a + i b \in [-1.5,1.5] \times [0,18]$. The vertical division lines are
drawn at $b=-1,0,1$. 
We see the roots of the analytic function $\zeta(s) = (1+e^{-i \pi s}) \sum_{k=1}^{n-1} \sin^s(\pi k/n)$. Since
we are interested in the roots only and the factor $(1+e^{-i \pi s})$ grows exponentially on
the negative imaginary axes, we ignored anti-matter (negative eigenvalues) and 
just plotted the level curves of $\zeta_0(s) = \sum_{k=1}^{n-1} \sin^s(\pi k/n)$
instead which has the same roots. It would be interesting to know what the location of the roots are 
in the limit $n \to \infty$. This zeta function is the discrete analogue of the Dirac zeta function 
of the circle which is the analytic $\zeta(s) = (1+e^{-i \pi s}) \sum_{k=1}^{\infty} k^{-s}$ and which 
has the same roots than the Riemann zeta function $\zeta_0(s) =  \sum_{k=1}^{\infty} k^{-s}$. 
\label{inflation}
}
\end{figure}

The Dirac zeta function of a Dirac operator $D$ is defined as 
$$   \zeta(s) = \sum_{\lambda \neq 0} \lambda^{-s}  \; , $$ 
where the sum is over all nonzero eigenvalues of $D$. Since $\lambda^2$ are the eigenvalues of $L$.
But because taking roots $(-\lambda)^s$ can be done in different ways, and $\lambda$ can be negative, the function
$\zeta(s)$ needs to be specified more precisely. We will chose a branch by relating $\zeta(s)$
with $\zeta_{MS}(s)$, the Minakshisundaram-Pleijel zeta function $\zeta_{MS}$: define
$$ \zeta(s) = \zeta_{MP}(s/2) + (-1)^s \zeta_{MP}(s/2) = \zeta_{MP}(s/2) [ 1+e^{-i \pi s}]  \; . $$

\begin{lemma}
The Dirac zeta function $\zeta(s)$ has a meromorphic continuation to 
the entire complex plane. For odd dimensional manifolds, there is an analytic continuation
to the entire complex plane. 
\end{lemma}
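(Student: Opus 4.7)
The plan is to reduce everything to the classical properties of the Minakshisundaram--Pleijel zeta function $\zeta_{MP}$ of the Laplacian, using the definition
\[
\zeta(s) = \zeta_{MP}(s/2)\,[\,1+e^{-i\pi s}\,]
\]
given just before the lemma. Since $1+e^{-i\pi s}$ is entire, meromorphy of $\zeta(s)$ follows at once from meromorphy of $\zeta_{MP}$. The delicate point is the second sentence, where cancellation between poles of $\zeta_{MP}(s/2)$ and zeros of the trigonometric prefactor must be tracked. Concretely, I will do this in four short steps.

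First, I would recall the standard Mellin/heat-kernel setup: for a compact Riemannian manifold of dimension $d$, the small-time expansion $\mathrm{Tr}(e^{-tL})\sim t^{-d/2}\sum_{k\ge 0}a_k t^k$ combined with $\zeta_{MP}(w)\Gamma(w)=\int_0^\infty t^{w-1}(\mathrm{Tr}(e^{-tL})-\dim\ker L)\,dt$ gives a meromorphic continuation of $\zeta_{MP}$ to $\mathbb{C}$ with at most simple poles located at $w=d/2-k$, $k=0,1,2,\dots$. The poles of $\Gamma(w)$ at non-positive integers absorb those pole candidates at non-positive integer values of $w$. Consequently $\zeta_{MP}(s/2)$ is meromorphic with simple poles at $s=d-2k$, and $\zeta(s)=\zeta_{MP}(s/2)[1+e^{-i\pi s}]$ is meromorphic on all of $\mathbb{C}$.

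Second, I would analyze the zero set of the prefactor. The equation $1+e^{-i\pi s}=0$ is equivalent to $s\in 2\mathbb{Z}+1$, and each zero is simple since $\tfrac{d}{ds}(1+e^{-i\pi s})|_{s\in 2\mathbb{Z}+1}=-i\pi e^{-i\pi s}=i\pi\neq 0$. Third, I would separate cases on the parity of $d$. When $d$ is odd, the possible poles of $\zeta_{MP}(s/2)$ lie at $s=d-2k$, which are all odd integers, that is exactly where $1+e^{-i\pi s}$ has simple zeros; the simple zero cancels the simple pole, and no singularities remain, so $\zeta(s)$ is entire. When $d$ is even, the poles of $\zeta_{MP}(s/2)$ sit at even integers $d,d-2,\dots$, where $1+e^{-i\pi s}=2\neq 0$, so those poles survive, and one only gets a meromorphic (not analytic) continuation.

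The main obstacle I expect is a bookkeeping one rather than a conceptual one: making sure that the poles of $\zeta_{MP}$ that would a priori land on non-positive integers are in fact cancelled by $\Gamma$, so that after the substitution $w=s/2$ the residual poles of $\zeta_{MP}(s/2)$ really do fall on the zero set of $1+e^{-i\pi s}$ when $d$ is odd. This is clean for the Hodge Laplacian on a smooth compact manifold, but must be invoked carefully (the graph case is trivial since the spectrum is finite, so $\zeta(s)$ is a finite exponential sum and automatically entire). Once that accounting is in place, the two assertions of the lemma reduce to the parity argument above.
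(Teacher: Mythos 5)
Your proposal is correct and follows essentially the same route as the paper: both rest on the known pole structure of $\zeta_{MP}$ (simple poles at $s=d-2k$ after the substitution $w=s/2$) and the observation that the zeros of $1+e^{-i\pi s}$ at the odd integers cancel these poles exactly when $d$ is odd. The only difference is that you derive the pole locations from the heat-kernel/Mellin argument and verify the simplicity of the zeros explicitly, whereas the paper simply cites this pole structure; the parity argument at the heart of the proof is identical.
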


\begin{proof} 
In the graph case, the function $\zeta_{MP}$ is a finite sum and already analytic. 
In the manifold case, the function $\zeta_{MP}$ has a meromorphic extension to 
the entire complex plane \cite{LapidusFrankenhuijsen}. There a simple poles located
at $s=d$ and a subset of the points $d-2,d-4, \dots$, where $d$ is the dimension of $M$.  
In odd dimensions, there are simple poles at $d,d-2,d-4, \dots,... $ and in even 
dimensions simple poles at $d=,d-2,d-4,\dots, 4,2$. 
The factor $1+e^{-i \pi s}$ which has roots at $s=1,3,5,...$ 
regularizes the poles of the Laplace zeta function $\zeta_{MS}$ in the odd-dimensional 
case. 
\end{proof} 

\begin{coro}
For odd dimensional manifolds, $\tr(D^n)$ can be defined for all $n \in \Z$
For even dimensional manifold $\tr(D^n)$ is defined for all $n \geq 0$. 
\end{coro}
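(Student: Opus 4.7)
The plan is to realize $\tr(D^n)$ as the value $\zeta(-n)$ of the Dirac zeta function defined just above and then read off where this evaluation is legal from the previous lemma. Formally, if $\{\lambda_i\}$ are the nonzero eigenvalues of $D$, then for $\mathrm{Re}(s)$ sufficiently negative the Dirichlet-type series $\zeta(s)=\sum_{\lambda_i\neq 0}\lambda_i^{-s}$ converges absolutely to $\sum_i \lambda_i^{-s}$, and substituting $s=-n$ formally yields $\sum_i \lambda_i^n$, which is the kernel-free trace $\tr(D^n)$ (for $n\geq 1$ the kernel contributes nothing anyway, and for $n\leq 0$ we interpret $\tr(D^n)$ as the trace on the orthogonal complement of $\ker D$). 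So the regularized trace is simply defined by $\tr(D^n):=\zeta(-n)$ whenever the right-hand side makes sense.

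Next I would invoke the preceding lemma. It asserts that $\zeta(s)=\zeta_{MP}(s/2)[1+e^{-i\pi s}]$ has a meromorphic continuation to all of $\C$, with possible poles inherited from $\zeta_{MP}(s/2)$ at $s=d,d-2,d-4,\ldots$, and that the prefactor $1+e^{-i\pi s}$ vanishes at the odd integers $s=1,3,5,\ldots$. In the odd-dimensional case the candidate poles sit precisely at these odd integers, so every pole of $\zeta_{MP}(s/2)$ is cancelled by a zero of $1+e^{-i\pi s}$; hence $\zeta$ is entire and $\zeta(-n)$ is defined for every $n\in\Z$. In the even-dimensional case the candidate poles are $s=d,d-2,\ldots,4,2$, all strictly positive even integers, which the prefactor does not cancel.

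To finish I would just locate $s=-n$ relative to the pole set. For $n\geq 0$ we have $s=-n\leq 0$, and since all remaining poles lie in $\{2,4,\ldots,d\}\subset\Z_{>0}$, the point $s=-n$ is a point of holomorphy of $\zeta$, so $\tr(D^n):=\zeta(-n)$ is well defined. This gives the even-dimensional statement. Combined with the odd-dimensional conclusion above, the corollary follows.

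The only substantive point, and the place I would be most careful, is the justification that $\zeta(-n)$ really agrees with the naive $\tr(D^n)$ for those $n$ where the series diverges: one needs to check the branch convention $\zeta(s)=\zeta_{MP}(s/2)(1+e^{-i\pi s})$ reproduces the intended sum $\sum_i \lambda_i^{-s}$ over signed eigenvalues, so that the negative eigenvalues are treated consistently. This is really a bookkeeping step rather than a deep one, but it is the place where the definition could slip; once it is in place, the rest is a direct application of the previous lemma.
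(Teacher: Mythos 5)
Your argument is exactly the intended one: the paper offers no separate proof of this corollary, since it is read off directly from the preceding lemma by setting $\tr(D^n):=\zeta(-n)$ and observing that in the odd-dimensional case $\zeta$ is entire, while in the even-dimensional case the surviving poles sit at $s=d,d-2,\dots,4,2>0$ and so are avoided by $s=-n\leq 0$. One small slip worth fixing: the eigenvalue series $\sum_{\lambda\neq 0}\lambda^{-s}$ converges for $\mathrm{Re}(s)$ sufficiently \emph{large} (positive), not negative, because $|\lambda_i|\to\infty$; the values at $s=-n$ are then supplied only by the analytic continuation, which is precisely why the lemma is needed.
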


{\bf Remarks.} \\
{\bf 1)} For the circle, the Dirac zeta function has the same roots then the classical 
Riemann zeta function. \\
{\bf 2)} Fix a manifold $M$. For every Riemannian metric on $M$, we have a Dirac operator $D$.
By analytic continuation, the trace is defined for polynomials of these operators. If one could
extend it to squares $\tr((A+B)^2)$, one could define $\tr(AB+BA)$ and use it as 
an inner product on $\Sigma$.  \\
{\bf 3)} The Laplacian $L$ on a graph or manifold is related to various differential equations
the  heat equation $\dot{u} = L u$, the wave equation $\ddot{u}=-Lu$, 
the Maxwell equation $dF=0,d^*F=j$ relating electromagnetic field $F$ with matter $j$, 
which is in a Coulomb gauge $d^*A=0$ for the vector potential $A$ equivalent to the 
Poisson equation $LA=j$, where $dA=F$ which is in the vacuum case $j=0$ the Dirichlet problem 
$LA=0$ with harmonic solutions $A$, the Schr\"odinger equation 
$\dot{u} = i L u$, the Dirac wave equation $\dot{u} = i D u$. \\
{\bf 4)} The Dirac wave equation $\dot{u} = i D u$ looks like the Schr\"odinger equation but is also closely tied to the wave equation
as d'Alembert has shown: we can factor $(\partial_t^2+D^2)u=0$ and get $(\partial_t - i D) u=0$
or $(\partial_t + i D) u = 0$. This leads to the explicit solutions 
$u(t) = \cos(t D) u(0) + \sin(t D) D^{-1} u'(0)$. The assumption that the initial velocity $u'(0)$
is perpendicular to the kernel of $D$ is natural as it is in one dimensions, where solutions
to the wave equation $u_{tt} = u_{xx}$ assume that the center of mass of $u'(0)$ is zero. If
it is not zero, we could change the coordinate system in order to have the string motion $u(t,x)$
not translate freely in space. \\
{\bf 5)} We see that the Dirac operator allows to treat the wave equation on a general finite simple graph or
Riemannian manifold in the same way than the wave equation is dealt with in one dimensions, 
where $D=id/dt$ and $e^{i t D} u = e^{-d/dt t}  u (x) = u(x-t)$ by Taylor's theorem.  \\
{\bf 6)} The unitary Dirac evolution $e^{i t D}$ is sometimes
called the wave group (i.e. \cite{Zelditch}), but $D$ is usually define
as a pseudo differential operator defined by the spectral theorem. The
Dirac $D$ produces a natural square root, but it needs an extension of $L$ to all differential
forms. \\
{\bf 7)} Also nonlinear integrable systems like the sine-Gordon equation $u'' - L u = \sin(u)$ 
can be considered also at on differential forms or on graph, even so it is likely 
that $\sin$ has to be modified in the discrete to preserve integrability. 

\section*{Appendix: integrability}

Integrability has many meanings. Informally it implies that a system has enough conserved quantities so that
it becomes solvable. For ordinary or partial differential equations, it usually means to explicit
solution formulas can be written down; this happens often algebraically but the later is not a necessity:
the quadratic system $t x'=(x^2-y)/6,t y'=(xy-z)/3,t z'=(xz-y^2)/2$ of 
Ramanujan \cite{Ward2003} for example is explicitly solved by 
$x(t)=1-24\sum_{n=1}^{\infty} \sigma_1(n) t^n,y(t)=1+240 \sum_{n=1}^{\infty} \sigma_3(n) t^n,
z(t)=1-504 \sum_{n=1}^{\infty} \sigma_5(n) t^n$ using number theoretic $\sigma$ functions
$\sigma_k(n)$ which is not algebro-geometric. The system still must be considered integrable.  \\
The question "what is integrability" is often discussed informally \cite{Knill2002,Veselov2008,overflowintegrable}. 
In the overwhelming cases where the group $Z$ or $R$ acts on a topological space $X$,
integrable systems have the property that every
trajectory converges in a compactified phase space $\overline{X}$ both forward or backwards to a group 
translation on a compact topological group. This means that for every invariant measure in a compactification,
the induced Koopman operator has pure discrete spectrum. Informally, a system is not integrable if it shows
any kind of chaos, this can be weak chaos like invariant measures with weak mixing, stronger chaos like
mixing invariant measures or strong chaos like invariant measures on which the system is 
a Markov process or equivalently has a Bernoulli shift as a factor.
Positive topological entropy or equivalently positive Kolmogorov-Sinai entropy for some invariant measure also 
prevents a system from being integrable. \cite{Katok2007} gives a historical overview of entropy.
A system is not integrable if there is an invariant measure for which the induced system has anything 
else than pure point spectrum. An invariant horse-shoe for example induced to the existence of a 
transversal homoclinic point and this is enough to 
kill integrability. Integrability can justify the introduction of new functions.
The pendulum $x''=k \sin(x)$ for example can be solved using elliptic functions:
since energy $E=\dot{x}^2-k \cos(x)$ is conserved, we have $\dot{x} = \sqrt{E+k\cos(x)}$ so that $x(t)$
is implicitly given by $t=\int dx/\sqrt{E+k\cos(x)}$ which is an elliptic integral of the first kind.
Consequently, $x(t)$ is the inverse of an elliptic integral, an elliptic function. 
Any Hamiltonian system of one degree of freedom is integrable because the energy surfaces is one dimensional
and every trajectory either is located on a closed loop, is on a path to infinity or converges to a fixed point. 
More generally, any ordinary differential equation in two dimensions is integrable 
by the theorem of Poincar\'e-Bendixon \cite{Perko}: the uniqueness of solutions to differential equations prevents
paths to cross and both forward or backwards, orbits approach circular or point attractors or escape to 
infinity. Examples like the ABC flow $x'=A \in(z) + C \cos(y), y'=B \sin(x) + A \cos(z), z'=C \sin(y) + B \cos(x)$,
the Lorentz system $x'=10 (y - x), y'-x z + 28 x - y, z'x y -8/3 z$, the Roessler systems or periodically driven penduli
show that in three dimensions already, integrability fails in general. Simple maps in the plane like the 
H\'enon map $T(x,y)=(x^2-c-y,x)$ or the Chirikov Standard map $T(x,y)=(2x+c\sin(x),x)$ on the torus $R^2/(2\pi Z)^2$ 
show that for discrete time, area preserving maps can be non integrable already. Already one-dimensional interval maps like the 
Ulam map $T(x) = 4x(1-x)$ which is conjugated to the piecewise linear tent map $S(x) = 1-2 |x-1/2|$ with the conjugation 
$U(x)=\frac{1}{2}-\frac{1}{\pi} {\rm arcsin}(1-2x)$. The Feigenbaum family $T_c(x) = c x(1-x)$ shows that integrability
and chaos can be woven together in a complicated way, depending in an intriguing way on the parameter $c$, like that 
chaos is approached using period doubling bifurcations. Naive discretizations can destroy integrability. 
It was not obvious, how to find the right formula for an integrable discretisation of the pendulum from the continuous version $\ddot{x} = \sin(x)$.
It is given by the Suris-Bobenko-Kutz-Pinkall map $T(x,y) = (2x+4 \arg(1+k e^{-ix})-y,x)$ on the two torus $T^2$ 
which has the integral $F(x,y) = 2 (\cos(x)+\cos(y)) + k \cos(x+y) + k^{-1} \cos(x-y)$ \cite{Sur89,Bo+93}. 
It is not the Chirikov standard map $x_{n+1} - 2 x_n + x_{n-1} = k \sin(x_n)$ but this SBKP map 
$x_{n+1} - 2 x_n + x_{n-1} = 4 \arg(1+k e^{-ix})$ which discretizes the pendulum $x''=k \sin(x)$ so that
integrability is preserved. Similarly, polynomial ordinary differential equations $x''=p(x)$, which are integrable in the continuum,
become complicated in the discrete and some insight is needed to 
find maps like the MacMillan family $T(x,y) = (\frac{2 k x}{(1+x^2)} - y, x)$ \cite{McM71} 
which has the integral $F(x,y) = x^2+y^2 + x^2 y^2 - 2k x y$. 
Also in numerical analysis, it is well known that perfectly well behaved integrable and solvable systems can become 
unstable after a naive numerical discretizations. Ingenious schemes have to be devised so that important features
of the continuum survive the discretisation. In general, one wants symmetries from the continuum to be inherited, have
global existence of solutions, and Hamiltonian structures preserved if preent. For finite-dimensional Hamiltonian systems, 
there is Liouville integrability, the existence of enough independent conserved quantities in involution.
A theorem of Liouville \cite{Arnold1980,Babelon2003} assures that the system can be integrated in that case.
A more general inductive notion is Frobenius integrability which means that the system admits a foliation
by integral manifolds on each of which the system is Frobenius integrable.
In noncompact cases, especially in scattering situations, the asymptotic velocities can be
integrals. The point at infinity could be included into a compactified phase space so that asymptotic quantities
at infinity are integrals.  An example is part of the phase space of the St\"ormer problem, a 
single particle in a magnetic dipole field. It is non-integrable in trapped parts of the phase 
space because of horse shoes \cite{Brown81} but it is integrable in other parts, where particles escape 
to infinity \cite{Mo63}. Coexistence of integrability and chaos has been known for a long time in the case of complex 
dynamics of a polynomial, where the map is integrable on the Fatou set
and chaotic on the Julia set \cite{Beardon}. For conservative systems, coexistence is more subtle \cite{Str89} but
not impossible \cite{Woj81}. A variant of asymptotic free motion is the inverse scattering approach which works for 
many integrable systems with non-compact phase space. Birkhoff integrability \cite{Birkhoff} is the notion
that the system can be linearized around periodic orbits and that the union of these linearized regions are
dense. Unlike Frobenius integrability, it still would allow for a set of zero measure, on 
which the system is complicated. The complex map $z \to z^2$ on the Riemann sphere is Birkhoff integrable
but not integrable because the map induced on $|z|=1$ is a Bernoulli shift. 
Frobenius integrability can apply in non-smooth situations, where analytic expressions are
impossible. For Hamiltonian systems on the cotangent bundle of a compact manifold, the 
Liouville-Arnold theorem assures that the system is conjugated to a linear flow on a torus. 
Liouville integrability can extend to infinite-dimensional situations like the KdV system
$u_t + u u_x + u_{xxx}$, Boussinesq equation $u_{tt} -u_{xx} = (u^2)_{xx} - u_{xxxx}$, 
Sine-Gordon $u_{tx} = \sin(u)$ or nonlinear Schr\"odinger equation
$iu_t = u_{xx} + |u|^2 u$. For integrable isospectral Toda deformations of random Schr\"odinger operators,
where the spectrum of the operators can be pretty arbitrary, the integration can be done by 
approximation with finite dimensional integrable systems \cite{Kni93a,Kni93b}. 
Both classical and quantum mechanics can be studied with operator methods. Classical systems use the 
Koopman transfer operators $Uf=f(T)$ or Perron-Frobenius transfer operators $Tf(x) = \sum_{y \in T^{-1}(x)} f(y)$. 
Quantum systems are described with the unitary evolution $U = \exp(i Lt)$. For classical systems, integrability 
means the operator $U$ has discrete spectrum. For quantum dynamics
$\dot{x} = i \hbar H x$, integrability could force that $H$  have discrete spectrum
under natural natural boundary condition. The dynamics itself on the unit ball of the Hilbert space is 
always integrable \cite{Kni97}. 
For a particle in a periodic potential, 
where $H$ has absolutely continuous band structure, we can consider a boundary condition at a point to get 
discrete auxiliary spectrum. A particle in a periodic potential is integrable because 
solutions are nice Bloch waves and imposing a zero boundary condition produces discrete auxiliary spectrum in the gaps
which the Abel map conjugates to a linear flow on a torus. Also for quantum mechanics there is not much agreement,
what systems should be called integrable: while the quantum mechanical harmonic oscillator $L=-\partial_x^2 + x^2-2$ should
definitely be called integrable because the eigenfunctions can be constructed recursively using the decomposition $L=A^*A=A A^*-2$ with 
$A=x+\partial_x,A^*=x-\partial_x$, one can argue whether  $-x''+V'(x)=0$ with polynomial $V$ should be 
called integrable. Still, there is a countable set of eigenvalues and eigenfunctions which explicitly solve any system 
like the heat $u_t=Lu$, wave $u_{tt} = L u$ or Schr\"odinger equation $iu_t =Lu$. An other notion for integrability 
in the quantum setting is that the classical limit is integrable. This is analogue to the notion of ``quantum chaos" which
sometimes is defined as the property of corresponding classical system is  ergodic \cite{KnaufSinai} or
Anosov. Related is quantum unique ergodicity, which is the property that for any observable $A$
and eigenfunctions $\phi_j$ belonging to eigenvalues $\lambda_j^2 \to \infty$ the
property $(A \phi_j,\phi_j) \to \int_{S^*M} \sigma_A \; d\mu$ where $\sigma_A$ is the 
principal symbol of $A$ and $\mu$ is the Liouville measure on the unit cotangent tangent bundle $S^*M$ 
of the manifold $M$ \cite{Zelditch}. The Dirichlet problem in a compact convex planar region $G$ would be integrable with
this notion if the corresponding billiard system were integrable. The Schr\"odinger or wave equation on a Riemannian 
manifold $M$ would be called integrable, if the geodesic flow on $M$ is integrable.
It can be understood in the sense that eigenvalues and eigenfunctions can be constructed 
explicitly like for the quantum harmonic oscillator. 
There is a quantum analogue of Liouville integrability in which Poisson commuting observables
are replaced by commuting observables. The notion is not unproblematic \cite{Fadeev}. 
For a free quantum mechanical particle on a manifold, quantum Liouville integrability implies Liouville 
integrability for the geodesic flow. 
In statistical mechanics, integrability can mean that asymptotic quantities have explicit expressions. 
For higher dimensional systems like tiling systems on a Lie group $G$, on which $G$ acts by translation, 
integrability can be defined as the fact that the unitary evolution has discrete spectrum. One can then define
a configuration $x$ to be a ``crystal" if the orbit through $x$ is the entire space. 
Quasicrystals \cite{Senechal} are systems for which space translation has discrete spectrum. 
This should be considered the case of ``integrable crystals". Crystals with singular continuous spectrum 
are called turbulent. They are called chaotic if the spectrum is absolutely continuous. 
DS-integrability was defined in \cite{Knill2002}.
For many integrable systems in a dynamical context, ordinary or partial differential equations in particular,
integrability manifests itself in the existence of a  Lax pairs $\dot{L}=[B,L]$ for some Lie algebra-valued 
operators $B,L$. There is often a geometric representation of integrable systems
as zero curvature equations $A_t-B_t+[A,B]=0$ for a connection in the sense that 
the covariant derivatives $D_x = \partial_x-A,D_y=\partial_t-B$ commute. One often sees also a 
biharmonic structure, two different Poisson brackets. Many systems - but not all - feature 
solitons, special localized solutions which do not change shape and interact with other solitons. 
Nonlinearity manifests itself that the amplitude of a wave has an influence on the speed of the wave. 
The Korteweg de Vriews system $u_t + u u_x + u_{xxx}=0$ for example has the solution 
$a \; {\rm sech}^2(b(x-ct))$ if $a=12b^2,c=4b$, where $b$ is a free parameter. These solutions were found
independently by Boussinesq and Rayleigh and show that the amplitude $a$ can determine the speed $c$
of the soliton. 
Many integrable systems also feature more symmetries. Examples are  B\"acklund transformations
which allow to construct new solutions from given ones. In the case of the Toda lattice
$L'=[B,L]$ for example, we can write $L=D^2+E$ for some constant $E$, where $D$ is an 
operator on a doubled lattice. If $\sigma$ is the shift on the later, we can write 
$L=S S^*+E$ where $S=D \sigma$. The B\"acklund transformed operator is $BT_E(L)=S^* S+E$. 
For ordinary differential equations with solutions $r(x,t)$, there is the Painlev\'e property,
which tells that the critical points $z \to r(x,z)$ do not depend on time. A conjecture of Ward 
states that any ordinary or partial differential equation which is 
``integrable" is obtained from a self-dual Yang-Mills gauge field
by reduction \cite{Ward2003}. Many inverse scattering problems can be related to a 
Riemann-Hilbert problem in complex variables. There are also ergodic theoretic connections: 
integrable systems are required to have zero topological entropy $h(T)$ \cite{AKM} 
so that for any invariant measure $\mu$, also the measure theoretic system has zero metric entropy by the 
Goodman inequality $h_{\mu}(T) \leq h(T)$ \cite{Goodman}.
All systems defined by a group $G$ acting on a topological space $X$ we know to be integrable have the property 
that for every invariant measure $\mu$ the dynamical systems $(X,T,\mu)$ has discrete spectrum. 
For many integrable systems $T(t)$, the time average 
$\phi(x) = \lim_{T \to \infty} (1/T) \int_0^T f(T_t(x)) \; dt$ converges to a continuous function
whose level surfaces foliate the phase space. A simple system which illustrates this is the Knuth map
$T(x,y) = (|x|-y,x)$ which is integrable because $T^9(x,y)=(x,y)$ so that with $\phi(x,y)=y$ the function 
$F(x,y) = \sum_{k=1}^9 \phi(T^k(x,y))$ is an integral.  
Many known integrable systems have their origin in physics. It starts with the Newtonian two body 
problem or equivalently, the Kepler problem. In the 19th century, other systems were added, like 
the Euler problem with two fixed attracting centers and the Vinti problem concerning the motion of a satellite
around an ellipsoid, an other special case of a generalization of the Euler problem \cite{Mathuna}. For 
rigid body motion, there is the free evolution of a compact solid in $n$ dimensions, and the Euler and Kovelevskaya tops.
For geodesic flows, the ellipsoid was solved by Jacobi. For surfaces of revolution, the Clairot integral 
renders the flow integrable.  Among nonlinear integrable partial differential equations, 
the Korteweg-de Vries equation was both experimentally and theoretically studied earliest.
At the beginning of the 20th century, it became clear that nonintegrability is generic. 
The discovery of solitons by Kruskal and Zabusky, experiments of Fermi-Pasta-Ulam as well as 
KAM perturbation results revived the subject and led to the study of completely integrable partial 
differential equations via inverse scattering methods introduced by Gardner, Green, Kruskal and Miura.  
Many systems, among them finite particle systems like 
the Toda system with Hamiltonian $\sum_i p_i^2 + \exp(q_i-q_{i+1})$ \cite{Gutzwiller}
or Calogero-Moser particle systems have been extended  and generalized, both with continuous and discrete time. For discrete time, 
the $QR$ system $L=QR \to RQ$ has gained much attention (i.e. \cite{Symes}) 
because it leads to a numerical method to 
diagonalize a matrix $L$. In periodic situations, where the motion is recurrent, the integration 
is done by algebra-geometric methods. Also integrable geometric evolution equations have been known for a long time:
the vortex filament flow $\dot{x}=x' \times x''$ introduced by Da Rios
at the beginning of the 20'th century can be reduced to the nonlinear Schr\"odinger equation
$iu_t = u_{xx} \pm  2 |u|^2 u$.  \cite{langerperline}. 

\vspace{12pt}

\begin{thebibliography}{10}

\bibitem{Bo+93}
N.~Kutz A.~Bobenko and U.~Pinkall.
\newblock The discrete quantum pendulum.
\newblock {\em Phys. Lett. A}, 177:399--404, 1993.

\bibitem{Arnold1980}
V.I. Arnold.
\newblock {\em Mathematical Methods of classical mechanics}.
\newblock Springer Verlag, New York, second edition, 1980.

\bibitem{Banchoff67}
T.~Banchoff.
\newblock Critical points and curvature for embedded polyhedra.
\newblock {\em J. Differential Geometry}, 1:245--256, 1967.

\bibitem{Beardon}
A.~Beardon.
\newblock {\em Iteration of rational functions}.
\newblock Graduate Texts in Mathematics. Springer-Verlag, New York, 1990.

\bibitem{BergerPanorama}
M.~Berger.
\newblock {\em A Panoramic View of Riemannian Geometry}.
\newblock Springer Verlag, Berlin, 2003.

\bibitem{Birkhoff}
G.D. Birkhoff.
\newblock {\em Dynamical systems}.
\newblock Colloquium Publications, Vol. IX. American Mathematical Society,
  Providence, R.I., 1966.

\bibitem{Brown81}
M.~Braun.
\newblock Mathematical remarks on the {V}an {A}llen radiation belt: a survey of
  old and new results.
\newblock {\em SIAM Rev.}, 23:61--93, 1981.

\bibitem{Cycon}
H.L. Cycon, R.G.Froese, W.Kirsch, and B.Simon.
\newblock {\em {Schr\"odinger} Operators---with Application to Quantum
  Mechanics and Global Geometry}.
\newblock Springer-Verlag, 1987.

\bibitem{Fadeev}
L.D. Faddeev.
\newblock What is complete integrability in quantum mechanics.
\newblock {\em Proceedings of the Symposium Henri Poincare}, 2004.

\bibitem{Goodman}
T.N.T. Goodman.
\newblock Relating topological entropy to measure entropy.
\newblock {\em Bull. London. Math. Soc.}, 3:176--180, 1971.

\bibitem{Gutzwiller}
M.C. Gutzwiller.
\newblock {\em Chaos in Classical and Quantum Mechanics}.
\newblock Springer, 1990.

\bibitem{Katok2007}
A.~Katok.
\newblock Fifty years of entropy in dynamics: 1958-2007.
\newblock {\em Journal of Modern Dynamics}, 1:545--596, 2007.

\bibitem{KnaufSinai}
A.~Knauf and Ya.G. Sinai.
\newblock {\em Classical Nonintegrability, Quantum Chaos}.
\newblock Birkhauser Basel, 1997.

\bibitem{Knill2002}
O.~Knill.
\newblock On a spectral notion of integrability.
\newblock Univ. of Alabama in Birmingham talk, 3/28/2002,
  http://www.math.harvard.edu/{$\tilde{}$}knill/seminars/integrable/birmingham%
.pdf.

\bibitem{Kni93b}
O.~Knill.
\newblock Factorisation of random {Jacobi} operators and {B\"acklund}
  transformations.
\newblock {\em Communications in Mathematical Physics}, 151:589--605, 1993.

\bibitem{Kni93a}
O.~Knill.
\newblock Isospectral deformations of random {Jacobi} operators.
\newblock {\em Communications in Mathematical Physics}, 151:403--426, 1993.

\bibitem{Kni93diss}
O.~Knill.
\newblock Spectral, ergodic and cohomological problems in dynamical systems.
\newblock {PhD Theses, ETH Z\"urich}, 1993.

\bibitem{Kni94}
O.~Knill.
\newblock Isospectral deformation of discrete random {L}aplacians.
\newblock In M.Fannes et~al., editor, {\em On Three Levels}, pages 321--330.
  Plenum Press, New York, 1994.

\bibitem{Kni97}
O.~Knill.
\newblock On the dynamics of a general unitary operator.
\newblock {\em Reports on Mathematical Physics}, 40:91--95, 1997.

\bibitem{cherngaussbonnet}
O.~Knill.
\newblock A graph theoretical {Gauss-Bonnet-Chern} theorem.
\newblock {\\}http://arxiv.org/abs/1111.5395, 2011.

\bibitem{poincarehopf}
O.~Knill.
\newblock A graph theoretical {Poincar\'e-Hopf} theorem.
\newblock {\\} http://arxiv.org/abs/1201.1162, 2012.

\bibitem{indexformula}
O.~Knill.
\newblock An index formula for simple graphs \hfill.
\newblock {\\}http://arxiv.org/abs/1205.0306, 2012.

\bibitem{indexexpectation}
O.~Knill.
\newblock On index expectation and curvature for networks.
\newblock {\\}http://arxiv.org/abs/1202.4514, 2012.

\bibitem{knillmckeansinger}
O.~Knill.
\newblock {The McKean-Singer Formula in Graph Theory}.
\newblock {\\}http://arxiv.org/abs/1301.1408, 2012.

\bibitem{DiracKnill}
O.~Knill.
\newblock The {D}irac operator of a graph.
\newblock {\\} http://http://arxiv.org/abs/1306.2166, 2013.

\bibitem{diracannouncement}
O.~Knill.
\newblock An integrable evolution equation in geometry.
\newblock {\\} http://arxiv.org/abs/1306.0060, 2013.

\bibitem{langerperline}
J.~Langer and R.~Perline.
\newblock The {H}asimoto transformation and integrable flows on curves.
\newblock {\em Appl. Math. Lett.}, 3(2):61--64, 1990.

\bibitem{LapidusFrankenhuijsen}
M.L. Lapidus and M.van Frankenhuijsen.
\newblock {\em Fractal Geometry, Complex Dimensions and Zeta Functions}.
\newblock Springer, 2006.

\bibitem{Lax1968}
P.D. Lax.
\newblock Integrals of nonlinear equations of eveolution and solitary waves.
\newblock {\em Courant Institute of Mathematical Sciences AEC Report}, January
  1968.

\bibitem{Ward2003}
R.G.~Halburd M.~J.~Ablowitz, S.~Chakravartya.
\newblock Integrable systems and reductions of the self-dual {Y}ang-{M}ills
  equations.
\newblock {\em J. Mathematical Physics}, 44, 2003.

\bibitem{Mantuano}
T.~Mantuano.
\newblock Discretization of {R}iemannian manifolds applied to the {H}odge
  {L}aplacian.
\newblock {\em Amer. J. Math.}, 130(6):1477--1508, 2008.

\bibitem{Mathuna}
D.O Mathuna.
\newblock {\em Integrable Systems in Celestial Mechanics}.
\newblock Progress in Mathematical Physics. Birkhaeuser, 2008.

\bibitem{McM71}
E.M. McMillan.
\newblock A problem in the stability of periodic systems.
\newblock In {\em Topics in Modern Physics, A tribute to E.U. Condon}, pages
  219--244. Colorado Assoc. Univ. Press, Boulder, 1971.

\bibitem{Mo63}
J.~Moser.
\newblock Integrals of the {S}toermer problem.
\newblock {\\} http://www.math.harvard.edu/~knill/diplom/lit/Moser1963.pdf.

\bibitem{Mos75a}
J.~Moser.
\newblock Finitely many mass points on the line under the influence of an
  exponential potential. an integrable system.
\newblock {\em Lecture Notes in Physics}, 38, 1975.

\bibitem{Babelon2003}
D.~Bernard O.~Babelon and M.~Talon.
\newblock {\em Introduction to Classical Integrable Systems}.
\newblock Cambridge Monographs on Mathematical Physics, 2003.

\bibitem{overflowintegrable}
Math overflow.
\newblock What is an integrable system.
\newblock http://mathoverflow.net/questions/6379/what-is-an-integrable-system,
  2010.

\bibitem{Perko}
L.~Perko.
\newblock {\em Differential Equations and Dynamical Systems}.
\newblock Springer, 2 edition, 1991.

\bibitem{AKM}
A.G.~Konheim R.L.~Adler and M.H. McAndrew.
\newblock Topological entropy.
\newblock {\em Transactions of the AMS}, 114:309--319, 1965.

\bibitem{Rosenberg}
S.~Rosenberg.
\newblock {\em The Laplacian on a Riemannian Manifold}, volume~31 of {\em
  London Mathematical Society, Student Texts}.
\newblock Cambridge University Press, 1997.

\bibitem{Senechal}
M.~Senechal.
\newblock {\em Quasicrystals and geometry}.
\newblock Cambridge University Press, 1995.

\bibitem{Str89}
J.-M. Strelcyn.
\newblock The coexistence problem for conservative dynamical systems: A review.
\newblock {\em Celestial Mechanics}, 62:331--345, 1991.

\bibitem{Sur89}
Y.~Suris.
\newblock Integrable mappings of the standard type.
\newblock {\em Funct. Anal. Appl.}, 23:74--76, 1989.

\bibitem{Symes}
W.~Symes.
\newblock The qr algorithm and scattering for the finite nonperiodic toda
  lattice.
\newblock {\em Physica D}, 4:275--280, 1982.

\bibitem{Teschl}
G.~Teschl.
\newblock {\em Jacobi Operators and Completely Integrable Nonlinear Lattices},
  volume~72 of {\em Mathematical Surveys and Monographs}.
\newblock AMS, 1999.

\bibitem{Toda}
H.~Toda.
\newblock {\em Theory of nonlinear lattices}.
\newblock Springer-Verlag, Berlin, 1981.

\bibitem{Moerbeke79}
P.~van Moerbeke.
\newblock About isospectral deformations of discrete {L}aplacians.
\newblock In {\em Global analysis ({P}roc. {B}iennial {S}em. {C}anad. {M}ath.
  {C}ongr., {U}niv. {C}algary, {C}algary, {A}lta., 1978)}, volume 755 of {\em
  Lecture Notes in Math.}, pages 313--370. Springer, Berlin, 1979.

\bibitem{MoerbekeMumford}
P.~van Moerbeke and D.~Mumford.
\newblock The spectrum of difference operators and algebraic curves.
\newblock {\em Acta Math.}, 143(1-2):93--154, 1979.

\bibitem{Veselov2008}
A.P. Veselov.
\newblock A few things {I} learned from {J\"u}rgen {M}oser.
\newblock http://arxiv.org/abs/081005713, 2008.

\bibitem{Witten1982}
E.~Witten.
\newblock Supersymmetry and {M}orse theory.
\newblock {\em J. of Diff. Geometry}, 17:661--692, 1982.

\bibitem{Woj81}
M.~Wojtkowski.
\newblock A model problem with the coexistence of stochastic and integrable
  behaviour.
\newblock {\em Communications in Mathematical Physics}, 80:453--464, 1981.

\bibitem{Zelditch}
S.~Zelditch.
\newblock Recent developments in mathematical quantum chaos.
\newblock November, 2009.

\end{thebibliography}

\end{document}